\documentclass[a4paper]{article}

\usepackage{t1enc}
\usepackage{graphicx}
\usepackage{parskip}
\usepackage{amsmath}
\usepackage{mathtools}
\usepackage{amssymb}
\usepackage{color}
\usepackage{tcolorbox}
\usepackage{url}
\usepackage{amsthm}
\usepackage{newtxmath}
\usepackage{thmtools,thm-restate}
\usepackage{subcaption}
\usepackage{a4wide}

\usepackage{float}

\newtheorem{thm}{Theorem}[section]
\newtheorem{prp}[thm]{Proposition}
\newtheorem{lemma}[thm]{Lemma}

\theoremstyle{definition}
\newtheorem{remark}[thm]{Remark}

\numberwithin{equation}{section}

\usepackage[
    backend=biber,
    style=numeric,
    defernumbers=false,
  ]{biblatex}
\addbibresource{bib_tvar_optim.bib}

\usepackage{eso-pic}
\usepackage{enumerate}
\usepackage{authblk}

\date{\small \today}

\begin{document}

\title{Sample Average Approximation for Portfolio Optimization under CVaR constraint in an (re)insurance context}

\author{Jérôme Lelong \thanks{Univ. Grenoble Alpes, CNRS, Grenoble INP, LJK, 38000 Grenoble, France. \texttt{jerome.lelong@univ-grenoble-alpes.fr}} \ \ Véronique Maume-Deschamps \thanks{Universite Claude Bernard Lyon 1, CNRS, Ecole Centrale de Lyon, INSA Lyon, Université Jean Monnet, ICJ UMR5208,
69622 Villeurbanne, France. \texttt{veronique.maume-deschamps@univ-lyon1.fr}} \\  William Thevenot \thanks{Universite Claude Bernard Lyon 1, CNRS, Ecole Centrale de Lyon, INSA Lyon, Université Jean Monnet, ICJ UMR5208,
        69622 Villeurbanne, France. and Risk Knowledge team at SCOR SE, Paris, France \texttt{thevenot@math.univ-lyon1.fr}} }

\maketitle

\begin{abstract}
    We consider optimal allocation problems with Conditional Value-At-Risk (CVaR) constraint. We prove, under very mild assumptions, the convergence of the Sample Average Approximation method (SAA) applied to this problem, and we also exhibit a convergence rate and discuss the uniqueness of the solution. These results give (re)insurers a practical solution to portfolio optimization under market regulatory constraints, i.e. a certain level of risk.
\end{abstract}

\underline{Keywords:}
Value-At-Risk, Conditional Value-At-Risk, Expected shortfall, Sample average approximation, Portfolio optimization, Insurance, Reinsurance, Uniform strong large law of numbers, Central limit theorem.\\

\section{Introduction}

It is in the interest of (re)insurance companies to reduce risk through diversification in order to enhance their risk–return profile. The approach developed in this paper provides a quantitative framework to identify an ideal mix of risk exposures that balances expected profit with the Solvency Capital Requirement (SCR), which is represented in our model by the Conditional Value-at-Risk (CVaR). In the European market, insurers are subject to Solvency II regulations, which require that they hold an amount of eligible own funds at least equal to the SCR. This capital buffer is defined as the amount needed to ensure that the (re)insurance company can meet its obligations over the next 12 months with a probability of at least 99.5\%. While the SCR is formally based on the Value-at-Risk (VaR) at the confidence level $\alpha = 0.995$, the Conditional Value-at-Risk (CVaR) is commonly used as a more tractable proxy. Accordingly, we adopt CVaR as the risk measure in our analysis.

The classic approach to portfolio optimization was introduced by Markowitz in 1952 \cite{markowitz1952modern}. It consists in the maximization of the expectation under the constraint of maximum variance or, equivalently, minimizing the variance of the portfolio, for a fixed return, this problem is called the mean-variance optimization. Its counterpart formulation for the conditional value-at-risk (CVaR) is the mean-CVaR optimization.

We model the (re)insurance asset market with business lines represented by the random vector $\textbf{X}$ of asset returns, taking values in a subset $\mathcal{R}_\textbf{X}$ of $\mathbb{R}^d$, unlike in the classical financial case $\textbf{X}$ can take negative values, as they represent the results of the business lines. We assume that $\mathbb{E}(|\textbf{X}|) < +\infty$. A (re)insurance portfolio is defined by a vector $\boldsymbol{\gamma} \in \mathbb{R}^d$ representing the quantity held in each business line by the (re)insurer. In our business case, it cannot be interpreted in the same way as in the classical finance formulation, there is no initial wealth and it does not represent the repartition of it. $\gamma$ represents the (re)insurance company activities.

Let us fix some notations, with $\alpha \in ]0,1[$:
\begin{equation*}
    \begin{aligned}
         & V_\alpha(\boldsymbol{\gamma}) = VaR_\alpha(-\boldsymbol{\gamma}^T \textbf{X}) = \min \left\{ M \in \mathbb{R} : \mathbb{P}\left( -\boldsymbol{\gamma}^T \textbf{X} \leq M \right) \geq \alpha \right\},                 \\
         & C_\alpha(\boldsymbol{\gamma}) = CVaR_\alpha(-\boldsymbol{\gamma}^T \textbf{X}) = \mathbb{E}\left( -\boldsymbol{\gamma}^T \textbf{X} \big| -\boldsymbol{\gamma}^T \textbf{X} \geq V_\alpha(\boldsymbol{\gamma}) \right). \\
    \end{aligned}
\end{equation*}

Our original goal is to solve the following equation with a fixed $\alpha \in ]0,1[$ and constraints on the weights and a capital requirement limit $K>0$. We consider a loss function  $L_0$ , which commonly depends on $ C_\alpha(\boldsymbol{\gamma})$. Even though $L_0(\boldsymbol{\gamma}, \mathbf{X})$ may already account for this dependency implicitly, it is convenient to make the dependence on $C_\alpha(\boldsymbol{\gamma})$ explicit for the purposes of future analysis.

\begin{equation}\label{eq:P}
    \begin{aligned}
        v^* := & \operatorname*{\inf}_{\boldsymbol{\gamma} \in \mathbb{R}_+^d} &  & \mathbb{E}(L_0(\boldsymbol{\gamma}, C_\alpha(\boldsymbol{\gamma}), \textbf{X}))                            \\
               & \textrm{s.t.}                                                 &  & \gamma_{i}^{low} \leq \gamma_i \leq \gamma_{i}^{up}                             & \forall i \in \{1,..,d\} \\
               & \textrm{s.t.}                                                 &  & C_\alpha(\boldsymbol{\gamma}) \leq K.                                                                      \\
    \end{aligned}
\end{equation}

Through reinsurance and other risk transfer solutions, (re)insurers can pilot their exposure to individual risks and improve overall diversification, thereby increasing the return per unit of risk. Achieving this objective requires identifying an ideal risk composition, both at a given moment and over time through underwriting and portfolio management. Reinsurers, in particular, steer their portfolio by adjusting the reinsurance shares they offer during tenders, according to their risk appetite and renewal strategy. This strategic allocation process, which does not rely on the internal structure of each business line, is formalized in our model through the minimization of a loss function, or equivalently the maximization of a return function, under the CVaR constraints that reflect regulatory capital requirements. The allocation vector $\boldsymbol{\gamma}$ may also implicitly account for market size, legacy positions, and underwriting partnerships, while supporting the design of multi-period strategies aimed at gradually reaching a target risk exposure.

The standard formulation of the CVaR is not convenient, as it requires computing the Value at Risk followed by the estimation of a conditional expectation. An alternative approach was introduced by R.T. Rockafellar and S. Uryasev in 2000~\cite{Uryasev2000:7}, and later extended by P. Krokhmal, J. Palmquist, and S. Uryasev in 2002~\cite{Uryasev2002:9}. They proposed an embedding technique that reformulates the CVaR as a minimization problem involving an auxiliary parameter, which can be integrated into the global optimization framework. This reformulation avoids the inconvenient structure of the original formulation and significantly reduces the associated computational burden.

In \cite{Uryasev2002:9}, the problem is addressed using linear programming; however, this method can become computationally expensive when dealing with large data samples. In this work, we opt for the \textit{Sample Average Approximation} (SAA) method, as introduced by Rubinstein and Shapiro~\cite{rubinstein1993discrete}.

For this formulation with explicit constraints, no convergence or convergence speed results with the SAA method has been published as far as we know, the closest result to our work is \cite{wang2008sample}. In this last one, the function to be minimized does not depend on the data sample. We also have two results closed to our work which are \cite{sun2017optimal} and \cite{yao2013mean} where a parametric method is used to find the data structure and then minimize with this parameterization which is interesting with little sample size, but which is not our case because we assume that the distribution of $\textbf{X}$ is known and can be simulated. Moreover, it's not exactly the same formulation, which implies different difficulties in solving the problem.

Under convexity, continuity, integrability assumptions, we prove a.s. the convergence and find a rate of convergence for the SAA version in the case where the function to be minimized depends on the data sample as do the constraint. Moreover, if the CVaR appears in the function to be minimized, we show that for the optimization, under monotonic assumption, it can be replaced by the auxiliary function introduced in \cite{Uryasev2002:9} and \cite{Uryasev2000:7}. We also propose a sufficient condition to obtain the uniqueness of the solution.

The paper is organized as follows. Section \ref{section:context_and_pb_presentation} describes the context and states the problem. Section \ref{section:sample_average_approx} introduces the SAA method and presents our main results on convergence (Theorem \ref{th:thm_final_convergence}), uniqueness (Theorems \ref{th:theorem_final_unicity} and \ref{th:thm_unicity}) and convergence rate (Theorem \ref{th:global_rate_cv}). Section \ref{section:numerical_study} is dedicated to some numerical studies on two different cases, where we show how these results can be used concretely in the context of a (re)insurer seeking to balance its various exposures.

\section{Optimization with CVaR constraint}\label{section:context_and_pb_presentation}
This section presents the transformation of the original problem and the SAA method applied to this equivalent problem.

\subsection{An equivalent setting}\label{subse:eq_setting}
Note that the Conditional Value-At-Risk can be rewritten using the Value-At-Risk as follows,
\begin{equation*}
    C_\alpha(\boldsymbol{\gamma}) = V_\alpha(\boldsymbol{\gamma}) + (1-\alpha)^{-1} \mathbb{E}\left( (-\boldsymbol{\gamma}^T \textbf{X}- V_\alpha(\boldsymbol{\gamma}))^+ \right).
\end{equation*}
We introduce $g(\boldsymbol{\gamma}, \zeta) = \zeta + (1-\alpha)^{-1} \mathbb{E}\left( (-\boldsymbol{\gamma}^T \textbf{X}- \zeta)^+ \right)$ for any $\zeta \in \mathbb{R}$.

Rockafellar, R. T.,  Uryasev, S. (2000) \cite{Uryasev2000:7} showed that $g(\boldsymbol{\gamma}, . )$ is convex, continuously differentiable and that for any $\boldsymbol{\gamma}$, $C_\alpha(\boldsymbol{\gamma})$ can be determined by minimising $g(\boldsymbol{\gamma},.)$:
\begin{align*}
     & C_\alpha(\boldsymbol{\gamma}) = \operatorname*{\min}_{\boldsymbol{\zeta \in \mathbb{R}}} g(\boldsymbol{\gamma}, \zeta)\text{,} \ \ A_\alpha(\boldsymbol{\gamma}) = \operatorname*{argmin}_{\boldsymbol{\zeta \in \mathbb{R}}} g(\boldsymbol{\gamma}, \zeta)\text{,} \\
     & V_\alpha(\boldsymbol{\gamma}) = \min(\{x\in   A_\alpha(\boldsymbol{\gamma})\}) \text{,} \ \  C_\alpha(\boldsymbol{\gamma}) = g(\boldsymbol{\gamma}, V_\alpha(\boldsymbol{\gamma})).
\end{align*}

Krokhmal P., Jonas Palmquist J., Uryasev S. (2002) \cite{Uryasev2002:9} showed the equivalence of the following two optimization problems in the sense that their objectives functions achieve the same minimum values. (Proof in Annex 1)

\begin{restatable}{prp}{equivalenceprop}
    \label{th:equivalence}
    For any function $R : \mathbb{R}^d \times \mathbb{R} \rightarrow \mathbb{R}$ such that for any $\boldsymbol{\gamma} \in \mathbb{R}^d$, $R(\boldsymbol{\gamma}, . )$ is not increasing then the following two problems are equivalent
    \begin{equation}\label{eq:Reward_True}
        \operatorname*{\inf}_{\boldsymbol{\gamma} \in \Xi} -R(\boldsymbol{\gamma}, C_\alpha(\boldsymbol{\gamma})) \ \ \text{s.t.} \ \ C_\alpha(\boldsymbol{\gamma})\leq K
    \end{equation}
    and
    \begin{equation}\label{eq:Reward_True_F}
        \operatorname*{\inf}_{(\boldsymbol{\gamma},\zeta) \in \Xi \times \mathbb{R}} -R(\boldsymbol{\gamma}, g(\boldsymbol{\gamma}, \zeta))  \ \ \text{s.t.} \ \ g(\boldsymbol{\gamma}, \zeta)\leq K.
    \end{equation}
    Moreover, if the $C_\alpha$ constraint is active in $\eqref{eq:Reward_True}$, $(\boldsymbol{\gamma}^*,\zeta^*)$ achieves the minimum of \eqref{eq:Reward_True_F} solution if and only if $\boldsymbol{\gamma}^*$ achieves the minimum of $\eqref{eq:Reward_True}$ and $\zeta^* \in A_\alpha(\boldsymbol{\gamma}^*)$.
\end{restatable}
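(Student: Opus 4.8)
The plan is to exploit the representation $C_\alpha(\boldsymbol{\gamma}) = \min_{\zeta} g(\boldsymbol{\gamma}, \zeta)$ together with the monotonicity of $R$ in order to carry out the minimization over $\zeta$ in \eqref{eq:Reward_True_F} explicitly, thereby collapsing it onto \eqref{eq:Reward_True}. First I would fix $\boldsymbol{\gamma} \in \Xi$ with $C_\alpha(\boldsymbol{\gamma}) \leq K$ and study the inner problem $\inf_{\zeta : g(\boldsymbol{\gamma},\zeta) \le K} -R(\boldsymbol{\gamma}, g(\boldsymbol{\gamma}, \zeta))$. Since $R(\boldsymbol{\gamma}, \cdot)$ is non-increasing, $-R(\boldsymbol{\gamma}, \cdot)$ is non-decreasing, so the inner infimum is reached by making $g(\boldsymbol{\gamma}, \zeta)$ as small as possible; its minimal value is exactly $C_\alpha(\boldsymbol{\gamma})$, attained at any $\zeta \in A_\alpha(\boldsymbol{\gamma})$, and this choice is admissible precisely because $C_\alpha(\boldsymbol{\gamma}) \le K$. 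Hence the inner infimum equals $-R(\boldsymbol{\gamma}, C_\alpha(\boldsymbol{\gamma}))$.

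Next I would reconcile the two feasible regions. If $(\boldsymbol{\gamma}, \zeta)$ is feasible for \eqref{eq:Reward_True_F}, then $C_\alpha(\boldsymbol{\gamma}) = \min_\zeta g(\boldsymbol{\gamma}, \zeta) \le g(\boldsymbol{\gamma}, \zeta) \le K$, so $\boldsymbol{\gamma}$ is feasible for \eqref{eq:Reward_True}; conversely, if $C_\alpha(\boldsymbol{\gamma}) \le K$, then any $\zeta \in A_\alpha(\boldsymbol{\gamma})$ gives $g(\boldsymbol{\gamma}, \zeta) = C_\alpha(\boldsymbol{\gamma}) \le K$, so $(\boldsymbol{\gamma}, \zeta)$ is feasible for \eqref{eq:Reward_True_F}. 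The $\boldsymbol{\gamma}$-projections of the two feasible sets therefore coincide. Taking the infimum over $\boldsymbol{\gamma}$ of the inner-minimized objective then shows that the optimal value of \eqref{eq:Reward_True_F} equals $\inf_{\boldsymbol{\gamma} : C_\alpha(\boldsymbol{\gamma}) \le K} -R(\boldsymbol{\gamma}, C_\alpha(\boldsymbol{\gamma}))$, which is the optimal value of \eqref{eq:Reward_True}, giving the claimed equality of optima.

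For the characterization of minimizers, the backward implication is immediate: if $\boldsymbol{\gamma}^*$ solves \eqref{eq:Reward_True} and $\zeta^* \in A_\alpha(\boldsymbol{\gamma}^*)$, then $g(\boldsymbol{\gamma}^*, \zeta^*) = C_\alpha(\boldsymbol{\gamma}^*) \le K$ and the objective value equals $-R(\boldsymbol{\gamma}^*, C_\alpha(\boldsymbol{\gamma}^*))$, the common optimal value. For the forward implication, suppose $(\boldsymbol{\gamma}^*, \zeta^*)$ solves \eqref{eq:Reward_True_F}. Feasibility gives $C_\alpha(\boldsymbol{\gamma}^*) \le K$, and monotonicity together with $C_\alpha(\boldsymbol{\gamma}^*) \le g(\boldsymbol{\gamma}^*, \zeta^*)$ yields $-R(\boldsymbol{\gamma}^*, C_\alpha(\boldsymbol{\gamma}^*)) \le -R(\boldsymbol{\gamma}^*, g(\boldsymbol{\gamma}^*, \zeta^*))$; since the right-hand side is the common optimal value and the left-hand side is bounded below by it, $\boldsymbol{\gamma}^*$ must be optimal for \eqref{eq:Reward_True}. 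To obtain $\zeta^* \in A_\alpha(\boldsymbol{\gamma}^*)$ I would invoke the active-constraint hypothesis: it supplies $C_\alpha(\boldsymbol{\gamma}^*) = K$, while feasibility of \eqref{eq:Reward_True_F} supplies $g(\boldsymbol{\gamma}^*, \zeta^*) \le K$; combined with the always-valid bound $g(\boldsymbol{\gamma}^*, \zeta^*) \ge C_\alpha(\boldsymbol{\gamma}^*) = K$ this sandwiches $g(\boldsymbol{\gamma}^*, \zeta^*) = C_\alpha(\boldsymbol{\gamma}^*)$, i.e.\ $\zeta^* \in A_\alpha(\boldsymbol{\gamma}^*)$.

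The main obstacle is precisely this last step. Because $R(\boldsymbol{\gamma}, \cdot)$ is only assumed non-increasing rather than strictly decreasing, the equality of objective values alone does not pin down $g(\boldsymbol{\gamma}^*, \zeta^*) = C_\alpha(\boldsymbol{\gamma}^*)$: a non-minimizing $\zeta^*$ with $g(\boldsymbol{\gamma}^*, \zeta^*) > C_\alpha(\boldsymbol{\gamma}^*)$ could still yield the same value of $R$. The activeness of the constraint is what removes this degeneracy, since it lets the feasibility bound $g(\boldsymbol{\gamma}^*, \zeta^*) \le K$ act as a second, decisive inequality. I would also make sure that the inner infimum over $\zeta$ is genuinely attained, which is guaranteed by $A_\alpha(\boldsymbol{\gamma}) \neq \emptyset$, itself a consequence of the convexity and continuity of $g(\boldsymbol{\gamma}, \cdot)$ established in \cite{Uryasev2000:7}, so that no interchange-of-infima subtlety arises.
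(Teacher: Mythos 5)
Your proof is correct, but it takes a genuinely different route from the paper's. The paper proves Proposition~\ref{th:equivalence} in its appendix via Karush--Kuhn--Tucker saddle-point conditions: it attaches multipliers $\mu_1,\mu_2$ to the constraints $C_\alpha(\boldsymbol{\gamma})\leq K$ and $g(\boldsymbol{\gamma},\zeta)\leq K$, and transfers one set of optimality conditions into the other using $C_\alpha(\boldsymbol{\gamma})=\min_{\zeta}g(\boldsymbol{\gamma},\zeta)$ together with the monotonicity of $R(\boldsymbol{\gamma},\cdot)$; in the forward direction the activeness of the constraint enters there through the assumption $\mu_2>0$, which forces $\zeta^*$ to minimize $g(\boldsymbol{\gamma}^*,\cdot)$. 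You instead argue directly on feasible sets and objective values: you collapse the inner infimum over $\zeta$ using that $-R(\boldsymbol{\gamma},\cdot)$ is non-decreasing and that the minimum of $g(\boldsymbol{\gamma},\cdot)$ is attained on the non-empty set $A_\alpha(\boldsymbol{\gamma})$, you show the $\boldsymbol{\gamma}$-projections of the two feasible regions coincide, and in the forward direction you recover $\zeta^*\in A_\alpha(\boldsymbol{\gamma}^*)$ by the sandwich $K=C_\alpha(\boldsymbol{\gamma}^*)\leq g(\boldsymbol{\gamma}^*,\zeta^*)\leq K$. Your route is more elementary and arguably more robust: it uses no Lagrange multipliers, hence no implicit convexity or constraint-qualification hypotheses that the necessity half of a global KKT/saddle-point argument ordinarily requires (the proposition assumes nothing about $R$ beyond monotonicity in its second argument), and it isolates precisely where the active-constraint hypothesis is indispensable --- as you correctly flag, with $R(\boldsymbol{\gamma},\cdot)$ merely non-increasing rather than strictly decreasing, equality of objective values alone cannot rule out $g(\boldsymbol{\gamma}^*,\zeta^*)>C_\alpha(\boldsymbol{\gamma}^*)$, and it is the feasibility bound $g(\boldsymbol{\gamma}^*,\zeta^*)\leq K$ combined with $C_\alpha(\boldsymbol{\gamma}^*)=K$ that removes the degeneracy. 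What the paper's formulation buys in exchange is alignment with the Lagrangian machinery it reuses later in the convergence-rate analysis (Theorem~\ref{th:global_rate_cv}), where the multiplier attached to the capital constraint plays an explicit role.
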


It is convenient to write $L(\boldsymbol{\gamma},\zeta,\textbf{X}) = L_0(\boldsymbol{\gamma}, g(\boldsymbol{\gamma}, \zeta),\textbf{X})$ for future analysis.

Note that $g$ is also continuous and convex.

This result is fundamental in transforming our problem into a simpler one where it is no longer necessary to compute the value of the function $C_\alpha(\boldsymbol{\gamma})$, which is long and complex because we need to compute $V_\alpha(\boldsymbol{\gamma})$ first. Instead of that, we compute $g(\boldsymbol{\gamma}, \zeta)$ which is no longer the result of an optimisation problem. Using the equivalence between~\eqref{eq:Reward_True} and~\eqref{eq:Reward_True_F},~\eqref{eq:P} may be written as:

\begin{equation}\label{eq:P-eq}
    \begin{aligned}
        v^* := & \inf_{\boldsymbol{\gamma}, \zeta \in \mathbb{R}_+^d \times \mathbb{R}} &  & \mathbb{E}(L(\boldsymbol{\gamma},\zeta, \textbf{X}))             \\
               & \textrm{s.t.}                                                                         &  & \gamma_{i}^{low} \leq \gamma_i \leq \gamma_{i}^{up}  & \forall i \\
               & \textrm{s.t.}                                                                         &  & \zeta^{low} \leq \zeta \leq \zeta^{up}               & \forall i \\
               & \textrm{s.t.}                                                                         &  & g(\boldsymbol{\gamma}, \zeta) \leq K                             \\
    \end{aligned}
\end{equation}

where $\zeta^{low} \leq \min_{\gamma_{i}^{low} \leq \gamma_i \leq \gamma_{i}^{up} \ \forall i} VaR_\alpha (-\boldsymbol{\gamma}^T \textbf{X})$ and $ \zeta^{up} \geq \max_{\gamma_{i}^{low} \leq \gamma_i \leq \gamma_{i}^{up} \ \forall i} VaR_\alpha(-\boldsymbol{\gamma}^T \textbf{X})$.

\begin{remark}
    The most commonly used loss function is $L_0(\boldsymbol{\gamma},\textbf{X}) = -\boldsymbol{\gamma}^T \textbf{X}$. But in order to represent the cost of capital in insurance, it is usual to penalise the function with a term proportional to the capital requirement i.e. the CVaR. So the loss could become $L_0(\boldsymbol{\gamma},\textbf{X}) = -\boldsymbol{\gamma}^T X + c \cdot C_\alpha(\boldsymbol{\gamma}) = L_0(\boldsymbol{\gamma}, C_\alpha(\boldsymbol{\gamma}),\textbf{X})$ with $c \in ]0,1[$ usually $0.05$. In this last case, $L$ becomes $L(\boldsymbol{\gamma},\zeta,\textbf{X}) = -\boldsymbol{\gamma}^T \textbf{X} + c \cdot g(\boldsymbol{\gamma}, \zeta)$ in the equivalent Problem~\eqref{eq:P-eq}.
\end{remark}

\subsection{Sample Average Approximation}\label{section:sample_average_approx}
Usually we do not have access to $\mathbb{E}(L(\boldsymbol{\gamma},\zeta, \textbf{X}))$, so we use a real or simulated sample of $\textbf{X}$ (SAA: Sample Average Approximation \cite{shapiro2003monte}). Let $(\textbf{X}^{(j)})_{j \ge 1}$ be a sequence of iid random variables with the same distribution as $\textbf{X}$. We consider an approximation of Problem~\eqref{eq:P-eq}. For $N \ge 1$, consider
\begin{equation*}
    g_N(\boldsymbol{\gamma}, \zeta) :=  \zeta + \frac{(1-\alpha)^{-1}}{N} \sum_{j=1}^N (-\boldsymbol{\gamma}^T \textbf{X}^{(j)} - \zeta)^+ \ \
\end{equation*}
and
\begin{equation*}
    \ell_N(\boldsymbol{\gamma},\zeta) := \frac{1}{N} \sum_{j=1}^N L(\boldsymbol{\gamma},\zeta,\textbf{X}^{(j)}).
\end{equation*}
The sample average approximation of~\eqref{eq:P-eq} is given by
\begin{equation} \label{eq:P-eq-SAA}
    \begin{aligned}
        v_N := & \inf_{\boldsymbol{\gamma}, \zeta \in \mathbb{R}_+^d \times \mathbb{R}} &  & \ell_N(\boldsymbol{\gamma},\zeta)                                      \\
               & \textrm{s.t.}                                                                         &  & 0 \leq \gamma_{i}^{low} \leq \gamma_i \leq \gamma_{i}^{up} & \forall i \\
               &                                                                                       &  & \zeta^{low} \leq \zeta \leq \zeta^{up}
        \\                                                                       &  &  & g_N(\boldsymbol{\gamma}, \zeta) \leq K.                                \\
    \end{aligned}
\end{equation}

Let us introduce $\Gamma:= \left\{ \boldsymbol{\gamma} :  0 \leq \gamma_{i}^{low} \leq \gamma_i \leq \gamma_{i}^{up} \ \forall i \right\}$ and $\mathcal{U} := \Gamma \times \left\{ \zeta :\zeta^{low} \leq \zeta \leq \zeta^{up} \right\}$.

Define $\ell: (\boldsymbol{\gamma},\zeta) \mapsto \mathbb{E}(L(\boldsymbol{\gamma}, \zeta,\textbf{X}))$, we assume the following properties
\begin{itemize}
    \item[(P1)]$\mathbb{P}((\boldsymbol{\gamma},\zeta) \rightarrow L(\boldsymbol{\gamma}, \zeta,\textbf{X}) \ \text{is continuous and convex} ) = 1$.
    \item[(P2)] $\mathbb{E}\left(\underset{(\boldsymbol{\gamma},\zeta) \in \mathcal{U}}{\sup}|L(\boldsymbol{\gamma},\zeta,\textbf{X})|\right)<+\infty$.
\end{itemize}

The functions $\ell_N$ and $\ell$ are continuous and convex by (P1) and (P2). Note that $g_N$ and $g$ are convex and continuous.

Now we define the feasibility sets of the original problem and of its SAA counterpart; $\mathcal{U}^K := \left\{ (\boldsymbol{\gamma},\zeta) : g(\boldsymbol{\gamma},\zeta) \leq K \right\} \cap \mathcal{U}$ and $\mathcal{U}_N^K := \left\{ (\boldsymbol{\gamma},\zeta) : g_N(\boldsymbol{\gamma},\zeta) \leq K \right\} \cap \mathcal{U}$. We assume that $\mathcal{U}, \mathcal{U}^K$ have non-empty interiors.

We also define the solution sets of the original problem and of its SAA counterpart.
\begin{equation*}
    S :=  \operatorname*{argmin}\limits_{(\boldsymbol{\gamma}, \zeta) \in \mathcal{U}^K} \ell(\boldsymbol{\gamma},\zeta) \ \ \text{;} \ \ S_N := \operatorname*{argmin}\limits_{(\boldsymbol{\gamma}, \zeta) \in \mathcal{U}_N^K} \ell_N(\boldsymbol{\gamma},\zeta),
\end{equation*}

Our goal is to prove that $v_N$ converges a.s. to $v^*$ and $\mathbb{D}(S_N, S) \underset{N \rightarrow \infty}{\longrightarrow} 0 $ a.s. with $\mathbb{D}(A,B) := \adjustlimits\sup_{x\in A} \inf_{x'\in B} \lVert x-x' \rVert$.

\section{Main results}\label{section:main_results}
\subsection{Convergence of the SAA method}

We shall use two results from Shapiro, A., Dentcheva, D., Ruszczynski, A. (2009) \cite{shapiro2009:8}, in order to obtain a.s. convergence of our SAA Method (Theorem \ref{th:shapiro-5.5}/ Theorem \ref{th:thm_final_convergence}).

\begin{thm}[Theorem 7.48 p375 in \cite{shapiro2009:8}]~\label{th:shapiro-7.48}\par
    Consider a random function $H: \Xi \times \mathcal{R}_X \rightarrow \mathbb{R}$, where $\Xi$ is a non-empty compact subset of $\mathbb{R}^d$. Define $h(\boldsymbol{\xi}) := \mathbb{E}(H(\boldsymbol{\xi}, \textbf{X}))$ and $h_N(\boldsymbol{\xi}) := N^-1 \sum_{j=1}^N H(\boldsymbol{\xi}, \textbf{X}^{(j)})$.
    Suppose that:
    \begin{itemize}
        \item[(i)] $\mathbb{P}( \xi \rightarrow H(\xi,\textbf{X}) \ \text{is continuous}) = 1 $.
        \item[(ii)] $\mathbb{E}\left(\underset{\xi \in \Xi}{\sup} |H(\xi,\textbf{X})|\right)< +\infty $.
    \end{itemize}
    Then, the function $h$ is continuous on $\Xi$, and $(h_N)_{N \in \mathbb{N}^*}$ converges to $h$ a.s. uniformly on $\Xi$ i.e.
    $$
        \sup _{\boldsymbol{\xi} \in \Xi}\left|h_N(\boldsymbol{\xi})-h(\boldsymbol{\xi})\right| \rightarrow 0  \ \text{a.s.} \text{ as } N \rightarrow \infty.
    $$
\end{thm}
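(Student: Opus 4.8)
The plan is to upgrade the ordinary (pointwise) strong law of large numbers to a uniform one, using continuity assumption (i), the integrable envelope (ii), and the compactness of $\Xi$. The first step is to establish that $h$ is finite and continuous. Fix $\xi \in \Xi$ and a sequence $\xi_n \to \xi$. By (i), $H(\xi_n, \textbf{X}) \to H(\xi, \textbf{X})$ a.s., and each term is dominated by the integrable random variable $\sup_{\xi' \in \Xi}|H(\xi', \textbf{X})|$ supplied by (ii); the dominated convergence theorem then yields $h(\xi_n) \to h(\xi)$, so $h$ is continuous, and finite for the same reason.

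The key device is a local oscillation estimate. For $\xi_0 \in \Xi$ and $\rho > 0$ I would set
\[
    w(\textbf{X}, \xi_0, \rho) := \sup_{\xi \in \Xi,\ \lVert \xi - \xi_0 \rVert \le \rho} \left| H(\xi, \textbf{X}) - H(\xi_0, \textbf{X}) \right|.
\]
By continuity (i) one has $w(\textbf{X}, \xi_0, \rho) \downarrow 0$ a.s. as $\rho \downarrow 0$, and $w$ is dominated by $2\sup_{\xi'}|H(\xi', \textbf{X})|$, so dominated convergence gives $\mathbb{E}[w(\textbf{X}, \xi_0, \rho)] \to 0$. Hence, given $\varepsilon > 0$, for each $\xi_0$ I can choose a radius $\rho(\xi_0)$ with $\mathbb{E}[w(\textbf{X}, \xi_0, \rho(\xi_0))] < \varepsilon$.

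Next I would invoke compactness: the open balls $\{B(\xi_0, \rho(\xi_0))\}_{\xi_0 \in \Xi}$ cover $\Xi$, so finitely many centres $\xi_1,\dots,\xi_m$ suffice. For any $\xi$ lying in the ball around $\xi_k$ I split
\[
    \left| h_N(\xi) - h(\xi) \right| \le \left| h_N(\xi) - h_N(\xi_k) \right| + \left| h_N(\xi_k) - h(\xi_k) \right| + \left| h(\xi_k) - h(\xi) \right|.
\]
The first term is at most $\frac1N \sum_{j=1}^N w(\textbf{X}^{(j)}, \xi_k, \rho(\xi_k))$, which converges a.s. to $\mathbb{E}[w(\textbf{X}, \xi_k, \rho(\xi_k))] < \varepsilon$ by the ordinary strong law applied to the integrable variable $w(\textbf{X}, \xi_k, \rho(\xi_k))$; the second term tends to $0$ a.s. by the pointwise strong law (valid since $H(\xi_k, \textbf{X})$ is integrable); and the third is bounded by $\mathbb{E}[w(\textbf{X}, \xi_k, \rho(\xi_k))] < \varepsilon$ by continuity of $h$ and Jensen's inequality.

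Finally, since there are only finitely many indices $k$, all of the above limits hold simultaneously on a single almost-sure event; taking $\limsup_N$ of $\sup_{\xi \in \Xi}|h_N(\xi) - h(\xi)|$ gives $\limsup_N \sup_{\xi}|h_N(\xi) - h(\xi)| \le 2\varepsilon$ a.s., and letting $\varepsilon \downarrow 0$ along a countable sequence closes the argument. I expect the main obstacle to be exactly this passage from pointwise to uniform convergence: no equicontinuity is assumed, so the uniform control must come from the expected-oscillation bound $\mathbb{E}[w] \to 0$, and the finiteness of the cover is essential to avoid taking an uncountable union of null sets.
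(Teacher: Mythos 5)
Your proof is correct and is essentially the canonical argument for this result: the paper itself does not reprove the theorem (it is quoted directly from Shapiro--Dentcheva--Ruszczy\'nski), and the proof given there proceeds exactly as yours does, via the expected-oscillation bound $\mathbb{E}\bigl[w(\textbf{X},\xi_0,\rho)\bigr]\to 0$ obtained from dominated convergence, a finite subcover of the compact set $\Xi$, the pointwise strong law at the finitely many centres, and the three-term splitting of $\lvert h_N(\xi)-h(\xi)\rvert$. The only detail worth adding is the measurability of the supremum defining $w$, which follows because, on the almost-sure event of (i), the supremum over the ball coincides with the supremum over a countable dense subset.
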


We use Theorem~\ref{th:shapiro-7.48} with the functions $L$ and $\ell_N$ on the compact set $\Gamma$ using (P1) and (P2) to obtain the following proposition.
\begin{prp}\label{th:prop_cvunif_l}
    If (P1) and (P2) hold, then, the function $\ell$ is continuous and the sequence $(\ell_N)_{N \in \mathbb{N}^*}$ converges to $\ell$ a.s. uniformly on $\Gamma$.
\end{prp}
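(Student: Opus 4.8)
The plan is to obtain Proposition~\ref{th:prop_cvunif_l} as a direct instance of the uniform strong law of large numbers recalled in Theorem~\ref{th:shapiro-7.48}. Concretely, I would apply that theorem with the identifications $H \leftarrow L$, the generic parameter $\boldsymbol{\xi} \leftarrow (\boldsymbol{\gamma},\zeta)$, the parameter set $\Xi \leftarrow \mathcal{U}$, and the resulting averages $h \leftarrow \ell$ and $h_N \leftarrow \ell_N$. Although the statement mentions $\Gamma$, the function $\ell$ depends on both $\boldsymbol{\gamma}$ and $\zeta$, so the relevant compact parameter set is $\mathcal{U} = \Gamma \times [\zeta^{low},\zeta^{up}]$, and the convergence is uniform over this set. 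The only mild adjustment is that Theorem~\ref{th:shapiro-7.48} is phrased for $\Xi \subset \mathbb{R}^d$ whereas here the parameter $(\boldsymbol{\gamma},\zeta)$ lives in $\mathbb{R}^{d+1}$; this is purely notational, since the underlying result holds verbatim on any compact subset of a finite-dimensional Euclidean space.

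The first step is to check that $\mathcal{U}$ is a non-empty compact subset of $\mathbb{R}^{d+1}$. This is immediate: $\mathcal{U}$ is the Cartesian product of the boxes $[\gamma_i^{low},\gamma_i^{up}]$ for $i \in \{1,\dots,d\}$ with the interval $[\zeta^{low},\zeta^{up}]$, hence it is closed and bounded, and therefore compact by the Heine--Borel theorem; non-emptiness follows from the standing assumption that $\mathcal{U}$ has non-empty interior.

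The second step is to verify the two hypotheses of Theorem~\ref{th:shapiro-7.48}. Hypothesis~(i), the a.s.\ continuity of $(\boldsymbol{\gamma},\zeta) \mapsto L(\boldsymbol{\gamma},\zeta,\textbf{X})$, is contained in~(P1), which asserts a.s.\ continuity (and convexity). Hypothesis~(ii), the existence of an integrable envelope $\mathbb{E}\bigl(\sup_{(\boldsymbol{\gamma},\zeta) \in \mathcal{U}} |L(\boldsymbol{\gamma},\zeta,\textbf{X})|\bigr) < +\infty$, is exactly~(P2). With both hypotheses in force, Theorem~\ref{th:shapiro-7.48} yields that $\ell$ is continuous on $\mathcal{U}$ and that $\sup_{(\boldsymbol{\gamma},\zeta) \in \mathcal{U}} |\ell_N(\boldsymbol{\gamma},\zeta) - \ell(\boldsymbol{\gamma},\zeta)| \to 0$ a.s., which is the claim.

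There is essentially no hard step here: the substantive work has been front-loaded into the assumptions~(P1)--(P2), which were deliberately tailored to match hypotheses~(i)--(ii). The one point worth a word of care is simply confirming that (P2) supplies, beyond the envelope needed for uniformity, the pointwise integrability of each $L(\boldsymbol{\gamma},\zeta,\textbf{X})$ so that the underlying strong law of large numbers applies at every fixed $(\boldsymbol{\gamma},\zeta)$; compactness of $\mathcal{U}$ together with the a.s.\ continuity from~(P1) then upgrades this pointwise convergence to the uniform convergence asserted in Theorem~\ref{th:shapiro-7.48}.
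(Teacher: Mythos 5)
Your proof is correct and matches the paper's own argument, which likewise obtains the result as a direct application of Theorem~\ref{th:shapiro-7.48} with $H = L$, using (P1) for hypothesis (i) and (P2) for hypothesis (ii) on a compact parameter set. Your observation that the natural domain is $\mathcal{U}$ rather than $\Gamma$ (since $L$ depends on $\zeta$ as well) is apt; the paper itself later invokes the proposition as giving uniform convergence on $\mathcal{U}$, so your reading resolves a slight imprecision in the statement.
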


And we use Theorem~\ref{th:shapiro-7.48} again with $G((\boldsymbol{\gamma}, \zeta),\textbf{X}) = \zeta +(1-\alpha)^{-1} (-\boldsymbol{\gamma}^T \textbf{X} - \zeta)^+$ and $g_N$ on the compact set $\mathcal{U}$ to obtain the following proposition.

\begin{prp}\label{th:prop_cvunif_g}
    The sequence $(g_N)_{N \in \mathbb{N}^*}$ converges to $g$ a.s. uniformly on $\mathcal{U}$.
\end{prp}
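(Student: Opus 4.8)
The plan is to apply Theorem~\ref{th:shapiro-7.48} directly, with the compact set $\Xi = \mathcal{U} \subset \mathbb{R}^{d+1}$, the random function $G((\boldsymbol{\gamma},\zeta),\textbf{X}) = \zeta + (1-\alpha)^{-1}(-\boldsymbol{\gamma}^T\textbf{X}-\zeta)^+$, and its empirical average as $h_N$. First I would note that $\mathcal{U}$ is a product of closed bounded boxes, hence a non-empty compact subset of $\mathbb{R}^{d+1}$, so the compactness requirement of the theorem is met (the ambient dimension in the statement being generic). With this choice one has exactly $\mathbb{E}(G((\boldsymbol{\gamma},\zeta),\textbf{X})) = g(\boldsymbol{\gamma},\zeta)$ and $N^{-1}\sum_{j=1}^N G((\boldsymbol{\gamma},\zeta),\textbf{X}^{(j)}) = g_N(\boldsymbol{\gamma},\zeta)$, so that the conclusion of the theorem is precisely the claimed uniform almost sure convergence. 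It therefore remains only to verify hypotheses (i) and (ii).

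For (i), I would observe that for every fixed realization of $\textbf{X}$ the map $(\boldsymbol{\gamma},\zeta)\mapsto G((\boldsymbol{\gamma},\zeta),\textbf{X})$ is continuous, being the sum of the continuous map $(\boldsymbol{\gamma},\zeta)\mapsto\zeta$ and the composition of the affine map $(\boldsymbol{\gamma},\zeta)\mapsto -\boldsymbol{\gamma}^T\textbf{X}-\zeta$ with the continuous positive part. Continuity thus holds surely, a fortiori with probability one.

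For (ii), I would exploit the compactness of $\mathcal{U}$ to build an integrable envelope. Setting $C_\zeta := \max(|\zeta^{low}|,|\zeta^{up}|)$ and $C_i := \gamma_i^{up}$, one has $|\zeta| \le C_\zeta$ and $0 \le \gamma_i \le C_i$ for all $(\boldsymbol{\gamma},\zeta)\in\mathcal{U}$, so using $0 \le x^+ \le |x|$,
$$
\sup_{(\boldsymbol{\gamma},\zeta)\in\mathcal{U}} |G((\boldsymbol{\gamma},\zeta),\textbf{X})| \le C_\zeta + (1-\alpha)^{-1}\left(C_\zeta + \sum_{i=1}^d C_i\,|X_i|\right),
$$
which is integrable because $\mathbb{E}(|\textbf{X}|)<+\infty$ by the standing assumption. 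Hence (ii) holds, and Theorem~\ref{th:shapiro-7.48} yields both the continuity of $g$ on $\mathcal{U}$ and the uniform almost sure convergence of $(g_N)_{N}$ to $g$.

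The argument presents no genuine difficulty; the only step requiring care — and the natural candidate for the main obstacle — is the construction of the dominating function in (ii), where compactness of $\mathcal{U}$ is what reduces the supremum over $(\boldsymbol{\gamma},\zeta)$ to a deterministic affine function of $|\textbf{X}|$, integrable under $\mathbb{E}(|\textbf{X}|)<+\infty$.
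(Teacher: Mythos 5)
Your proof is correct and follows exactly the paper's route: the paper likewise obtains this proposition by applying Theorem~\ref{th:shapiro-7.48} to $G((\boldsymbol{\gamma},\zeta),\textbf{X}) = \zeta + (1-\alpha)^{-1}(-\boldsymbol{\gamma}^T\textbf{X}-\zeta)^+$ on the compact set $\mathcal{U}$, leaving the verification of (i) and (ii) implicit. Your explicit construction of the integrable envelope from the compactness of $\mathcal{U}$ and the standing assumption $\mathbb{E}(|\textbf{X}|)<+\infty$ simply fills in those details correctly.
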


The set $\mathcal{U}$ is compact and $\mathcal{U}^K$ is non-empty by assumption, so neither is $S$. The function $\ell$ is continuous and convex, by (P1) and (P2), and $(\ell_N)_{N \in \mathbb{N}^*}$ converges to $\ell$ a.s. uniformly on $\mathcal{U}$ from Proposition~\ref{th:prop_cvunif_l} with (P1) and (P2). Thanks to these results, we can state a modified version of Theorem 5.5 p160 in \cite{shapiro2009:8} adapted to our framework.

\begin{thm}[Restatement of Theorem 5.5 p160 in \cite{shapiro2009:8}]\label{th:shapiro-5.5}\par
    \item If the following conditions hold:
    \begin{enumerate}
        \item[(a)] Let $(u_N)_N$ be a sequence taking values in $\mathcal{U}$ and such that $u_N$ converges $\Bar{u}$ a.s. If $u_N \in \mathcal{U}^K_N$ for all $N$, then $\Bar{u} \in \mathcal{U}^K$.
        \item[(b)] For $u \in S $, there exists a sequence $(u_N)_{N \in \mathbb{N}^*}$ such that $u_N \in \mathcal{U}_N^K$ for all $N$ and $u_N \underset{N \rightarrow \infty}{\longrightarrow} u $ a.s.
        \item[(c)] A.s. for $N$ large enough the set $S_N$ is nonempty and $S_N \subset \mathcal{U}$.
    \end{enumerate}
    Then $v_N \rightarrow v^*$ a.s. and $\mathbb{D}(S_N, S) \rightarrow 0 $ a.s. when $N\rightarrow \infty$.
\end{thm}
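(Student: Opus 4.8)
The plan is to run the standard two-sided consistency argument of stochastic programming, working pathwise on the almost-sure event on which the uniform convergence of Proposition~\ref{th:prop_cvunif_l} together with conditions (a)--(c) all hold; being a countable intersection of almost-sure events this event is itself almost sure, and on it every statement below is deterministic. Throughout I will use that $\mathcal{U}^K$ is compact (it is the sublevel set $\{g \le K\}$, closed by continuity of $g$, intersected with the compact $\mathcal{U}$) and that $\ell$ is continuous, so that $S$ is nonempty and compact.

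First I would establish the upper bound $\limsup_N v_N \le v^*$. Pick any $u \in S$, so $\ell(u) = v^*$. Condition (b) furnishes a sequence $u_N \in \mathcal{U}_N^K$ with $u_N \to u$, and feasibility of $u_N$ for the SAA problem gives $v_N \le \ell_N(u_N)$. I would then split $\ell_N(u_N) = \bigl(\ell_N(u_N) - \ell(u_N)\bigr) + \ell(u_N)$: the bracket is bounded by $\sup_{\mathcal{U}}|\ell_N - \ell| \to 0$ (Proposition~\ref{th:prop_cvunif_l}), while $\ell(u_N) \to \ell(u) = v^*$ by continuity of $\ell$. Hence $\limsup_N v_N \le v^*$.

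Second, and this is the delicate direction, I would prove $\liminf_N v_N \ge v^*$. By (c), for $N$ large the set $S_N$ is nonempty and contained in the compact $\mathcal{U}$; pick $\hat u_N \in S_N$, so $v_N = \ell_N(\hat u_N)$. Arguing by contradiction, suppose a subsequence satisfies $v_{N_k} \to c < v^*$. Compactness of $\mathcal{U}$ yields a further subsequence with $\hat u_{N_{k_j}} \to \bar u \in \mathcal{U}$. The feasibility $g_{N_{k_j}}(\hat u_{N_{k_j}}) \le K$ combined with condition (a) forces $\bar u \in \mathcal{U}^K$, and the same uniform-convergence-plus-continuity splitting as above gives $\ell_{N_{k_j}}(\hat u_{N_{k_j}}) \to \ell(\bar u)$. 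Thus $c = \ell(\bar u) \ge v^*$, contradicting $c < v^*$. Combined with the first step this yields $v_N \to v^*$. Finally, to obtain $\mathbb{D}(S_N, S) \to 0$ I would argue similarly: if it failed, there would be $\varepsilon > 0$, a subsequence, and $\hat u_{N_k} \in S_{N_k}$ with $\mathrm{dist}(\hat u_{N_k}, S) \ge \varepsilon$; a convergent sub-subsequence $\hat u_{N_{k_j}} \to \bar u$ would satisfy $\mathrm{dist}(\bar u, S) \ge \varepsilon$, hence $\bar u \notin S$, yet the feasibility and value arguments give $\bar u \in \mathcal{U}^K$ and $\ell(\bar u) = \lim_j v_{N_{k_j}} = v^*$, so $\bar u \in S$, a contradiction.

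I expect the main obstacle to be the lower bound, where one must ensure that cluster points of the random SAA-minimizers are feasible for the true problem: this is exactly where condition (a) enters, and a little care is needed because (a) is phrased for full sequences while the compactness step produces only subsequences. I would handle this either by invoking (a) along the subsequence — it transfers verbatim, since in practice its content is just that $g_{N_{k_j}}(\hat u_{N_{k_j}}) \le K$, $g_N \to g$ uniformly, and $g$ continuous together force $g(\bar u) \le K$ — or by reproving $g(\bar u) \le K$ inline from the uniform convergence, which sidesteps the distinction entirely.
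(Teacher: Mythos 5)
Your proof is correct, but note that the paper itself never proves this statement: Theorem~\ref{th:shapiro-5.5} is quoted as a restatement of Theorem 5.5, p.~160 of \cite{shapiro2009:8}, and the paper's effort goes into verifying its hypotheses (Propositions~\ref{th:prop_cv_a} and~\ref{th:prop_cv_b}, Lemma~\ref{th:lemma_no_empty_int}) rather than into the theorem itself. What you have written is essentially the standard textbook proof that the citation delegates to: the upper bound $\limsup_N v_N \le v^*$ via condition (b) plus the uniform convergence of Proposition~\ref{th:prop_cvunif_l} and continuity of $\ell$, and the lower bound plus set convergence via compactness of $\mathcal{U}$, extraction of convergent subsequences of SAA minimizers, and condition (a) to force feasibility of cluster points. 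Your handling of the one genuinely delicate point is right: (a) is phrased for full sequences while compactness only yields subsequences, and your fallback of reproving $g(\bar u)\le K$ inline from $\sup_{\mathcal{U}}|g_N-g|\to 0$ (Proposition~\ref{th:prop_cvunif_g}) closes that cleanly. Two cosmetic remarks: in the set-convergence step, the supremum defining $\mathbb{D}(S_{N_k},S)$ is attained since $S_{N_k}$ is compact ($\ell_{N_k}$ continuous on the compact $\mathcal{U}_{N_k}^K$), so your selection of $\hat u_{N_k}$ with $\mathrm{dist}(\hat u_{N_k},S)\ge\varepsilon$ is legitimate, though picking near-maximizers with $\varepsilon/2$ would avoid even having to say so; and in the contradiction argument for the lower bound you should note that $(v_N)_N$ is eventually bounded (by $\sup_{\mathcal{U}}|\ell| + \sup_{\mathcal{U}}|\ell_N-\ell|$) so a finite subsequential limit $c$ exists. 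Neither affects the validity of the argument.
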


\begin{prp}\label{th:prop_cv_a}
    Let $(u_N)_N$ be a sequence taking values in $\mathcal{U}$ and such that $u_N$ converges $\Bar{u}$ a.s. If $u_N \in \mathcal{U}^K_N$ for all $N$, then $\Bar{u} \in \mathcal{U}^K$.
\end{prp}
\begin{proof}
    Let $(u_N)_{N\in\mathbb{N}^*}$ be a sequence taking values in $\mathcal{U}$ such that $u_N$ converges $\Bar{u}$ a.s. and such that $u_N \in \mathcal{U}_N^K$ for all $N$. We have
    \begin{equation}\label{eq:in_gamma_N}
        g_N(u_N) \leq K.
    \end{equation}

    By Proposition~\ref{th:prop_cvunif_g}, $(g_N)_{N\in\mathbb{N}^*}$ converges to $g$ uniformly on $\mathcal{U}$, so $(g_N(u_N))_{N\in\mathbb{N}^*}$ converges a.s. to $g(\Bar{u})$ and we deduce from~\eqref{eq:in_gamma_N}, that $g(\Bar{u}) \leq K$ so $(\Bar{u}) \in \mathcal{U}^K$.
\end{proof}

\begin{lemma}\label{th:lemma_no_empty_int}
    If the interior of $\mathcal{U}^K$ is non-empty, there exists $\Bar{u}  \in \mathcal{U}^K$ and $N' \in \mathbb{N}^*$ such that $g_N(\Bar{u}) < K$ for all $N>N'$.
\end{lemma}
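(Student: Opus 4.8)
The plan is to produce a single \emph{deterministic} point $\bar{u} \in \mathcal{U}^K$ that is strictly feasible for the true constraint, i.e. $g(\bar{u}) < K$, and then to transfer this strict inequality to the empirical constraints $g_N$ through the uniform convergence of Proposition~\ref{th:prop_cvunif_g}. Since $\bar{u}$ is fixed once and for all, the index $N'$ beyond which $g_N(\bar{u}) < K$ will be the only random object, which matches the (almost sure) form of the statement.

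First I would exhibit $\bar{u}$. By hypothesis $\mathrm{int}(\mathcal{U}^K) \neq \emptyset$; since $\mathcal{U}^K = \{g \leq K\} \cap \mathcal{U}$ and $\mathrm{int}(A \cap B) = \mathrm{int}(A) \cap \mathrm{int}(B)$, any $u_0 = (\gamma_0,\zeta_0) \in \mathrm{int}(\mathcal{U}^K)$ satisfies $u_0 \in \mathrm{int}\{g \leq K\}$, so $g \leq K$ on a ball $B(u_0,\varepsilon)$. I claim $\bar{u} := u_0$ already works, i.e. $g(u_0) < K$. Suppose instead $g(u_0) = K$. Then $u_0$ is a local maximizer of the convex function $g$, and the standard midpoint estimate $g(u_0) \leq \tfrac{1}{2} g(u_0 - sv) + \tfrac{1}{2} g(u_0 + sv) \leq g(u_0)$, valid for every direction $v$ and every small $s$, forces $g \equiv K$ on $B(u_0,\varepsilon)$.

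The delicate point, and the main obstacle, is to rule out this local constancy. Here I would use the specific structure of $g$: for fixed $\gamma$ the map $\zeta \mapsto g(\gamma,\zeta)$ is convex and coercive (it tends to $+\infty$ as $\zeta \to +\infty$ since $\zeta$ dominates, and as $\zeta \to -\infty$ since the coefficient $1 - (1-\alpha)^{-1}$ of $\zeta$ is then negative), with minimal value $C_\alpha(\gamma)$. Consequently, local constancy of $g$ at $u_0$ would force $K = \min_{\gamma} C_\alpha(\gamma)$ to be attained on a full $(d+1)$-dimensional ball, i.e. $g$ would be flat at its global minimum over an open set. This is the genuinely degenerate configuration one must exclude; under a non-degeneracy of $-\boldsymbol{\gamma}^T \textbf{X}$ (no flat quantile, e.g. when $\textbf{X}$ admits a density), $\zeta \mapsto g(\gamma,\zeta)$ is strictly convex near its minimum and is not flat, so the contradiction follows and $g(u_0) < K$.

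With such a $\bar{u}$ fixed, set $\delta := K - g(\bar{u}) > 0$. Proposition~\ref{th:prop_cvunif_g} gives $\sup_{u \in \mathcal{U}} |g_N(u) - g(u)| \to 0$ almost surely, so almost surely there is a (random) $N' \in \mathbb{N}^*$ with $\sup_{u \in \mathcal{U}} |g_N(u) - g(u)| < \delta$ for all $N > N'$. In particular $g_N(\bar{u}) < g(\bar{u}) + \delta = K$ for all $N > N'$, which is the desired conclusion.
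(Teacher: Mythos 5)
The operative half of your proof is exactly the paper's: fix a strictly feasible $\bar u$ with $g(\bar u)<K$, set $\varepsilon = K - g(\bar u)>0$, and use the a.s.\ uniform convergence of Proposition~\ref{th:prop_cvunif_g} to produce a (random) $N'$ with $g_N(\bar u)<K$ for all $N>N'$. Where you differ is upstream: the paper passes directly from ``$\mathcal{U}^K$ has non-empty interior'' to ``there exists $\bar u$ with $g(\bar u)<K$'' without comment --- it effectively reads the hypothesis as Slater's condition, as it does again explicitly in the proof of Theorem~\ref{th:global_rate_cv} --- whereas you attempt to \emph{derive} the strict inequality, and you correctly isolate the only possible obstruction: by the maximum principle for convex functions, $g(u_0)=K$ together with $g\le K$ on a ball forces $g\equiv K$ near $u_0$, and your observation that a convex function constant on a segment attains its minimum there (so $C_\alpha\equiv K$ near $\boldsymbol{\gamma}_0$) is also sound.

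However, the way you exclude this flat configuration is a genuine gap relative to the lemma as stated: you invoke ``no flat quantile, e.g.\ when $\mathbf{X}$ admits a density,'' which is not among the lemma's hypotheses --- in the paper, density-type assumptions (P3)/(P5) appear only later, for the uniqueness results via Proposition~\ref{th:prop_convexity}. As written, your argument therefore proves the lemma only under an extra hypothesis. The good news is that none is needed: $g$ is positively homogeneous of degree one, $g(t\boldsymbol{\gamma},t\zeta)=t\,g(\boldsymbol{\gamma},\zeta)$ for $t>0$, and the paper assumes $K>0$. Hence if $u_0\in\mathrm{int}(\mathcal{U}^K)$ had $g(u_0)=K$, then for $t>1$ close to $1$ the point $tu_0$ would still lie in the ball where $g\le K$, while homogeneity gives $g(tu_0)=tK>K$, a contradiction; so $g(u_0)<K$ outright, with no constancy analysis at all (and if $u_0=0$ then $g(u_0)=0<K$ trivially). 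With that one-line substitute for your third paragraph, your proof becomes complete and unconditional --- and arguably more careful than the paper's, which takes the interior-point-to-Slater step silently.
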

\begin{proof}
    By assumption, $\mathcal{U}^K$ has a non-empty interior, so that there exists $\Bar{u}$ such that $g(\Bar{u}) < K$. We apply Proposition~\ref{th:prop_cvunif_g} to prove the a.s. uniform convergence of $(g_N)_{N\in\mathbb{N}^*}$ to $g$ on $\mathcal{U}$. Then, for $\varepsilon = K - g(\Bar{u}) > 0 $, there exists $N'$ such that $|g_N(\Bar{u}) - g(\Bar{u})| < \varepsilon $ for all $N>N'$. Therefore, $g_N(\Bar{u}) < K$.
\end{proof}

\begin{prp}\label{th:prop_cv_b}
    For any $u \in S $, there exists a sequence $(u_N)_{N \in \mathbb{N}^*}$ such that $(u_N) \in \mathcal{U}_N^K$ for all $N \in \mathbb{N}^*$ and $u_N \underset{N \rightarrow \infty}{\longrightarrow} u $ a.s.
\end{prp}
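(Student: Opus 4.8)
The plan is to exploit the convexity of $g_N$ together with the strict interior (Slater-type) point supplied by Lemma~\ref{th:lemma_no_empty_int}. The obstruction to simply taking $u_N = u$ is that an optimal $u \in S \subseteq \mathcal{U}^K$ satisfies only $g(u) \le K$; when the constraint is active, $g(u) = K$, and the empirical constraint $g_N(u)$ will in general sit slightly above $K$, so $u \notin \mathcal{U}_N^K$. To repair this I would pull $u$ slightly toward a point at which the empirical constraint is strict.

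First I would fix, by Lemma~\ref{th:lemma_no_empty_int}, a point $\bar{u} \in \mathcal{U}^K$ and an (a.s.\ finite) index $N'$ with $g(\bar{u}) < K$ and $g_N(\bar{u}) < K$ for all $N > N'$. By Proposition~\ref{th:prop_cvunif_g} the convergence $g_N \to g$ is a.s.\ uniform on $\mathcal{U}$, so in particular $g_N(u) \to g(u) \le K$ and $g_N(\bar{u}) \to g(\bar{u}) < K$ a.s. For $N > N'$ I would then set
$$ t_N = \frac{(g_N(u) - K)^+}{(g_N(u) - K)^+ + (K - g_N(\bar{u}))} \in [0,1), \qquad u_N = (1 - t_N)\, u + t_N\, \bar{u}. $$
Since $u, \bar{u} \in \mathcal{U}$ and $\mathcal{U}$ is convex, $u_N \in \mathcal{U}$.

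The feasibility $u_N \in \mathcal{U}_N^K$ then follows from convexity of $g_N$: one has $g_N(u_N) \le (1 - t_N) g_N(u) + t_N g_N(\bar{u})$, and the weight $t_N$ is chosen as exactly the smallest value making the right-hand side $\le K$. Indeed, when $g_N(u) \le K$ one gets $t_N = 0$ and $u_N = u$ is already feasible; when $g_N(u) > K$, the convex combination of $g_N(u)$ and $g_N(\bar{u}) < K$ equals $K$, so $g_N(u_N) \le K$. For convergence I would observe that $(g_N(u) - K)^+ \to (g(u) - K)^+ = 0$ because $g(u) \le K$, while the denominator tends to $K - g(\bar{u}) > 0$; hence $t_N \to 0$ a.s., and therefore $u_N = u + t_N(\bar{u} - u) \to u$ a.s.

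The main obstacle is precisely the active-constraint case $g(u) = K$: there is no uniform estimate forcing $g_N(u) \le K$, so one genuinely needs the Slater point and the convexity argument, and the crux is verifying $t_N \to 0$ (which hinges on $g(u) \le K$ annihilating the numerator and $g(\bar{u}) < K$ keeping the denominator bounded away from $0$). The only remaining technical point is the finite initial segment $N \le N'$: as the conclusion concerns the limit $N \to \infty$, this prefix is immaterial, feasibility having been established for every $N > N'$ (where $\bar{u}$ itself witnesses $\mathcal{U}_N^K \neq \emptyset$), so the earlier terms may be assigned arbitrarily without affecting either the feasibility used downstream or the a.s.\ limit.
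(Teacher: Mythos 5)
Your proof is correct, and in the hard case it takes a genuinely more direct route than the paper. The overall skeleton is the same: both arguments fetch a Slater point $\bar{u}$ with $g_N(\bar{u})<K$ for $N>N'$ from Lemma~\ref{th:lemma_no_empty_int}, interpolate along the segment between $u$ and $\bar{u}$, and use convexity of $g_N$ plus the uniform convergence of Proposition~\ref{th:prop_cvunif_g}. The difference is in how the interpolation weight is controlled. The paper takes the \emph{minimal} feasible weight $\lambda_N=\min\{\lambda\in[0,1]:\theta_N(\lambda)\le K\}$, splits into the cases $g(u)<K$ and $g(u)=K$, and in the active case proves $\lambda_N\to 0$ by extracting a convergent subsequence and deriving a contradiction from the convexity inequality passed to the limit. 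You instead write down the explicit weight $t_N=(g_N(u)-K)^+\big/\bigl((g_N(u)-K)^+ + (K-g_N(\bar{u}))\bigr)$, for which feasibility is a one-line convexity computation (the linear bound equals exactly $K$ when $g_N(u)>K$) and $t_N\to 0$ follows immediately from $g_N(u)\to g(u)\le K$ and $K-g_N(\bar{u})\to K-g(\bar{u})>0$; no case split, no subsequence, no contradiction. Your construction also sidesteps having to argue that the minimum defining $\lambda_N$ is attained (the paper needs $\theta_N(1)<K$ and continuity for that). What the paper's choice buys is that $u_N$ is the least perturbation of $u$ that is empirically feasible along the segment, whereas your $t_N$ is only minimal for the linear upper bound --- an aesthetic rather than a mathematical difference here. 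Finally, the finite prefix $N\le N'$ that you flag is handled no more carefully in the paper (its $\lambda_N$ is likewise only well-defined for $N>N'$), so your proof matches the paper's level of rigor on that point while being more transparent about it.
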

\begin{proof}
    Let us show this result for all $u \in \mathcal{U}^K$ which will give the desired result since $S \subset \mathcal{U}^K$. Let $u^* \in \mathcal{U}^K$. We construct a sequence $u_N \in \mathcal{U}_N^K$ which converges a.s. to $u^*$. Since the interior of $\mathcal{U}^K$ is non-empty, by Lemma~\ref{th:lemma_no_empty_int}, there exists $\Bar{u} =(\Bar{\boldsymbol{\gamma}},\Bar{\zeta}) \in \bigcap\limits_{N>N'} \mathcal{U}_N^K$ for some $N'>0$. For $\lambda \in [0,1]$, we define $\theta_N(\lambda) := g_N(\lambda \Bar{u} + (1-\lambda)u^*)$ and $\theta(\lambda) := g(\lambda \Bar{u} + (1-\lambda)u^*)$. Note that $\theta_N(1) = g_N(\Bar{u}) < K$ and $g(u^*) \leq K$.

    Let us define $\lambda_N = \operatorname*{\min} \{ \lambda \in [0,1]; \theta_N(\lambda) \leq K \}$ and $u_N = \lambda_N \Bar{u} + (1-\lambda_N)u^*$, $u_N \in \mathcal{U}_N^K$.

    \underline{\textbf{Case 1: $g(u^*) < K$}}: For any $\varepsilon > 0$, by Proposition~\ref{th:prop_cvunif_g} there exists $N'$ such that for all $N>N'$ we have $\sup\limits_{\mathcal{U}}|g_N - g| \leq \varepsilon$ so $\theta_N(0) = g_N(u^*) \leq K-\varepsilon$. We deduce that for $N$ large enough $\lambda_N = 0$ i.e. $u_N = u^*$ so that $u_N \rightarrow u^*$ a.s.

    \underline{\textbf{Case 2: $g(u^*) = K$}}: Note that the function $\theta_N$ is convex because $g_N$ is convex. So that $\theta_N(\lambda_N) \leq \theta_N(0) + \lambda_N(\theta_N(1) - \theta_N(0))$. As $\lambda_N$ is bounded, we may extract a convergent sub-sequence $\lambda_{\phi(N)} \rightarrow \Bar{\lambda}$ a.s. We assume that $\Bar{\lambda}>0$ and we use a proof by contradiction. We have:
    \begin{equation*}
        \theta_{\phi(N)}(\lambda_{\phi(N)}) \leq \theta_{\phi(N)}(0) + \lambda_N(\theta_{\phi(N)}(1) - \theta_{\phi(N)}(0))
    \end{equation*}
    taking the limit leads to
    \begin{equation*}
        \theta(\Bar{\lambda}) - \theta(0) \leq \Bar{\lambda} (\theta(1) - \theta(0)).
    \end{equation*}

    Now $\theta(0) = K$ and $\theta(1) < K$. Therefore $\theta(\Bar{\lambda})-K < 0$ which we rewrite as $\theta(\Bar{\lambda}) \leq K-3\varepsilon$ for some $\varepsilon >0$. By continuity of $\theta$, there exists $\delta > 0$ such that $\theta(\Bar{\lambda}-\delta) \leq K -2\varepsilon$. Note that $\sup_{\lambda} |\theta_N(\lambda)-\theta(\lambda)| \underset{N \rightarrow \infty}{\longrightarrow} 0$ a.s. because of  Proposition~\ref{th:prop_cvunif_g}. Then, for N large enough $|\theta_N(\lambda)-\theta(\lambda)|\leq \varepsilon$ for all $\lambda \in [0,1]$, so that $\theta_N(\Bar{\lambda}-\delta) \leq K-\varepsilon$. Thus, $\lambda_N \leq \Bar{\lambda}-\delta $ which contradicts that $\lambda_{\phi(N)} \rightarrow \Bar{\lambda}$ a.s. So that, $(\lambda_N)_{N\in \mathbb{N}^*}$ admits as unique limit $0$ and we conclude that $\lambda_N \rightarrow 0$ a.s., hence $u_N \rightarrow u^*$ a.s. which completes the proof.\newline
\end{proof}

\begin{thm}\label{th:thm_final_convergence}
    Let $v_N$ define by \eqref{eq:P} and $v^*$ define by \eqref{eq:P-eq-SAA}, if (P1) and (P2) hold, then
    \begin{equation}\label{eq:global_conv}
        \begin{aligned}
            v_N                &  &  & \overset{a.s.} {\underset{N \rightarrow \infty}{\longrightarrow}} &  & v^* \\
            \mathbb{D}(S_N, S) &  &  & \overset{a.s.} {\underset{N \rightarrow \infty}{\longrightarrow}} &  & 0.
        \end{aligned}
    \end{equation}
\end{thm}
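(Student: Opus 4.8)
The plan is to reduce the statement to the restated Shapiro theorem (Theorem~\ref{th:shapiro-5.5}), whose three hypotheses (a), (b), (c) have already been set up for exactly this purpose. Conditions (a) and (b) are precisely the contents of Propositions~\ref{th:prop_cv_a} and~\ref{th:prop_cv_b}, so the only genuinely remaining task is to verify condition (c): almost surely, for $N$ large enough, the set $S_N$ is nonempty and $S_N \subset \mathcal{U}$.

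The inclusion $S_N \subset \mathcal{U}$ is immediate from the definitions, since $S_N \subset \mathcal{U}_N^K$ and $\mathcal{U}_N^K = \{g_N \leq K\} \cap \mathcal{U} \subset \mathcal{U}$. For the nonemptiness, I would first argue that the feasibility set $\mathcal{U}_N^K$ is itself nonempty for large $N$. By Lemma~\ref{th:lemma_no_empty_int}, the nonempty-interior assumption on $\mathcal{U}^K$ furnishes a point $\bar{u}$ and an index $N'$ with $g_N(\bar{u}) < K$ for all $N > N'$; hence $\bar{u} \in \mathcal{U}_N^K$, so that $\mathcal{U}_N^K \neq \emptyset$ for $N > N'$, almost surely.

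Next I would invoke a compactness argument. Since $g_N$ is continuous, the sublevel set $\{g_N \leq K\}$ is closed, so $\mathcal{U}_N^K$ is a closed subset of the compact set $\mathcal{U}$ and is therefore itself compact. As $\ell_N$ is continuous by (P1)--(P2), Weierstrass' theorem guarantees that $\ell_N$ attains its infimum over the nonempty compact set $\mathcal{U}_N^K$, which gives $S_N \neq \emptyset$. This establishes condition (c), and Theorem~\ref{th:shapiro-5.5} then delivers at once both $v_N \to v^*$ and $\mathbb{D}(S_N, S) \to 0$ almost surely, as claimed.

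The proof is essentially an assembly step, so the main obstacle is not the final theorem but the almost-sure nonemptiness of $\mathcal{U}_N^K$ for large $N$, which is where the nonempty-interior hypothesis on $\mathcal{U}^K$ is indispensable: without a strictly feasible point satisfying $g(\bar{u}) < K$, the perturbed constraint $g_N \leq K$ could in principle fail for every $N$ on a non-negligible event, and then neither the sampled feasibility sets nor $S_N$ could be controlled. It is the uniform convergence $g_N \to g$ supplied by Proposition~\ref{th:prop_cvunif_g} that upgrades this strict feasibility from the true problem to the sampled one, and the same uniform convergence is what powers Propositions~\ref{th:prop_cv_a} and~\ref{th:prop_cv_b} underlying conditions (a) and (b).
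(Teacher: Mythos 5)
Your proof is correct and follows essentially the same route as the paper: conditions (a) and (b) of Theorem~\ref{th:shapiro-5.5} via Propositions~\ref{th:prop_cv_a} and~\ref{th:prop_cv_b}, and condition (c) via Lemma~\ref{th:lemma_no_empty_int} together with closedness of $\mathcal{U}_N^K$ inside the compact $\mathcal{U}$. Your explicit Weierstrass argument for the nonemptiness of $S_N$ merely spells out what the paper leaves implicit when it says $v_N$ is well defined.
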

\begin{proof}

    We want to apply Theorem~\ref{th:shapiro-5.5}. \\
    Assumption (a) holds thanks to Proposition~\ref{th:prop_cv_a} and Assumption (b) holds thanks to Proposition~\ref{th:prop_cv_b}.

    $\mathcal{U}^K$ has a non-empty interior, we can apply Lemma~\ref{th:lemma_no_empty_int} and find $\Bar{u} \in \mathcal{U}^K$ and $N' \in \mathbb{N}^*$ such that $g_N(\Bar{u}) < K$ for all $N>N'$. So that for $N>N', \ \Bar{u} \in \mathcal{U}_N^K$, then $\mathcal{U}_N^K$ is non-empty. Finally, $v_N$ is well defined and because $\mathcal{U}_N^K$ is closed, $S_N$ is included in $\mathcal{U}_N^K \subset \mathcal{U}$. Hence Assumption (c) holds.\\
    We can therefore apply Theorem~\ref{th:shapiro-5.5} and conclude the proof.
\end{proof}

\subsection{Uniqueness Condition}
Let us focus on the uniqueness of the solution to \eqref{eq:P} and therefore~\eqref{eq:P-eq}.

Let us define $\ell_0: \boldsymbol{\gamma} \mapsto \mathbb{E}(L_0(\boldsymbol{\gamma}, \textbf{X}))$. We shall make the following assumptions:
\begin{itemize}
    \item[(P3)] $\forall \boldsymbol{\gamma} \in \mathbb{R}^d_*$, $\boldsymbol{\gamma}^T \textbf{X}$ has a density on $\mathbb{R}^d$.
    \item[(P4)] $C_\alpha (\boldsymbol{\gamma})>K, \forall \boldsymbol{\gamma} \in \underset{\Gamma}{\arg\min} \ \ell_0$.
    \item[(P5)] For all $\boldsymbol{\gamma} \in \Gamma \setminus \{0_d\}$, $\text{supp}(\boldsymbol{\gamma}^T \textbf{X})$ is connected i.e. an interval.

\end{itemize}

\begin{remark}
    This setting is realistic, as capital naturally acts as a limiting constraint. Moreover, the support of $\boldsymbol{\gamma}^T \mathbf{X}$ is typically an interval, since aggregated losses across business lines generally have connected supports. Rare risk types with disconnected support, such as those related to natural disasters, are usually smoothed out by other continuous components. Assumption (P3) is satisfied when $\mathbf{X}$ admits a density, and (P4) ensures that the capital constraint is active at the optimum.
\end{remark}

Note that we exclude the case where $0_d$ is an optimal solution because we choose a positive capital requirement limit $K$.

Assumption (P2) implies $\mathbb{E}\left(\underset{\boldsymbol{\gamma} \in \Gamma}{\sup}|L_0(\boldsymbol{\gamma},\textbf{X})|\right)<+\infty$ and assumption (P1) implies \\
$\mathbb{P}( \boldsymbol{\gamma} \rightarrow L_0(\boldsymbol{\gamma}, \textbf{X}) \ \text{is continuous and convex} ) = 1$ since $L_0(\boldsymbol{\gamma},\textbf{X}) = L(\boldsymbol{\gamma},\Tilde{\zeta},\textbf{X})$ with $\Tilde{\zeta} \in A_\alpha(\gamma)$. Therefore, function $\ell_0$ is continuous and convex thanks to (P1) and (P2).

We are now in a position to state our main results. Theorem~\ref{th:thm_unicity} establishes the uniqueness of the solution to the original problem~\eqref{eq:P}, while Theorem~\ref{th:theorem_final_unicity} extends this result to the equivalent formulation~\eqref{eq:P-eq}.

Generally, the presence of a constraint does not guarantee uniqueness of the solution. However, in our setting, the strict convexity of the constraint function plays a key role in ensuring uniqueness. A crucial step in the analysis is to verify that the capital constraint is active at the optimum, allowing this strict convexity to be exploited. This is established through Lemma~\ref{th:lemma_reach_K} and Proposition~\ref{prop:strict_convexity_CVaR}.

Rather than proving Theorems~\ref{th:thm_unicity} and~\ref{th:theorem_final_unicity} directly, we first present and prove intermediate results from which they follow.

\begin{thm}\label{th:thm_unicity}
    If (P1) to (P4) hold, then Problem~\eqref{eq:P} has a unique solution $\boldsymbol{\gamma}^*$.
\end{thm}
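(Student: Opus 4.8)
The plan is to exploit that the objective $\ell_0$ is convex (from (P1)--(P2)), while uniqueness must be forced by the \emph{strict} convexity of the constraint function $C_\alpha$, which the density assumption (P3) provides. Existence of a minimizer is immediate: the feasible set $\{\boldsymbol{\gamma}\in\Gamma : C_\alpha(\boldsymbol{\gamma})\le K\}$ is a closed subset of the compact box $\Gamma$ (closed since $C_\alpha$ is continuous) and is non-empty, and $\ell_0$ is continuous, so the infimum is attained. Since $\ell_0$ need not be strictly convex, uniqueness cannot come from the objective alone and must be extracted from the constraint.

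First I would establish that the capital constraint is active at every optimum (this is Lemma~\ref{th:lemma_reach_K}). If some optimal $\boldsymbol{\gamma}^*$ satisfied $C_\alpha(\boldsymbol{\gamma}^*)<K$, then by continuity of $C_\alpha$ the constraint would be slack on a whole neighbourhood of $\boldsymbol{\gamma}^*$ in $\Gamma$, so $\boldsymbol{\gamma}^*$ would be a local minimizer of $\ell_0$ over the convex set $\Gamma$; convexity of $\ell_0$ upgrades this to a global minimizer, i.e. $\boldsymbol{\gamma}^*\in\arg\min_\Gamma \ell_0$, and then (P4) forces $C_\alpha(\boldsymbol{\gamma}^*)>K$, a contradiction. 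Hence $C_\alpha(\boldsymbol{\gamma}^*)=K$ at any solution.

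Next I would combine this with the strict convexity of $C_\alpha$ on non-collinear pairs (Proposition~\ref{prop:strict_convexity_CVaR}). Suppose, for contradiction, that $\boldsymbol{\gamma}_1\neq\boldsymbol{\gamma}_2$ are both optimal. By the previous step $C_\alpha(\boldsymbol{\gamma}_1)=C_\alpha(\boldsymbol{\gamma}_2)=K$, and since $C_\alpha$ is positively homogeneous with $K>0$, the two points cannot lie on a common ray through the origin (otherwise $\boldsymbol{\gamma}_2=c\boldsymbol{\gamma}_1$ would give $K=cK$, i.e. $c=1$); they are therefore non-collinear. The midpoint $\boldsymbol{\gamma}_m=\tfrac12(\boldsymbol{\gamma}_1+\boldsymbol{\gamma}_2)\in\Gamma$ then satisfies $C_\alpha(\boldsymbol{\gamma}_m)<\tfrac12(C_\alpha(\boldsymbol{\gamma}_1)+C_\alpha(\boldsymbol{\gamma}_2))=K$, so it is feasible, while convexity of $\ell_0$ gives $\ell_0(\boldsymbol{\gamma}_m)\le\tfrac12(\ell_0(\boldsymbol{\gamma}_1)+\ell_0(\boldsymbol{\gamma}_2))=v^*$, so $\boldsymbol{\gamma}_m$ is itself optimal. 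But $C_\alpha(\boldsymbol{\gamma}_m)<K$ contradicts the activity of the constraint at every optimum. Therefore $\boldsymbol{\gamma}_1=\boldsymbol{\gamma}_2$, proving uniqueness.

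The main obstacle is Proposition~\ref{prop:strict_convexity_CVaR} itself, the strict convexity of $C_\alpha$. Being positively homogeneous, $C_\alpha$ is actually affine along rays, so strict convexity can only hold for non-collinear directions, which is precisely the case needed above. To prove it I would use the Rockafellar--Uryasev representation: writing $\zeta_i=V_\alpha(\boldsymbol{\gamma}_i)$ and $Z_i=-\boldsymbol{\gamma}_i^T\textbf{X}-\zeta_i$, the chain $C_\alpha(\boldsymbol{\gamma}_m)\le g(\boldsymbol{\gamma}_m,\lambda\zeta_1+(1-\lambda)\zeta_2)\le\lambda C_\alpha(\boldsymbol{\gamma}_1)+(1-\lambda)C_\alpha(\boldsymbol{\gamma}_2)$ follows from convexity of $t\mapsto t^+$, and the last inequality is strict as soon as $\{Z_1>0>Z_2\}\cup\{Z_2>0>Z_1\}$ has positive probability. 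Since (P3) ensures $\boldsymbol{\gamma}_i^T\textbf{X}$ has no atoms, each tail event $\{Z_i>0\}$ has probability exactly $1-\alpha$, and the delicate point is to show that for non-collinear $\boldsymbol{\gamma}_1,\boldsymbol{\gamma}_2$ these two tail events cannot coincide almost surely — the absolute continuity of the one-dimensional projections is exactly what prevents the two distinct half-space-type events from agreeing up to a null set. Establishing this positivity is the technical heart of the argument and the step I expect to require the most care.
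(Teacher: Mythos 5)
Your proposal is correct and follows essentially the same route as the paper: activity of the capital constraint at every optimum (your neighbourhood/local-to-global convexity argument is an equivalent variant of the paper's segment construction proving Lemma~\ref{th:lemma_reach_K}), followed by strict convexity of $C_\alpha$ (Proposition~\ref{prop:strict_convexity_CVaR}) applied to a convex combination of two putative optima to reach a contradiction. Two marginal notes: your explicit exclusion of collinear optima via positive homogeneity and $K>0$ is actually a refinement the paper glosses over (a positively homogeneous $C_\alpha$ cannot be strictly convex along rays, and the paper's application implicitly needs $C_\alpha(\boldsymbol{\gamma}_1)=C_\alpha(\boldsymbol{\gamma}_2)=K>0$ at exactly this point, as you make explicit), whereas your closing heuristic that (P3) alone prevents two non-collinear half-space tail events from coinciding a.s.\ is too strong --- the paper's Lemma~\ref{th:lemma_affine_form} additionally uses that the support of $\mathbf{X}$ is connected and $d$-dimensional (with the level sets meeting its interior), which is where that step actually gets its traction, though this does not affect your proof of the theorem since the proposition is available as a stated intermediate result.
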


\begin{thm}\label{th:theorem_final_unicity}
    If (P1) to (P5) hold, then Problem~\eqref{eq:P-eq} has a unique solution $(\boldsymbol{\gamma}^*,\zeta^*)$ and in addition $\boldsymbol{\gamma}^*$ is the unique solution to Problem~\eqref{eq:P}.
\end{thm}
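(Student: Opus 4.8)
The plan is to obtain the statement as a corollary of the uniqueness already granted by Theorem~\ref{th:thm_unicity} together with the equivalence of Proposition~\ref{th:equivalence}, the only genuinely new ingredient being the uniqueness of the auxiliary variable $\zeta^*$. Under (P1)--(P4), Theorem~\ref{th:thm_unicity} gives a unique minimiser $\boldsymbol{\gamma}^*$ of Problem~\eqref{eq:P}, so I then want to show that the solution set of Problem~\eqref{eq:P-eq} is exactly $\{(\boldsymbol{\gamma}^*,\zeta^*)\}$ for a single $\zeta^*$. Matching~\eqref{eq:P} and~\eqref{eq:P-eq} to~\eqref{eq:Reward_True} and~\eqref{eq:Reward_True_F} with $R(\boldsymbol{\gamma},\cdot)=-\mathbb{E}(L_0(\boldsymbol{\gamma},\cdot,\textbf{X}))$ (nonincreasing in its second argument by the cost-of-capital monotonicity), Proposition~\ref{th:equivalence} says that $(\boldsymbol{\gamma},\zeta)$ solves~\eqref{eq:P-eq} if and only if $\boldsymbol{\gamma}$ solves~\eqref{eq:P} and $\zeta\in A_\alpha(\boldsymbol{\gamma})$, \emph{provided the CVaR constraint is active at the optimum}. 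Two things therefore remain: that the constraint is active, and that $A_\alpha(\boldsymbol{\gamma}^*)$ is a singleton.

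For activity I would argue from (P4): were the constraint inactive at $\boldsymbol{\gamma}^*$, that is $C_\alpha(\boldsymbol{\gamma}^*)<K$, then by continuity of $C_\alpha$ the point $\boldsymbol{\gamma}^*$ would be a local, hence by convexity of $\ell_0$ over the convex set $\Gamma$ a global, minimiser of $\ell_0$ on $\Gamma$, so that $\boldsymbol{\gamma}^*\in\operatorname{argmin}_\Gamma\ell_0$; but (P4) then forces $C_\alpha(\boldsymbol{\gamma}^*)>K$, contradicting $C_\alpha(\boldsymbol{\gamma}^*)<K$. Hence $C_\alpha(\boldsymbol{\gamma}^*)=K$, which is the content of Lemma~\ref{th:lemma_reach_K}. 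With the constraint active, the ``if and only if'' part of Proposition~\ref{th:equivalence} applies, and since the solution of~\eqref{eq:P} is the single point $\boldsymbol{\gamma}^*$, the solution set of~\eqref{eq:P-eq} is $\{\boldsymbol{\gamma}^*\}\times A_\alpha(\boldsymbol{\gamma}^*)$.

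It remains to prove $A_\alpha(\boldsymbol{\gamma}^*)=\{\zeta^*\}$. Since $K>0$ while $C_\alpha(0_d)=0<K$, activity rules out $\boldsymbol{\gamma}^*=0_d$, so (P3) and (P5) apply to $\boldsymbol{\gamma}^{*T}\textbf{X}$. Recall that $A_\alpha(\boldsymbol{\gamma}^*)=\operatorname{argmin}_{\zeta}g(\boldsymbol{\gamma}^*,\zeta)$ is precisely the set of $\alpha$-quantiles of $-\boldsymbol{\gamma}^{*T}\textbf{X}$, i.e. a closed interval on which the cumulative distribution function $F(z)=\mathbb{P}(-\boldsymbol{\gamma}^{*T}\textbf{X}\le z)$ equals $\alpha$. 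By (P3) this function is continuous, and by (P5) the support is an interval, so $F$ is strictly increasing there and the level set $\{F=\alpha\}$ reduces to the single point $\zeta^*=V_\alpha(\boldsymbol{\gamma}^*)$. Thus $A_\alpha(\boldsymbol{\gamma}^*)$ is the singleton $\{\zeta^*\}$, the solution set of~\eqref{eq:P-eq} collapses to $\{(\boldsymbol{\gamma}^*,\zeta^*)\}$, and $\boldsymbol{\gamma}^*$ is the unique solution of~\eqref{eq:P}, as required. I expect the main obstacle to be this last step: converting ``density plus connected support'' into strict monotonicity of $F$ at level $\alpha$, so that no flat stretch of $F$ produces an interval of optimal $\zeta$; one must in particular check that the $\alpha$-quantile lies in the interior of the support, where the density is positive, which is exactly where (P5) does the real work.
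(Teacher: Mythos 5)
Your proposal is correct and follows essentially the same route as the paper's own proof: uniqueness of $\boldsymbol{\gamma}^*$ from Theorem~\ref{th:thm_unicity}, transfer to Problem~\eqref{eq:P-eq} via Proposition~\ref{th:equivalence} (with activity of the CVaR constraint, which you re-derive by a local-to-global convexity argument equivalent to Lemma~\ref{th:lemma_reach_K}), and then uniqueness of $\zeta^*$ from $A_\alpha(\boldsymbol{\gamma}^*)$ being a singleton. Your final step, phrased as strict monotonicity of the cdf of $-\boldsymbol{\gamma}^{*T}\textbf{X}$ at level $\alpha$ under (P3) and (P5), is the same computation as the paper's Proposition~\ref{th:prop_convexity} (strict convexity of $\zeta \mapsto g(\boldsymbol{\gamma}^*,\zeta)$ on the support, via $\mathbb{P}(\zeta_1 < -\boldsymbol{\gamma}^{*T}\textbf{X} < \zeta_2) > 0$), and you are in fact slightly more explicit than the paper in checking the monotonicity hypothesis on $R$, the activity requirement of the equivalence, and the exclusion of $\boldsymbol{\gamma}^* = 0_d$.
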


We now present the intermediate results used in the proofs.

\begin{lemma}\label{th:lemma_reach_K}
    Let $\boldsymbol{\gamma}^*$ be an optimal solution to~\eqref{eq:P}. If assumptions (P1), (P2), and (P4) hold, then $C_\alpha(\boldsymbol{\gamma}^*) = K$.
\end{lemma}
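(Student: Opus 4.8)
The plan is to argue by contradiction. Suppose $\boldsymbol{\gamma}^*$ is optimal for~\eqref{eq:P} but the capital constraint is slack, i.e. $C_\alpha(\boldsymbol{\gamma}^*) < K$. The strategy is to exhibit a feasible point with strictly smaller objective value, contradicting optimality. The improving direction I want to follow points toward an unconstrained minimizer of $\ell_0$ over $\Gamma$, and assumption (P4) is precisely what guarantees such a minimizer differs from $\boldsymbol{\gamma}^*$.

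First I would record the structural facts to be used. The box $\Gamma$ is convex and compact. The objective $\ell_0(\boldsymbol{\gamma}) = \mathbb{E}(L_0(\boldsymbol{\gamma},\textbf{X})) = \mathbb{E}(L_0(\boldsymbol{\gamma}, C_\alpha(\boldsymbol{\gamma}),\textbf{X}))$ is finite, continuous and convex on $\Gamma$ by (P1)--(P2). Finally $C_\alpha = \min_\zeta g(\boldsymbol{\gamma},\zeta)$ is a finite convex function of $\boldsymbol{\gamma}$ (partial minimization of the jointly convex $g$, with finiteness coming from $\mathbb{E}(|\textbf{X}|) < +\infty$), hence continuous on $\mathbb{R}^d$. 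It is the continuity of $C_\alpha$ that will let me keep a perturbed point feasible.

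Next, using (P4): since every element of $\arg\min_\Gamma \ell_0$ has CVaR strictly above $K$ while $C_\alpha(\boldsymbol{\gamma}^*) < K$, the point $\boldsymbol{\gamma}^*$ cannot belong to $\arg\min_\Gamma \ell_0$. By compactness of $\Gamma$ and continuity of $\ell_0$ this set is non-empty, so there exists $\boldsymbol{\gamma}' \in \Gamma$ with $\ell_0(\boldsymbol{\gamma}') < \ell_0(\boldsymbol{\gamma}^*)$. I would then consider the segment $\boldsymbol{\gamma}_t = (1-t)\boldsymbol{\gamma}^* + t\boldsymbol{\gamma}'$ for $t \in [0,1]$. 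Convexity of $\Gamma$ gives $\boldsymbol{\gamma}_t \in \Gamma \subset \mathbb{R}_+^d$; convexity of $\ell_0$ gives $\ell_0(\boldsymbol{\gamma}_t) \le \ell_0(\boldsymbol{\gamma}^*) + t\,(\ell_0(\boldsymbol{\gamma}') - \ell_0(\boldsymbol{\gamma}^*)) < \ell_0(\boldsymbol{\gamma}^*)$ for every $t \in (0,1]$; and continuity of $C_\alpha$ together with the strict slack $C_\alpha(\boldsymbol{\gamma}^*) < K$ gives $C_\alpha(\boldsymbol{\gamma}_t) \le K$ for all small enough $t > 0$. For such $t$ the point $\boldsymbol{\gamma}_t$ is feasible for~\eqref{eq:P} with strictly smaller objective, contradicting the optimality of $\boldsymbol{\gamma}^*$. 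Hence $C_\alpha(\boldsymbol{\gamma}^*) \ge K$, and combined with feasibility $C_\alpha(\boldsymbol{\gamma}^*) \le K$ this yields $C_\alpha(\boldsymbol{\gamma}^*) = K$.

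The only delicate step is the feasibility of the perturbed point with respect to the capital constraint: I must ensure that moving slightly toward $\boldsymbol{\gamma}'$ does not push $C_\alpha$ above $K$. This is exactly where the strict inequality $C_\alpha(\boldsymbol{\gamma}^*) < K$ and the continuity of $C_\alpha$ enter; note that convexity of $C_\alpha$ is not even required for this step, although it is available. I would also double-check the identification $\ell_0(\boldsymbol{\gamma}) = \mathbb{E}(L_0(\boldsymbol{\gamma}, C_\alpha(\boldsymbol{\gamma}),\textbf{X}))$ so that (P4), which is phrased for $\ell_0$, genuinely refers to the objective of~\eqref{eq:P}.
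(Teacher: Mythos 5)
Your proof is correct and follows essentially the same route as the paper: both argue by contradiction from $C_\alpha(\boldsymbol{\gamma}^*) < K$, move along the segment from $\boldsymbol{\gamma}^*$ toward a minimizer $\tilde{\boldsymbol{\gamma}} \in \arg\min_\Gamma \ell_0$ (which (P4) forces to differ from $\boldsymbol{\gamma}^*$), use continuity of $C_\alpha$ to keep a nearby point feasible, and use convexity of $\ell_0$ to get a strictly smaller objective value. The only cosmetic difference is that the paper takes the maximal feasible step $\tilde{\lambda}$ along the segment, while you take a sufficiently small $t > 0$; your variant even makes explicit the small justifications (nonemptiness of $\arg\min_\Gamma \ell_0$, continuity of $C_\alpha$ as a finite convex function) that the paper leaves implicit.
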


\begin{proof}
    Suppose, for the sake of contradiction, that $C_\alpha(\boldsymbol{\gamma}^*) < K$. Let $\tilde{\boldsymbol{\gamma}} \in \underset{\Gamma}{\arg\min} \, \ell_0$.

    Define 
    \begin{equation*}
        \tilde{\lambda} = \max \left\{ \lambda \in [0,1] \,:\, C_\alpha(\lambda \tilde{\boldsymbol{\gamma}} + (1 - \lambda) \boldsymbol{\gamma}^*) \leq K \right\}.
    \end{equation*}
    By continuity of $C_\alpha$, we have $\tilde{\lambda} < 1$, since $C_\alpha(\tilde{\boldsymbol{\gamma}}) > K$ by (P4), and $C_\alpha(\boldsymbol{\gamma}^*) < K$ by assumption.

    Using the convexity of $\ell_0$ (from assumptions (P1) and (P2)), we obtain:
    \begin{equation*}
        \ell_0(\tilde{\lambda} \tilde{\boldsymbol{\gamma}} + (1 - \tilde{\lambda}) \boldsymbol{\gamma}^*) 
    \leq \tilde{\lambda} \ell_0(\tilde{\boldsymbol{\gamma}}) + (1 - \tilde{\lambda}) \ell_0(\boldsymbol{\gamma}^*) 
    < \ell_0(\boldsymbol{\gamma}^*),
    \end{equation*}
    where the strict inequality follows from the fact that $\boldsymbol{\gamma}^* \notin \underset{\Gamma}{\arg\min} \, \ell_0$ (by (P4)). This contradicts the optimality of $\boldsymbol{\gamma}^*$. Hence, we conclude that $C_\alpha(\boldsymbol{\gamma}^*) = K$.
\end{proof}

\begin{prp}\label{prop:strict_convexity_CVaR}
    Suppose that assumptions (P3) holds. Then the mapping $\boldsymbol{\gamma} \mapsto C_\alpha(\boldsymbol{\gamma})$ is strictly convex on $\mathbb{R}^d_*$.
\end{prp}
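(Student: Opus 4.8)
The plan is to exploit the representation $C_\alpha(\boldsymbol{\gamma}) = \min_{\zeta} g(\boldsymbol{\gamma},\zeta)$ together with the partial-minimisation principle: since $g$ is jointly convex in $(\boldsymbol{\gamma},\zeta)$, the marginal $C_\alpha$ is automatically convex, so only the \emph{strictness} needs work. Fix $\boldsymbol{\gamma}_0 \neq \boldsymbol{\gamma}_1$ in $\mathbb{R}^d_*$ and $t \in (0,1)$, set $\boldsymbol{\gamma}_t = t\boldsymbol{\gamma}_0 + (1-t)\boldsymbol{\gamma}_1$, and choose minimisers $\zeta_i = V_\alpha(\boldsymbol{\gamma}_i) \in A_\alpha(\boldsymbol{\gamma}_i)$, so that $C_\alpha(\boldsymbol{\gamma}_i) = g(\boldsymbol{\gamma}_i,\zeta_i)$. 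Writing $\zeta_t = t\zeta_0 + (1-t)\zeta_1$, I would first use $C_\alpha(\boldsymbol{\gamma}_t) \le g(\boldsymbol{\gamma}_t,\zeta_t)$. Because the affine term $\zeta$ in $g$ contributes an exact equality under convex combination, establishing strict convexity reduces to the strict inequality $\mathbb{E}[(-\boldsymbol{\gamma}_t^T\textbf{X}-\zeta_t)^+] < t\,\mathbb{E}[(-\boldsymbol{\gamma}_0^T\textbf{X}-\zeta_0)^+] + (1-t)\,\mathbb{E}[(-\boldsymbol{\gamma}_1^T\textbf{X}-\zeta_1)^+]$.

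The second step is a pointwise analysis of the kink of $y \mapsto y^+$. Setting $u_i = -\boldsymbol{\gamma}_i^T\textbf{X} - \zeta_i$, so that $-\boldsymbol{\gamma}_t^T\textbf{X}-\zeta_t = t u_0 + (1-t)u_1$, convexity of $(\cdot)^+$ gives $(tu_0+(1-t)u_1)^+ \le t u_0^+ + (1-t)u_1^+$ pointwise, with \emph{strict} inequality precisely on the event $\{u_0 u_1 < 0\}$, where $u_0$ and $u_1$ lie on opposite sides of the kink at $0$. Taking expectations, the desired strict inequality is therefore equivalent to $\mathbb{P}(u_0 u_1 < 0) > 0$, i.e. to the two tail half-spaces $T_i := \{u_i > 0\} = \{\boldsymbol{\gamma}_i^T\textbf{X} < -V_\alpha(\boldsymbol{\gamma}_i)\}$ \emph{not} coinciding up to a null set. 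Assumption (P3) guarantees $\boldsymbol{\gamma}_i^T\textbf{X}$ is atomless, so $\mathbb{P}(u_i = 0) = 0$ and $\mathbb{P}(T_i) = 1-\alpha$ for each $i$.

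The heart of the proof is then to show $\mathbb{P}(T_0 \triangle T_1) > 0$ whenever $\boldsymbol{\gamma}_0$ and $\boldsymbol{\gamma}_1$ are not positively proportional. I would argue geometrically: $T_0 \triangle T_1$ contains the open wedge $\{\boldsymbol{\gamma}_0^T\textbf{x} < -\zeta_0,\ \boldsymbol{\gamma}_1^T\textbf{x} > -\zeta_1\}$, whose two bounding hyperplanes are non-parallel when the directions are linearly independent, so this wedge is a non-empty open set; invoking the absolute continuity supplied by (P3) one concludes that it carries positive mass, whence $\mathbb{P}(T_0\triangle T_1) > 0$. The negatively-proportional case $\boldsymbol{\gamma}_1 = c\boldsymbol{\gamma}_0$ with $c<0$ is handled directly: there $T_0$ is a lower tail and $T_1$ an upper tail of the same atomless variable $\boldsymbol{\gamma}_0^T\textbf{X}$, hence distinct events of probability $1-\alpha$, so again $\mathbb{P}(T_0\triangle T_1)>0$. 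Feeding $\mathbb{P}(u_0u_1<0)>0$ back through the first two steps yields $C_\alpha(\boldsymbol{\gamma}_t) < tC_\alpha(\boldsymbol{\gamma}_0)+(1-t)C_\alpha(\boldsymbol{\gamma}_1)$.

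The main obstacle is precisely this last step, and it carries one genuine caveat worth flagging. The positively-proportional (radial) direction is a true exception: by positive homogeneity $C_\alpha(c\boldsymbol{\gamma}) = cC_\alpha(\boldsymbol{\gamma})$, so $C_\alpha$ is affine — not strictly convex — along any ray through the origin, and the statement should be read modulo this radial degeneracy (which, in the uniqueness application, is absorbed by the activeness of the constraint $C_\alpha = K$, since $C_\alpha(\boldsymbol{\gamma}_1)=C_\alpha(\boldsymbol{\gamma}_0)=K$ and $\boldsymbol{\gamma}_1=c\boldsymbol{\gamma}_0$ force $c=1$). Establishing that the wedge carries positive mass is where the probabilistic content truly lives: atomlessness of one-dimensional projections alone is not quite sufficient, since a law supported on a lower-dimensional curved set can make the two tail half-spaces coincide even for linearly independent directions, so I would lean on the full absolute continuity of $\textbf{X}$ furnished by (P3) to guarantee that every non-empty open wedge has positive probability.
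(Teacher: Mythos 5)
Your reduction differs genuinely from the paper's at the first stage. The paper argues by contradiction from equality in the convexity inequality, writes $C_\alpha$ as a tail expectation, and uses Lemma~\ref{th:lemma_comonotonicity} (the trick $\mathbb{E}[(Y-m)(A_Y-A_{Y+Z})]=0$ with $m$ the $\alpha$-quantile) to force the $\alpha$-tail events of $Y$, $Z$ and $Y+Z$ to coincide; you instead work with the Rockafellar--Uryasev function $g$, take $\zeta_i=V_\alpha(\boldsymbol{\gamma}_i)$, and localise all the strictness in the kink of $x\mapsto x^+$, correctly reducing the problem to $\mathbb{P}(u_0u_1<0)>0$ for $u_i=-\boldsymbol{\gamma}_i^T\mathbf{X}-\zeta_i$ (your pointwise case analysis is right, and your direct treatment of the negatively proportional case is clean). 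Both proofs then meet at the same crux: the two $\alpha$-tail half-space events must differ on a set of positive probability unless $\boldsymbol{\gamma}_0$ and $\boldsymbol{\gamma}_1$ are positively proportional. Your radial caveat is moreover a genuine catch: by positive homogeneity, equality in~\eqref{eq:uniticity_tvar} holds whenever $\boldsymbol{\gamma}_1=a\boldsymbol{\gamma}_2$ for \emph{any} $a>0$, so the paper's closing claim that positive homogeneity and equality ``yield $a=1$'' does not actually hold, and the proposition must indeed be read modulo rays --- harmlessly in the uniqueness application, exactly as you observe, since $C_\alpha(\boldsymbol{\gamma}_1)=C_\alpha(\boldsymbol{\gamma}_2)=K>0$ then forces $a=1$.

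The gap sits where you anticipated it, but it is more than a technicality: (P3) does \emph{not} furnish absolute continuity of $\mathbf{X}$ on $\mathbb{R}^d$. It only asserts that each one-dimensional projection $\boldsymbol{\gamma}^T\mathbf{X}$ has a density, and the paper's own remark presents ``$\mathbf{X}$ admits a density'' merely as a sufficient condition for (P3): the uniform law on the unit circle has absolutely continuous projections in every direction while being singular in $\mathbb{R}^2$, and a law carried by a strictly monotone curve can even make the two quantile tail half-spaces coincide a.s.\ for linearly independent directions. So the step ``every non-empty open wedge has positive probability'' cannot be extracted from (P3). Even granting a genuine density for $\mathbf{X}$, the wedge must in addition meet the support: with a disconnected support, two non-parallel quantile hyperplanes can both separate the same component of mass $1-\alpha$, again making the tails coincide. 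This is exactly the role of the paper's Lemma~\ref{th:lemma_affine_form}, whose hypotheses ($\mathcal{R}$ connected and $d$-dimensional, with both kernels meeting $\mathring{\mathcal{R}}$) encode the support geometry your wedge argument needs; to close your proof you would have to import an assumption of that type. In fairness, the paper itself asserts rather than derives from (P3) that these hypotheses hold for $\mathcal{R}_\mathbf{X}$, so its proof shares the weakness your final paragraph senses --- but your write-up, as it stands, attributes to (P3) a strength it does not have.
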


The proof of Proposition~\ref{prop:strict_convexity_CVaR} is provided in the appendix. 

We are now in a position to prove Theorem~\ref{th:thm_unicity}.

\begin{proof}[Proof of Theorem~\ref{th:thm_unicity}]
    Let $\boldsymbol{\gamma}_1, \boldsymbol{\gamma}_2$ be two optimal solutions of Problem~\eqref{eq:P}. By Lemma~\ref{th:lemma_reach_K}, both satisfy the capital constraint with equality:
    \begin{equation*}
        \ell_0(\boldsymbol{\gamma}_1) = \ell_0(\boldsymbol{\gamma}_2) = v^*, \quad 
        C_\alpha(\boldsymbol{\gamma}_1) = C_\alpha(\boldsymbol{\gamma}_2) = K.
    \end{equation*}
    Let $\boldsymbol{\gamma}_\mu := \mu \boldsymbol{\gamma}_1 + (1 - \mu) \boldsymbol{\gamma}_2$ for some $\mu \in (0,1)$. Since $C_\alpha$ is convex (by sub-additivity of CVaR), we have:
    \begin{equation*}
        C_\alpha(\boldsymbol{\gamma}_\mu) \leq \mu C_\alpha(\boldsymbol{\gamma}_1) + (1 - \mu) C_\alpha(\boldsymbol{\gamma}_2) = K.
    \end{equation*}
    Similarly, convexity of $\ell_0$ (by assumptions (P1) and (P2)) gives:
    \begin{equation*}
        \ell_0(\boldsymbol{\gamma}_\mu) \leq \mu \ell_0(\boldsymbol{\gamma}_1) + (1 - \mu) \ell_0(\boldsymbol{\gamma}_2) = v^*.
    \end{equation*}
    If $\ell_0(\boldsymbol{\gamma}_\mu) < v^*$, this contradicts the optimality of $\boldsymbol{\gamma}_1$ and $\boldsymbol{\gamma}_2$. Thus, $\ell_0(\boldsymbol{\gamma}_\mu) = v^*$, and $\boldsymbol{\gamma}_\mu$ is optimal. Applying Lemma~\ref{th:lemma_reach_K} again yields $C_\alpha(\boldsymbol{\gamma}_\mu) = K$.

    Therefore, for all $\mu \in [0,1]$, we have:
    \begin{equation*}
        C_\alpha(\boldsymbol{\gamma}_\mu) = \mu C_\alpha(\boldsymbol{\gamma}_1) + (1 - \mu) C_\alpha(\boldsymbol{\gamma}_2),
    \end{equation*}
    which implies equality in the convexity inequality. By the strict convexity of $C_\alpha$ (Proposition~\ref{prop:strict_convexity_CVaR}), it follows that $\boldsymbol{\gamma}_1 = \boldsymbol{\gamma}_2$. Hence, the solution to Problem~\eqref{eq:P} is unique.
\end{proof}

\begin{prp}\label{th:prop_convexity}
    For any fixed $\boldsymbol{\gamma} \neq 0_d$, if (P3) and (P5) holds, the function $\zeta \mapsto g(\boldsymbol{\gamma}, \zeta)$ is strictly convex on $\text{supp}(-\boldsymbol{\gamma}^T \textbf{X})$.
\end{prp}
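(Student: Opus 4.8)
The plan is to reduce the claim to a statement about the distribution function of the scalar variable $Y := -\boldsymbol{\gamma}^T \textbf{X}$ and then exploit the interval structure of its support. Since $g(\boldsymbol{\gamma},\zeta) = \zeta + (1-\alpha)^{-1}\mathbb{E}((Y-\zeta)^+)$ and the term $\zeta$ is affine, strict convexity of $\zeta \mapsto g(\boldsymbol{\gamma},\zeta)$ is equivalent to strict convexity of $\phi(\zeta) := \mathbb{E}((Y-\zeta)^+)$. First I would record that $\phi$ is finite and convex: finiteness follows from $\mathbb{E}(|\textbf{X}|)<\infty$ (so $Y$ is integrable), and convexity from the fact that $\zeta \mapsto (y-\zeta)^+$ is convex for every fixed $y$, so $\phi$ is an expectation of convex functions.

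Next I would obtain a workable expression for $\phi$ and its derivative. Writing $(Y-\zeta)^+ = \int_\zeta^\infty \mathbf{1}_{\{Y>s\}}\,ds$ and applying Tonelli's theorem gives
\begin{equation*}
    \phi(\zeta) = \int_\zeta^\infty \mathbb{P}(Y>s)\,ds = \int_\zeta^\infty \bigl(1 - F_Y(s)\bigr)\,ds,
\end{equation*}
where $F_Y$ denotes the distribution function of $Y$. Under (P3) the variable $\boldsymbol{\gamma}^T\textbf{X}$, hence $Y$, admits a density, so $F_Y$ is continuous and $\phi$ is continuously differentiable with $\phi'(\zeta) = -(1-F_Y(\zeta)) = F_Y(\zeta)-1$. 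Consequently
\begin{equation*}
    \frac{\partial g}{\partial \zeta}(\boldsymbol{\gamma},\zeta) = 1 - (1-\alpha)^{-1}\bigl(1-F_Y(\zeta)\bigr),
\end{equation*}
and since a differentiable function is strictly convex on an interval exactly when its derivative is strictly increasing there, it suffices to prove that $F_Y$ is strictly increasing on $\text{supp}(Y) = \text{supp}(-\boldsymbol{\gamma}^T\textbf{X})$.

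This last step is where assumption (P5) is essential. I would first note that $\text{supp}(-\boldsymbol{\gamma}^T\textbf{X})$ is the reflection of $\text{supp}(\boldsymbol{\gamma}^T\textbf{X})$, hence still an interval by (P5). I would then take two points $\zeta_1 < \zeta_2$ in this interval; because the support is an interval, every $\zeta_0 \in (\zeta_1,\zeta_2)$ belongs to $\text{supp}(Y)$, so each neighbourhood of $\zeta_0$ carries positive mass, giving $\mathbb{P}(\zeta_1 < Y < \zeta_2) > 0$. Combined with the continuity of $F_Y$, this yields $F_Y(\zeta_2) - F_Y(\zeta_1) \geq \mathbb{P}(\zeta_1 < Y < \zeta_2) > 0$, i.e. strict monotonicity of $F_Y$ on the support. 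Hence $\partial_\zeta g$ is strictly increasing on $\text{supp}(Y)$ and the claim follows.

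The main obstacle is this final support argument: without (P5) the distribution function could be locally constant on a gap between disconnected pieces of the support, which would only give (non-strict) convexity; the interval assumption is precisely what rules out such flat pieces between support points and upgrades convexity to strict convexity. Some care is also needed to transfer (P5), stated for $\boldsymbol{\gamma}^T\textbf{X}$, to the reflected variable $Y$, and to justify differentiating $\phi$ through the continuity of $F_Y$ afforded by (P3).
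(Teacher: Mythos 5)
Your proof is correct and follows essentially the same route as the paper: both compute $\frac{\partial g}{\partial \zeta}(\boldsymbol{\gamma},\zeta) = 1-(1-\alpha)^{-1}\mathbb{P}(-\boldsymbol{\gamma}^T\textbf{X}>\zeta)$ and show this derivative is strictly increasing on the support, using (P3) for continuity/differentiability and (P5) to guarantee positive mass between any two support points. You merely fill in details the paper leaves implicit (the Tonelli justification of the derivative formula and the neighbourhood argument for $\mathbb{P}(\zeta_1 < Y < \zeta_2)>0$), which is a welcome refinement rather than a different approach.
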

\begin{proof}
    Consider the derivative $\frac{\partial g}{\partial \zeta}(\boldsymbol{\gamma}, \zeta) = 1 - (1-\alpha)^{-1}\mathbb{E}(-\vmathbb{1}_{\{ -\boldsymbol{\gamma}^T \textbf{X} - \zeta > 0 \}}) = 1 - (1-\alpha)^{-1}\mathbb{P}( -\boldsymbol{\gamma}^T \textbf{X} > \zeta )$.\newline
    Let us $\zeta_1 < \zeta_2 \in \text{supp}(-\boldsymbol{\gamma}^T \textbf{X})$
    \begin{align*}
        \frac{\partial g}{\partial \zeta}(\boldsymbol{\gamma}, \zeta_2) - \frac{\partial g}{\partial \zeta}(\boldsymbol{\gamma}, \zeta_1) & = (1-\alpha)^{-1}\left(\mathbb{P}( -\boldsymbol{\gamma}^T \textbf{X} > \zeta_1 ) - \mathbb{P}( -\boldsymbol{\gamma}^T \textbf{X} > \zeta_2 )\right) \\
                                                                                                                                          & =(1-\alpha)^{-1}\mathbb{P}( \zeta_1 < -\boldsymbol{\gamma}^T \textbf{X} < \zeta_2) \geq 0
    \end{align*}
    this quantity is non negative on $\mathbb{R}$ and positive on $\text{supp}(-\boldsymbol{\gamma}^T \textbf{X})$ because $-\boldsymbol{\gamma}^T \textbf{X}$ has a density by (P3) and $\text{supp}(-\boldsymbol{\gamma}^T \textbf{X})$ is an interval thanks to (P5).
\end{proof}

Finally, we are going to prove Theorem~\ref{th:theorem_final_unicity}.
\begin{proof}[Proof of Theorem~\ref{th:theorem_final_unicity}]
    Using Theorem~\ref{th:thm_unicity}, let $\boldsymbol{\gamma}^*$ be the unique solution to Problem~\eqref{eq:P}. The equivalence between the Problems~\eqref{eq:Reward_True} and~\eqref{eq:Reward_True_F} implies that any $(\tilde{\boldsymbol{\gamma}}, \tilde{\zeta})$ solution to \eqref{eq:P-eq} satisfies $\tilde{\boldsymbol{\gamma}} = \boldsymbol{\gamma}^*$.

    Moreover $C_\alpha(\boldsymbol{\gamma}^*) = \underset{\boldsymbol{\zeta} \in \mathbb{R}}{\min} \ g(\boldsymbol{\gamma}^*, \zeta)$ and thanks to Proposition~\ref{th:prop_convexity}, $A_\alpha(\gamma^*)$ = $\{\zeta^*\}$. Therefore $(\boldsymbol{\gamma}^*,\zeta^*)$ is the unique solution of Problem~\eqref{eq:P-eq}.
\end{proof}

\subsection{Convergence rate}
In Shapiro, A., Dentcheva, D., Ruszczynski, A. (2009) \cite{shapiro2009:8} the rate of convergence is also obtained. We state a version (Theorem \ref{th:shapiro-5.11}), adapted to our setting. It is used to get the convergence rate in Theorem \ref{th:global_rate_cv}.

Consider some additional notations and assumption:
\begin{itemize}
    \item[(P6)] $\mathbb{E}(|\textbf{X}|^2) < + \infty$.
    \item[(P7)] $\mathbb{E}\left(\underset{(\boldsymbol{\gamma},\zeta) \in \mathcal{U}}{\sup}|L(\boldsymbol{\gamma},\zeta,\textbf{X})|^2\right)<+\infty$.
    \item[(P8)] There exists a measurable function $C_L : X \rightarrow \mathbf{R^+}$ such that $\mathbb{E}(C_L(\textbf{X})^2) < \infty$ and  $\forall u, u' \in \mathcal{U}, |L(u,\textbf{X})-L(u',\textbf{X})| \leq C_L(\textbf{X})|u-u'|$ almost surely.
    \item[(P9)] $(\boldsymbol{\gamma},\zeta) \rightarrow \ell(\boldsymbol{\gamma},\zeta)$ is differentiable.
\end{itemize}

\begin{remark}
    {
        For loss functions of the form $L(\boldsymbol{\gamma}, \zeta, \mathbf{X}) = \mathbb{E}\left(-\boldsymbol{\gamma}^T \mathbf{X} + \text{c} \cdot \left(\zeta + \frac{1}{1 - \alpha} (-\boldsymbol{\gamma}^T \mathbf{X} - \zeta)^+ \right) \right)$, or $L(\boldsymbol{\gamma}, \mathbf{X}) = \mathbb{E}\left(-U(\boldsymbol{\gamma}^T \mathbf{X}) \right)$ where $U$ is a convex utility function, it is straightforward to establish the existence of the associated function $C_L$.} 
\end{remark}

Theorem [5.11 p173 \cite{shapiro2009:8}] deals with problems of the following form:
\begin{equation}\label{eq:th5.11}
    \begin{aligned}
         & \operatorname*{\min}_{u \in \mathcal{U}} \ell_N(u) \ \text{ s.t. } \  g_{iN}(u) \leq 0, \ i = 1,..,p.
    \end{aligned}
\end{equation}

For Problem~\eqref{eq:P-eq} there are two kinds of constraints, one on the parameters $\boldsymbol{\gamma}, \zeta$ and one with the $g$ function. Consider $g_1: (\boldsymbol{\gamma},\zeta) \mapsto g(\boldsymbol{\gamma},\zeta) - K, g_2:\boldsymbol{\gamma} \mapsto \boldsymbol{\gamma}-\boldsymbol{\gamma}^{up}$ and $g_3:\boldsymbol{\gamma} \mapsto \boldsymbol{\gamma}^{low}-\boldsymbol{\gamma}$. We have $G_1: ((\boldsymbol{\gamma},\zeta),\textbf{X}) \mapsto G((\boldsymbol{\gamma},\zeta),\textbf{X}) - K = \zeta + (1-\alpha)^{-1}(-\boldsymbol{\gamma}^T \textbf{X} - \zeta)^+ - K, G_2 = g_2$ and $G_3 = g_3$. They are defined such that for any $u \in \mathcal{U} $ we have $g_i(u) = \mathbb{E}(G_i(u, \textbf{X}))$. We denote by $g_{iN}$ the approximated version of $g_i$.

Let $u\in \mathcal{U}$, then the empirical mean estimator $\ell_N(u)$ of $\ell(u)$ is unbiased and has a variance of $\sigma^2(u) := \mathbb{V}ar(L(u,\textbf{X}))$ which is assumed to be finite. Moreover the standard Central Limit Theorem gives:
\begin{equation*}
    N^{1/2} \left( \ell_N(u) - \ell(u) \right) \underset{N \rightarrow +\infty}{\overset{\mathcal{D}} {\rightarrow}} \mathcal{N}(0,\sigma^2(u))
\end{equation*}

where $\overset{\mathcal{D}} {\rightarrow}$ denotes the convergence in distribution. Let $Y(u)$ be a random variable such that $Y(u) \sim \mathcal{N}(0,\mathbb{V}ar(L(u,\textbf{X})))$. Let $Y_i(u)$ be a random variable such that $Y_i(u) \sim \mathcal{N}(0,\mathbb{V}ar(G_i(u,\textbf{X}))), \ i=1,2,3$.

Let us introduce the Lagrangian of Problem~\eqref{eq:th5.11}:
\begin{equation*}
    \mathcal{L}(u,\lambda) := \ell(u) + \sum_{i=1}^p \lambda_i g_i(u),
\end{equation*}
which can be rewritten as
\begin{equation}\label{eq:lagrange_def}
    \mathcal{L}(\boldsymbol{\gamma}, \zeta, (\lambda,\Bar{\boldsymbol{\mu}}, \underline{\boldsymbol{\mu}})) := \ell(\boldsymbol{\gamma}, \zeta) + \lambda (g(\boldsymbol{\gamma}, \zeta)-K) + \Bar{\boldsymbol{\mu}}.(\boldsymbol{\gamma}-\boldsymbol{\gamma}^{up}) - \underline{\boldsymbol{\mu}}.(\boldsymbol{\gamma}-\boldsymbol{\gamma}^{low}).
\end{equation}

In our context, Problem~\eqref{eq:P-eq} is convex because $\ell$ is convex. In addition the set $S$ of optimal solutions to the original problem~\eqref{eq:P} is non-empty and bounded. Moreover $\ell$ and $g_i$ are bounded on $\mathcal{U}$ and in effect in a neighbourhood of $S$. Thanks to these results, we can state a modified version of Theorem 5.11 p173 in \cite{shapiro2009:8} adapted to our framework.
\begin{thm}[Restatement of Theorem 5.11 p173 in \cite{shapiro2009:8}]\label{th:shapiro-5.11}\par
    Let $v_N$ be the optimal value of the above problem~\eqref{eq:th5.11}. Assume that the following assumptions are satisfied:

    \begin{enumerate}
        \item[(i)] There exists $\Bar{u} \in \mathcal{U}$ such that $g_i(\Bar{u}) < 0, \ i= 1,2,3$. (Slater's condition).
        \item[(ii)] The following two assumptions are satisfied for $H = L, G_i, i = 1,2,3$:
              \begin{enumerate}
                  \item[($A_1$)] There exists $\Tilde{u} \in \mathcal{U}$ such that $\mathbb{E}(H(u,\textbf{X})^2)<+\infty$.
                  \item[($A_2$)] There exists a measurable function $C : X \rightarrow \mathbf{R^+}$ such that $\mathbb{E}(C(\textbf{X})^2) < \infty$ and  $\forall u, u' \in \mathcal{U}, |H(u,\textbf{X})-H(u',\textbf{X})| \leq C(\textbf{X})|u-u'|$ almost surely.
              \end{enumerate}
    \end{enumerate}
    Then
    \begin{equation}\label{eq:result_rate_cv}
        N^{1/2} \left( v_N - v^* \right) \overset{\mathcal{D}} {\rightarrow} \inf_{u \in S} \sup_{\boldsymbol{\lambda \in \Lambda}} \left[ Y(u) + \sum_{i=1}^p \lambda_i Y_i(u) \right].
    \end{equation}

    A set $\Lambda$ of Lagrange vectors $\lambda = (\lambda_1, .., \lambda_p)$ is associated to each optimal solution $u^* \in S$ which satisfy,

    \begin{equation*}
        u^* \in \operatorname*{arg} \operatorname*{\min}_{\boldsymbol{u \in \mathcal{U}}} \mathcal{L}(u,\lambda), \ \lambda_i \leq 0 \ \text{and} \ \lambda_i g_i(u^*) = 0, \ i=1,..,p.
    \end{equation*}

    The set $\Lambda$ coincides with the set of optimal solutions to the dual of the original problem and is therefore the same for any optimal solution $u^* \in S$.

    Furthermore, if $S = \{ u^* \}$ and $\Lambda =\{\lambda^*\}$ are singletons, then
    \begin{equation*}
        N^{1/2} \left( v_N - v^* \right) \overset{\mathcal{D}} {\rightarrow} \mathcal{N}(0,\sigma^2)
    \end{equation*}
    with
    \begin{equation*}
        \sigma^2 := \mathbb{V}ar \left( L(u^*,\textbf{X}) + \sum_{i=1}^p \lambda^*_i G_i(u^*,\textbf{X}) \right).
    \end{equation*}
\end{thm}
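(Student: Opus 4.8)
The plan is to read this statement as a direct specialization of the general optimal-value asymptotics result of \cite{shapiro2009:8} to the concrete program \eqref{eq:th5.11}, so that the proof reduces to checking that the structural and stochastic hypotheses of that theorem hold in our framework and then transporting its conclusion verbatim. The preceding discussion has already matched Problem~\eqref{eq:P-eq} with \eqref{eq:th5.11}: the objective integrand is $L$, the three constraint integrands are $G_1$, $G_2 = g_2$, $G_3 = g_3$, the Lagrangian is \eqref{eq:lagrange_def}, and $\ell$ together with each $g_i$ is convex while $\mathcal{U}$ is compact and $S$ is non-empty and bounded. These facts supply the convexity and boundedness prerequisites, so only assumptions (i) and (ii) remain to be verified; the genuinely analytic content (the sensitivity analysis of the optimal value through the Lagrangian dual) is encapsulated in the cited result.

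For (i), Slater's condition, I would use that $\mathcal{U}^K$ has non-empty interior by standing assumption: this yields $\bar u$ with $g(\bar u) < K$, equivalently $g_1(\bar u) < 0$, and choosing $\bar u$ in the interior of the box $\Gamma$ makes $g_2(\bar u) < 0$ and $g_3(\bar u) < 0$ as well, producing a single strictly feasible point.

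For (ii), I would verify ($A_1$) and ($A_2$) for each $H \in \{L, G_1, G_2, G_3\}$. For $H = L$, square-integrability ($A_1$) is exactly (P7) and the Lipschitz bound ($A_2$) is exactly (P8). For the affine deterministic constraints $G_2$ and $G_3$ both conditions are immediate with a constant modulus. The only genuinely stochastic nonlinear constraint, and the one step I expect to require care, is
\begin{equation*}
    G_1((\boldsymbol{\gamma},\zeta),\textbf{X}) = \zeta + (1-\alpha)^{-1}(-\boldsymbol{\gamma}^T \textbf{X} - \zeta)^+ - K .
\end{equation*}
Here I would exploit that $x \mapsto x^+$ is $1$-Lipschitz to get $|G_1(u,\textbf{X}) - G_1(u',\textbf{X})| \le |\zeta-\zeta'| + (1-\alpha)^{-1}(|\textbf{X}|\,|\boldsymbol{\gamma}-\boldsymbol{\gamma}'| + |\zeta-\zeta'|)$, so that $C_{G_1}(\textbf{X}) := 1 + (1-\alpha)^{-1}(|\textbf{X}|+1)$ is an admissible Lipschitz function; ($A_2$) then holds because $\mathbb{E}(C_{G_1}(\textbf{X})^2) < +\infty$ by the second-moment hypothesis (P6), and the same estimate combined with (P6) gives ($A_1$) at any fixed $u$.

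With all hypotheses in place I would invoke \cite{shapiro2009:8} to obtain the limiting law \eqref{eq:result_rate_cv}. The multiplier set $\Lambda$ attached to $u^* \in S$ is described by stationarity $u^* \in \arg\min_{u \in \mathcal{U}} \mathcal{L}(u,\lambda)$, sign $\lambda_i \le 0$, and complementary slackness $\lambda_i g_i(u^*) = 0$; convexity together with the Slater point makes these conditions equivalent to optimality in the Lagrangian dual, so $\Lambda$ is the (non-empty, compact) dual solution set and is the same for every $u^* \in S$. Finally, in the singleton case $S = \{u^*\}$, $\Lambda = \{\lambda^*\}$, the outer infimum and inner supremum in \eqref{eq:result_rate_cv} collapse to the single value $Y(u^*) + \sum_{i} \lambda_i^* Y_i(u^*)$; since $Y(u^*)$ and the $Y_i(u^*)$ are the joint Gaussian limit furnished by the multivariate central limit theorem applied to $(L, G_1, G_2, G_3)(u^*, \textbf{X})$, this linear combination is centred Gaussian with variance $\mathbb{V}ar(L(u^*,\textbf{X}) + \sum_i \lambda_i^* G_i(u^*,\textbf{X})) = \sigma^2$, yielding the asserted normal limit.
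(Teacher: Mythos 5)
Your proposal is correct and takes essentially the same approach as the paper: the paper also treats Theorem~\ref{th:shapiro-5.11} as a direct specialization of Theorem 5.11 of \cite{shapiro2009:8}, justified by the convexity of the problem, the compactness of $\mathcal{U}$, the non-emptiness and boundedness of $S$, and the boundedness of $\ell$ and the $g_i$ in a neighbourhood of $S$, and it gives no further proof. Note only that (i) and (ii) are hypotheses of this statement rather than claims to be established here; your verification of Slater's condition via the non-empty interior of $\mathcal{U}^K$ and of ($A_1$)--($A_2$) via (P6)--(P8), including the Lipschitz bound for $G_1$, is precisely what the paper defers to the proof of Theorem~\ref{th:global_rate_cv}, where it appears in the same form.
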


\begin{lemma}\label{th:lemma_extremal_point}
    Given a set of constraints $\boldsymbol{\gamma}^{up}, \boldsymbol{\gamma}^{low}$ and $\boldsymbol{\gamma} \in \Gamma$, if (P1), (P2) hold, then there exists at most a single $\displaystyle K \leq \min_{\gamma \in \underset{\Gamma}{\arg\min} \ \ell_0} C_\alpha(\gamma)$ such that $\boldsymbol{\gamma}$ solves \eqref{eq:P} with $K$ as the capital limit.
\end{lemma}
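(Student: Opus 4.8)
The plan is to reduce the whole statement to one structural fact: whenever the fixed point $\boldsymbol{\gamma}$ solves~\eqref{eq:P} with a capital limit $K$ in the prescribed range $K \leq \min_{\gamma \in \underset{\Gamma}{\arg\min} \ \ell_0} C_\alpha(\gamma)$, the capital constraint must be \emph{active}, i.e. $C_\alpha(\boldsymbol{\gamma}) = K$. Once this is established, the conclusion is immediate: the number $C_\alpha(\boldsymbol{\gamma})$ is attached to the given $\boldsymbol{\gamma}$ and does not depend on $K$, so the identity $C_\alpha(\boldsymbol{\gamma}) = K$ pins $K$ down uniquely. Thus two distinct admissible limits $K_1 \neq K_2$ cannot both render $\boldsymbol{\gamma}$ optimal, which is precisely the claim that at most one such $K$ exists.

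For the strict part of the range, $K < \min_{\gamma \in \underset{\Gamma}{\arg\min} \ \ell_0} C_\alpha(\gamma)$, I would simply note that this inequality is exactly assumption (P4). Hence Lemma~\ref{th:lemma_reach_K}, which requires only (P1), (P2), and (P4), applies verbatim and yields $C_\alpha(\boldsymbol{\gamma}) = K$ with no extra work.

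The delicate point, and the step I expect to be the main obstacle, is the boundary case $K = m$ with $m := \min_{\gamma \in \underset{\Gamma}{\arg\min} \ \ell_0} C_\alpha(\gamma)$, since there (P4) fails and Lemma~\ref{th:lemma_reach_K} is no longer directly available. I would handle it by hand. First observe that $\underset{\Gamma}{\arg\min} \ \ell_0$ is nonempty, as $\ell_0$ is continuous on the compact box $\Gamma$ by (P1)--(P2); pick $\tilde{\boldsymbol{\gamma}}$ in this set attaining $C_\alpha(\tilde{\boldsymbol{\gamma}}) = m = K$. Then $\tilde{\boldsymbol{\gamma}}$ is feasible for~\eqref{eq:P} and realises the unconstrained minimum $\min_\Gamma \ell_0$, so the optimal value of~\eqref{eq:P} with limit $K$ equals $\min_\Gamma \ell_0$. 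If $\boldsymbol{\gamma}$ is optimal it must satisfy $\ell_0(\boldsymbol{\gamma}) = \min_\Gamma \ell_0$, hence $\boldsymbol{\gamma} \in \underset{\Gamma}{\arg\min} \ \ell_0$; by the very definition of $m$ this forces $C_\alpha(\boldsymbol{\gamma}) \geq m = K$, while feasibility gives $C_\alpha(\boldsymbol{\gamma}) \leq K$. Therefore $C_\alpha(\boldsymbol{\gamma}) = K$ here as well.

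Combining the two regimes, in every admissible case one gets $K = C_\alpha(\boldsymbol{\gamma})$, so at most one value of $K$ in the stated range can make the fixed $\boldsymbol{\gamma}$ a solution of~\eqref{eq:P}, which completes the argument. The only genuine subtlety is the equality case above; the strict range is a direct invocation of Lemma~\ref{th:lemma_reach_K}, and the final bookkeeping is routine.
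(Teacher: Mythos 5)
Your proof is correct and follows essentially the same route as the paper: the paper also derives this lemma as a direct corollary of Lemma~\ref{th:lemma_reach_K}, observing that any admissible $K$ for which the fixed $\boldsymbol{\gamma}$ is optimal must satisfy $K = C_\alpha(\boldsymbol{\gamma})$, which pins $K$ down uniquely. The only difference is that the paper's one-line argument applies Lemma~\ref{th:lemma_reach_K} across the whole range $K \leq \min_{\gamma \in \underset{\Gamma}{\arg\min}\,\ell_0} C_\alpha(\gamma)$, silently passing over the boundary value $K = m$ where (P4) fails; your separate treatment of that case (showing the optimum must then lie in $\underset{\Gamma}{\arg\min}\,\ell_0$, so that $C_\alpha(\boldsymbol{\gamma}) \geq m = K$ combines with feasibility $C_\alpha(\boldsymbol{\gamma}) \leq K$ to force activity) is a correct patch of that small gap.
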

\begin{proof}
    This is a corollary of Lemma \ref{th:lemma_reach_K}, because if $\boldsymbol{\gamma}$ is solution of \eqref{eq:P} with $K' \leq C_\alpha(\underset{\Gamma}{\arg\min} \ \ell_0)$, then $K' = C_\alpha(\boldsymbol{\gamma})$ and so there can only be one $K'$.
\end{proof}

\begin{thm}\label{th:global_rate_cv}
    If (P1) to (P9) hold, denote by $u^* = (\boldsymbol{\gamma}^*, \zeta^*)$ the unique solution to Problem~\eqref{eq:P-eq}, then
    \begin{equation*}
        N^{1/2} \left( v_N - v^* \right) \overset{\mathcal{D}} {\rightarrow} \mathcal{N}(0,\sigma^2)
    \end{equation*}
    with
    \begin{equation*}
        \sigma^2 := \mathbb{V}ar \left( L(u^*,\textbf{X}) + \lambda^*( G(u^*,\textbf{X})-K) \right)
    \end{equation*}
    and $\lambda^*$ is the maximum value of the Lagrange multiplier $\lambda$ associated with the capital constraint.
\end{thm}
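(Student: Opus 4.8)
The plan is to derive Theorem~\ref{th:global_rate_cv} as a direct application of Theorem~\ref{th:shapiro-5.11} to Problem~\eqref{eq:P-eq}, cast in the form~\eqref{eq:th5.11} with the constraint functions $g_1,g_2,g_3$ (and their integrands $G_1,G_2,G_3$) already introduced. Essentially all of the work lies in verifying the hypotheses (i), ($A_1$) and ($A_2$) of that theorem, after which it remains to recognise the limit~\eqref{eq:result_rate_cv} as a single Gaussian; this last identification is where the uniqueness results feed in.

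First I would check Slater's condition (i). Since $\mathcal{U}^K$ is assumed to have a non-empty interior, there is $\Bar{u}=(\Bar{\boldsymbol{\gamma}},\Bar{\zeta})$ with $g(\Bar{u})<K$ and $\gamma_i^{low}<\Bar\gamma_i<\gamma_i^{up}$ for every $i$, i.e. $g_1(\Bar{u})<0$, $g_2(\Bar{u})<0$ and $g_3(\Bar{u})<0$. Next, ($A_1$) and ($A_2$) must hold for each $H\in\{L,G_1,G_2,G_3\}$. For $H=L$ these are precisely (P7) and (P8). For $H=G_1$, with $G_1((\boldsymbol{\gamma},\zeta),\textbf{X})=\zeta+(1-\alpha)^{-1}(-\boldsymbol{\gamma}^T\textbf{X}-\zeta)^+-K$, square integrability ($A_1$) follows from (P6) via $|\boldsymbol{\gamma}^T\textbf{X}|\le|\boldsymbol{\gamma}|\,|\textbf{X}|$ and the boundedness of $\mathcal{U}$, while the Lipschitz bound ($A_2$) follows because $t\mapsto t^+$ is $1$-Lipschitz and $(\boldsymbol{\gamma},\zeta)\mapsto-\boldsymbol{\gamma}^T\textbf{X}-\zeta$ is Lipschitz with a constant controlled by $1+|\textbf{X}|$; one may take $C(\textbf{X})=1+(1-\alpha)^{-1}(1+|\textbf{X}|)$, which is square integrable by (P6). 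Finally, $G_2$ and $G_3$ are deterministic affine maps of $\boldsymbol{\gamma}$, so both conditions hold trivially.

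Applying Theorem~\ref{th:shapiro-5.11} then yields the convergence~\eqref{eq:result_rate_cv}, and I would collapse its right-hand side to a Gaussian in three steps. By Theorem~\ref{th:theorem_final_unicity} (under (P1)--(P5)) the solution set is the singleton $S=\{u^*\}$, so the outer infimum disappears and the limit becomes $\sup_{\lambda\in\Lambda}[\,Y(u^*)+\sum_i\lambda_i Y_i(u^*)\,]$. Since $G_2$ and $G_3$ do not depend on $\textbf{X}$, the associated variances vanish and the corresponding $Y_i(u^*)=0$ almost surely; only the capital-constraint term survives, reducing the limit to $\sup_{\lambda\in\Lambda}[\,Y(u^*)+\lambda_1 Y_1(u^*)\,]$, where $\lambda_1$ is the capital multiplier. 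If $\lambda_1$ takes a single value $\lambda^*$ over all dual-optimal $\lambda\in\Lambda$, this supremum collapses to $Y(u^*)+\lambda^* Y_1(u^*)$; because $Y(u^*)$ and $Y_1(u^*)$ are jointly Gaussian with the covariance of $(L(u^*,\textbf{X}),G_1(u^*,\textbf{X}))$ and $K$ is a constant, this variable is $\mathcal{N}(0,\sigma^2)$ with $\sigma^2=\mathbb{V}ar(L(u^*,\textbf{X})+\lambda^*(G(u^*,\textbf{X})-K))$, as claimed.

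I expect the last step---uniqueness of the capital multiplier---to be the main obstacle. It should follow from the KKT stationarity at $u^*$: by Proposition~\ref{th:prop_convexity}, $\zeta^*$ is the interior minimiser of $g(\boldsymbol{\gamma}^*,\cdot)$, so $\frac{\partial g}{\partial\zeta}(\boldsymbol{\gamma}^*,\zeta^*)=0$ and the $\zeta$-component of stationarity holds automatically, leaving the $\boldsymbol{\gamma}$-components $\nabla_{\boldsymbol{\gamma}}\ell(u^*)+\lambda_1\nabla_{\boldsymbol{\gamma}}g(\boldsymbol{\gamma}^*,\zeta^*)+(\text{active box terms})=0$ to determine $\lambda_1$. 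When $\nabla_{\boldsymbol{\gamma}}g(\boldsymbol{\gamma}^*,\zeta^*)\neq0$ and the active-constraint gradients are linearly independent, $\lambda_1$ is pinned down uniquely. The delicate case is when several box faces are active simultaneously and could be traded off against the capital multiplier; ruling this out (or selecting the largest admissible value, as the phrase ``maximum value'' in the statement indicates) is precisely where a constraint-qualification or strict-complementarity argument is required to make $\lambda^*$ well defined.
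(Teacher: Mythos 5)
Your route is the paper's route: you cast Problem~\eqref{eq:P-eq} in the form~\eqref{eq:th5.11}, verify Slater's condition from the non-empty interior of $\mathcal{U}^K$, obtain ($A_1$), ($A_2$) for $L$ from (P7)--(P8) and for $G_1$ from (P6) (your explicit constant $C(\textbf{X})=1+(1-\alpha)^{-1}(1+|\textbf{X}|)$ is actually more detailed than the paper, which only asserts this step is easy), dismiss $G_2,G_3$ as deterministic so $Y_2=Y_3=0$, and collapse the outer infimum in~\eqref{eq:result_rate_cv} using the singleton $S=\{u^*\}$ from Theorem~\ref{th:theorem_final_unicity}. Up to that point the proposal is correct and essentially identical to the paper's proof.

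The genuine gap is the last step: you leave the well-definedness of $\lambda^*$ as a conditional statement (``when the active-constraint gradients are linearly independent, $\lambda_1$ is pinned down uniquely''), whereas this is exactly what the theorem requires you to establish. The paper supplies a concrete argument. From $\zeta$-stationarity and $\mathbb{P}(-\boldsymbol{\gamma}^{*T}\textbf{X}-\zeta^*\geq 0)=1-\alpha$ (since $\zeta^*\in A_\alpha(\boldsymbol{\gamma}^*)$) one gets $\frac{\partial \ell}{\partial \zeta}(u^*)=0$, which is your observation that the $\zeta$-equation carries no information on $\lambda$. Then, taking $\tilde{\boldsymbol{\gamma}}\in\underset{\Gamma}{\arg\min}\,\ell_0$, assumption (P4) gives $g(\tilde u)>K$, and convexity of $\ell$ yields $\nabla\ell(u^*)\cdot(\tilde u-u^*)<0$; combined with the vanishing $\zeta$-derivative this produces an index $j$ with $(\tilde{\gamma}_j-\gamma^*_j)\frac{\partial \ell}{\partial \gamma_j}(u^*)<0$, so the set $H$ of coordinates with non-zero partial derivative is non-empty. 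For any $j\in H$ strictly between its box bounds, $\bar{\mu}_j=\underline{\mu}_j=0$, and~\eqref{eq:Lagrangian_partial_gamma} forces $\mathbb{E}(-\textbf{X}_j\mid -\boldsymbol{\gamma}^{*T}\textbf{X}-\zeta^*\geq 0)\neq 0$, giving the unique value $\lambda=-\frac{\partial \ell}{\partial \gamma_j}(u^*)\,\mathbb{E}(-\textbf{X}_j\mid -\boldsymbol{\gamma}^{*T}\textbf{X}-\zeta^*\geq 0)^{-1}$. The degenerate configuration you flagged---every $j\in H$ sitting on a box face---is, notably, not resolved by a constraint qualification in the paper either: the authors explicitly exclude it as non-generic, arguing via Lemma~\ref{th:lemma_extremal_point} that perturbing $K$ must move some coordinate $\gamma_j$, $j\in H$, off its bound. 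So your diagnosis of where the difficulty lies is accurate, but to complete the proof you need at minimum the $H\neq\emptyset$ construction and the explicit solve for $\lambda$, together with either the paper's genericity exclusion or an additional assumption covering the all-bounds case.
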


\begin{proof}[Proof of Theorem \ref{th:global_rate_cv}]
    The uniqueness of $u^*$ is given by (P1) to (P5) using Theorem \ref{th:theorem_final_unicity}.
    Let us look at the conditions for applying Theorem~\ref{th:shapiro-5.11}. Condition (i) holds because we assumed that the interior of $\mathcal{U}^K$ is non-empty which means that Slater's condition is satisfied for $g_1,g_2,g_3$. The functions $G_2,G_3$ are deterministic and linear functions of $\boldsymbol{\gamma}$ so ($A_1$) and ($A_2$) hold. Conditions ($A_1$) and ($A_2$) hold for $L$ by (P7) and (P8) and for $G_1$ it is easy to prove with (P6).

    We can apply Theorem~\ref{th:shapiro-5.11} to obtain~\eqref{eq:result_rate_cv}. The functions $G_2$ and $G_3$ are deterministics so $Y_2 = Y_3 = 0$, therefore $ \sum_{i=1}^p \lambda_i Y_i((\boldsymbol{\gamma}, \zeta)) = \lambda_1 Y_1((\boldsymbol{\gamma}, \zeta))$. Thanks to the uniqueness of $u^*$, we only need the maximum value of $\lambda_1$ or the uniqueness of $\lambda_1$ to apply the second part of Theorem~\ref{th:shapiro-5.11}.

    Consider the Lagrange multiplier~\eqref{eq:lagrange_def} and recall that $g(\boldsymbol{\gamma}, \zeta) = \zeta + (1-\alpha)^{-1} \mathbb{E}\left[ (-\boldsymbol{\gamma}^T \textbf{X} - \zeta)^+ \right]$.

    The set $\Lambda$ of Lagrange multipliers vectors $(\lambda, \Bar{\boldsymbol{\mu}}, \underline{\boldsymbol{\mu}})$ is satisfying the optimality conditions. The optimal solution $(\boldsymbol{\gamma}^*, \zeta^*)$ satisfies $\nabla_{\boldsymbol{\gamma},\zeta} \mathcal{L}(\boldsymbol{\gamma}^*, \zeta^*) = 0$ and the following optimality conditions $\lambda (g(\boldsymbol{\gamma}^*, \zeta^*)-K) = 0$, $\Bar{\boldsymbol{\mu}}(\boldsymbol{\gamma}^*-\boldsymbol{\gamma}^{up}) = 0$ and $\underline{\boldsymbol{\mu}}(\boldsymbol{\gamma}^*-\boldsymbol{\gamma}^{low}) = 0$. Note that $\mathbb{P}(-\boldsymbol{\gamma}^{* T}\textbf{X}-\zeta^* \geq 0 ) = 1-\alpha$ because $\zeta^* \in A_\alpha(\gamma^*)$.

    Consider
    \begin{align*}
        \frac{\partial \mathcal{L}}{\partial \zeta}(\boldsymbol{\gamma}^*, \zeta^*)
         & = \frac{\partial \ell}{\partial \zeta}(\boldsymbol{\gamma}^*, \zeta^*) + \lambda \left( 1 + (1-\alpha)^{-1} \mathbb{E}(-\vmathbb{1}_{\{ -\boldsymbol{\gamma}^{* T}\textbf{X}-\zeta^* \geq 0 \}}) \right) \\
         & = \frac{\partial \ell}{\partial \zeta}(\boldsymbol{\gamma}^*, \zeta^*) + \lambda \left( 1 - (1-\alpha)^{-1} \mathbb{P}( -\boldsymbol{\gamma}^{* T}\textbf{X}-\zeta^* \geq 0 ) \right)                    \\
         & =\frac{\partial \ell}{\partial \zeta}(\boldsymbol{\gamma}^*, \zeta^*).
    \end{align*}

    Thus,
    \begin{equation}\label{eq:lagrange_L_by_zeta_zero}
        \frac{\partial \ell}{\partial \zeta}(\boldsymbol{\gamma}^*, \zeta^*) = 0.
    \end{equation}
    Consider
    \begin{align*}
        \frac{\partial \mathcal{L}}{\partial \boldsymbol{\gamma}}(\boldsymbol{\gamma}^*, \zeta^*)
         & = \frac{\partial \ell}{\partial \boldsymbol{\gamma}}(\boldsymbol{\gamma}^*, \zeta^*) + \lambda(1-\alpha)^{-1} \mathbb{E}(-\textbf{X}. \vmathbb{1}_{\{ -\boldsymbol{\gamma}^{* T}\textbf{X}-\zeta^* \geq 0 \}}) + \Bar{\boldsymbol{\mu}} - \underline{\boldsymbol{\mu}}                                                \\
         & = \frac{\partial \ell}{\partial \boldsymbol{\gamma}}(\boldsymbol{\gamma}^*, \zeta^*) + \lambda(1-\alpha)^{-1} \mathbb{E}(-\textbf{X} | -\boldsymbol{\gamma}^{* T}\textbf{X}-\zeta^* \geq 0 )\mathbb{P}(-\boldsymbol{\gamma}^{* T}\textbf{X}-\zeta^* \geq 0 ) + \Bar{\boldsymbol{\mu}} - \underline{\boldsymbol{\mu}}.
    \end{align*}

    So for each $i=1,..,d$:
    \begin{equation}\label{eq:Lagrangian_partial_gamma}
        \frac{\partial \ell}{\partial \boldsymbol{\gamma}_i}(\boldsymbol{\gamma}^*, \zeta^*) + \lambda \mathbb{E}(- \textbf{X}_i | -\boldsymbol{\gamma}^{* T}\textbf{X}-\zeta^* \geq 0) + \Bar{\mu_i} - \underline{\mu_i} = 0.
    \end{equation}

    We want to find $j \in \{1, \ldots, d \}$ such that  $\frac{\partial \ell}{\partial \boldsymbol{\gamma}_j}(\boldsymbol{\gamma}^*, \zeta^*) \neq 0$ and $\Bar{\mu_j} = \underline{\mu_j} = 0$.
    Let $H \subset \{1,\ldots ,d\}$ such that $\forall j \in H$, $\frac{\partial \ell}{\partial \boldsymbol{\gamma}_j}(\boldsymbol{\gamma}^*, \zeta^*) \neq 0$.

    Let $\tilde{\boldsymbol{\gamma}} \in \underset{\Gamma}{\arg\min} \ \ell_0$ and $\tilde{\zeta} \in V_\alpha(\tilde{\boldsymbol{\gamma}})$, so $g(\tilde{u}) = C_\alpha (\tilde{\boldsymbol{\gamma}})> K$ by (P4), with $\tilde{u} = (\tilde{\boldsymbol{\gamma}}, \tilde{\zeta})$. Hence $\nabla \ell(u^*) \cdot (\Tilde{u}-u^*) <0$  thanks to the convexity of $\ell$ by (P1) and (P2). We use \eqref{eq:lagrange_L_by_zeta_zero} to obtain
    \begin{equation*}
        \nabla \ell(u^*).(\Tilde{u}-u^*) = \nabla \ell(\boldsymbol{\gamma}^*, \zeta^*) \cdot (\tilde{\boldsymbol{\gamma}}-\boldsymbol{\gamma}^*)< 0.
    \end{equation*}
    Thus, there exists $j \in \{1,\ldots, d\}$ such that $(\tilde{\boldsymbol{\gamma}}_j - \boldsymbol{\gamma}^*_j) \frac{\partial \ell}{\partial \boldsymbol{\gamma}_j}(\boldsymbol{\gamma}^*, \zeta^*) < 0$ and so $H \neq \emptyset$.

    There are three possible cases, with the optimality conditions, for $j\in H$:\\
    1. $\boldsymbol{\gamma}_j = \gamma_{j}^{up}$ implies $\underline{\boldsymbol{\mu}_j} = 0$ and  $\Bar{\boldsymbol{\mu}_j} \geq 0$;\\
    2. $\boldsymbol{\gamma}_j = \gamma_{j}^{low}$ implies $\underline{\boldsymbol{\mu}_j} \geq 0$ and  $\Bar{\boldsymbol{\mu}_j} = 0$;\\
    3. $ 0 < \gamma_{j}^{low} < \boldsymbol{\gamma}_j < \gamma_{j}^{up}$ implies $\underline{\boldsymbol{\mu}_j} = \Bar{\boldsymbol{\mu}_j} = 0$.

    If for all $j \in H$, $\boldsymbol{\gamma}_j = \gamma_{j}^{up}$ or $\boldsymbol{\gamma}_j = \gamma_{j}^{low}$ then we cannot conclude but we decided to exclude this case because it is unlikely in a sense that for a sufficiently small $\epsilon > 0$, for all $K'$ such that $|K-K'|<\epsilon$, the corresponding optimal solution $\boldsymbol{\gamma}^*_{K'}$ is not in this case. Indeed, thanks to \ref{th:lemma_extremal_point} the optimal solution cannot be the same as for $K$, and for all $j \notin H$, $\frac{\partial \ell}{\partial \boldsymbol{\gamma}_j}(\boldsymbol{\gamma}^*, \zeta^*) = 0$. Therefore, it is a $\gamma_i$ with an $i \in H$, so in case 1 or 2, which must move, which places it directly in case 3.

    So there exists $j\in H$ in the case 3, knowing that $\frac{\partial \ell}{\partial \boldsymbol{\gamma}_j}(\boldsymbol{\gamma}^*, \zeta^*) \neq 0$, $\Bar{\mu_j} = \underline{\mu_j} = 0$ and thanks to~\eqref{eq:Lagrangian_partial_gamma}, $\mathbb{E}(- \textbf{X}_j | -\boldsymbol{\gamma}^{* T}\textbf{X}-\zeta^* \geq 0) \neq 0$. Thus $\lambda = -\frac{\partial \ell}{\partial \boldsymbol{\gamma}_j}(\boldsymbol{\gamma}^*, \zeta^*) \mathbb{E}(- \textbf{X}_j | -\boldsymbol{\gamma}^{* T}\textbf{X}-\zeta^* \geq 0)^{-1}$. Therefore the Lagrange multiplier $\lambda$ is unique.

    Moreover, denoting this Lagrange multiplier by  $\lambda^*$, $\mathbb{V}ar \left( L(u^*,\textbf{X}) + \sum_{i=1}^p \lambda^*_i G_i(u^*,\textbf{X}) \right)$ becomes  $\mathbb{V}ar \left( L(u^*,\textbf{X}) + \lambda^*( G(u^*,\textbf{X})-K) \right)$.
\end{proof}

\section{Numerical study}\label{section:numerical_study}

This section illustrates the results of section \ref{section:main_results}, in particular convergence and convergence rate. When it was possible to compare the results of the SAA method with an explicit and computational result. We first consider the Gaussian case for which an explicit solution is known for the mean-CVaR, where all the asset returns follow a Gaussian distribution. This first case will allow us to validate the algorithm by comparing it with theoretical solutions. Second, we consider a more realistic case where all the asset returns do not follow the same distribution. This last case is closed to what can be found in (re)insurance.

All the examples are solved within a few seconds and are carried out using Python libraries (mainly Scipy, Numpy). This calculation time can be compared to solving using linear programming, see \cite{Uryasev2002:9}, which takes a long time as soon as the data sample is large because each element in the sample adds a dimension to the resolution.

\subsection{Context}

In what follows, we consider $L_0(\boldsymbol{\gamma},\textbf{X}) = -\boldsymbol{\gamma}^T \textbf{X}$ with a risk level $\alpha = 0.99$.

Note that $L_0$ is convex and continuous so (P1) hold.

Let us rewrite $L(\boldsymbol{\gamma},\zeta, \textbf{X}) = -\boldsymbol{\gamma}^T \textbf{X}$ and $\ell_N(\boldsymbol{\gamma},\zeta) := -\frac{\boldsymbol{\gamma}^T}{N} \sum_{j=1}^N \textbf{X}^{(j)}$.

Once we have chosen the asset distributions and other parameter for the problem  ($K, \gamma^{up}, \gamma^{low}$), for a given $N$, we run k optimizations, i.e. solve the SAA problem \eqref{eq:P-eq-SAA}, on k different $\textbf{X}$ i.i.d. sample of size N to obtain a sample of size k of $\boldsymbol{\gamma}_{N}, v_{N}$.

\subsection{Gaussian case}\label{subsection:Gaussian}
Let us consider the Gaussian case, $\textbf{X} \sim \mathcal{N}_d(\textbf{1}_d,\Sigma)$, with $\Sigma$ a positive definite matrix. Note that $\boldsymbol{\gamma}^T \textbf{X} \sim \mathcal{N}(\boldsymbol{\gamma}^T \textbf{1}_d, \boldsymbol{\gamma}^T \Sigma \boldsymbol{\gamma})$ and $CVaR_\alpha(-\boldsymbol{\gamma}^T \textbf{X}) = - \boldsymbol{\gamma}^T \textbf{1}_d + T_Z\sqrt{\boldsymbol{\gamma}^T \Sigma \boldsymbol{\gamma}} $, with $T_Z = CVaR_\alpha(\mathcal{N}(0,1))$.

If  $\tilde{\textbf{X}} \sim \mathcal{N}(\tilde{\mu},\tilde{\Sigma})$, then $\boldsymbol{\gamma}^T \tilde{\textbf{X}} = (\boldsymbol{\gamma}*\tilde{\mu})^T \textbf{X}$ and with $\textbf{X} \sim \mathcal{N}(\textbf{1}_d,\Sigma)$ and $\Sigma = \tilde{\mu}^T \tilde{\Sigma}  \tilde{\mu}$. Hence considering $\mathbb{E}(X) = \textbf{1}_d$ is sufficiently general.

Denote $\boldsymbol{\gamma}^*$ the optimal solution to~\eqref{eq:P}, if $\gamma^{up}_i < \gamma_i^* < \gamma^{low}_i $ for $i=1,\dots,d$, then:
\begin{equation}\label{eq:Gaussian_true_value}
    \boldsymbol{\gamma}^* = K \left( \frac{1}{T_Z-\sqrt{\textbf{1}_d^T \Sigma^{-1} \textbf{1}_d}} \right) \frac{\Sigma^{-1} \textbf{1}_d }{\sqrt{\textbf{1}_d^T \Sigma^{-1} \textbf{1}_d}},
\end{equation}
See Appendix 2 for a proof.

Properties (P2), (P5) and (P6) hold immediately. We choose the parameters bounds for (P4) to hold: $C_\alpha (\boldsymbol{\gamma})>K, \forall \boldsymbol{\gamma} \in \underset{\boldsymbol{\gamma} \in \Gamma}{\arg\min} \ \ell_0(\boldsymbol{\gamma})$. If not the capital limit is not reached. Note that (P7) to (P9) hold too.

We set $d=5$, $\alpha = 0.99$, $K=100$, $k=5000$ runs and covariance matrix $\Sigma$ picked at random.

\begin{equation*}
    \Sigma =
    \begin{pmatrix}
        4.490  & -0.377 & 0.059  & 0.585  & -1.709 \\
        -0.377 & 6.109  & -1.300 & 0.229  & 1.380  \\
        0.059  & -1.300 & 7.059  & -1.401 & 0.210  \\
        0.585  & 0.229  & -1.401 & 8.400  & -1.250 \\
        -1.709 & 1.380  & 0.210  & -1.250 & 19.934 \\
    \end{pmatrix}.
\end{equation*}

\subsubsection{Without bounds}\label{subsection:gauss_without_bound}
This case is without bounds in the sense that we set $\gamma^{up}, \gamma^{low}$ to have $\gamma^{up}_i < \gamma_i^* < \gamma^{low}_i $, in other words in such a way that they are never reached. So, we use \eqref{eq:Gaussian_true_value} to compute $\gamma^* = (15.113, 12.576, 12.610,  8.702,  3.958)$ and $v^* = -52.960$.

We also compute the standard estimator of $\sigma^2 = \mathbb{V}ar \left( - \boldsymbol{\gamma}^{*T} \textbf{X} + \lambda^*( \zeta^* + \frac{1}{1-\alpha}(- \boldsymbol{\gamma}^{*T}\textbf{X} -\zeta^*)^+ -K) \right) $ from Theorem \ref{th:global_rate_cv} and we obtain $\hat{\sigma}= 149.860$.

Figures \ref{fig:gamma_1} to \ref{fig:gamma_5} and \ref{fig:boxplot_vn} illustrate Theorem \ref{th:thm_final_convergence}, showing that the solution to the SAA formulation \eqref{eq:P-eq-SAA} converges to the original solution found with \eqref{eq:Gaussian_true_value}. Indeed, the convergence of $\boldsymbol{\gamma}_N$ and $v_N$ is observed and a sample size of $10^5$ is sufficient to obtain an acceptable error, here 2\% on each components of $\boldsymbol{\gamma}^*$ and 1\% on $v^*$.

Figure \ref{fig:std_vn} illustrates a convergence rate of $\sqrt{N}$ seen in Theorem \ref{th:global_rate_cv} since the slope of each curve is $-0.5$. We see empirically the same results for the parameters with figure \ref{fig:std_gamma_1} to \ref{fig:std_gamma_5}.

The histogram of Figure \ref{fig:gauss_d_3} has a Gaussian shape with parameters $(\mu,\sigma) = -9.15, 160.33$. This illustrates Theorem \ref{th:global_rate_cv} with $\hat{\sigma} = 149.860$. From Figure \ref{fig:gauss_d_2_bis}, we may infer the the parameters have an asymptotic gaussian behavior. 

\begin{figure}[H]
    \centering
    \begin{subfigure}{0.3\textwidth}
        \includegraphics[width=\textwidth]{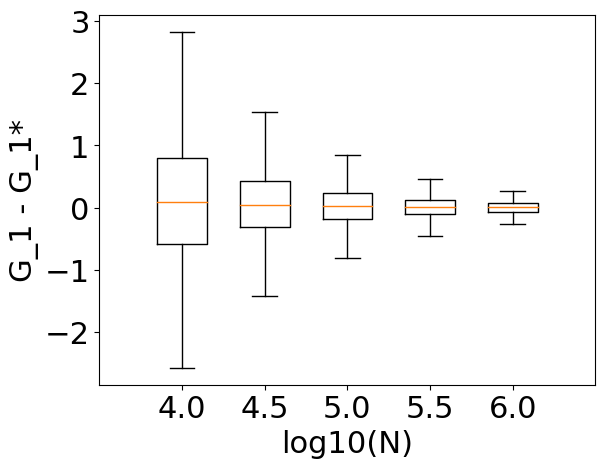}
        \caption{$\boldsymbol{\gamma}_{N,1}-\boldsymbol{\gamma}^{*}_1$}
        \label{fig:gamma_1}
    \end{subfigure}
    \hfill
    \begin{subfigure}{0.3\textwidth}
        \includegraphics[width=\textwidth]{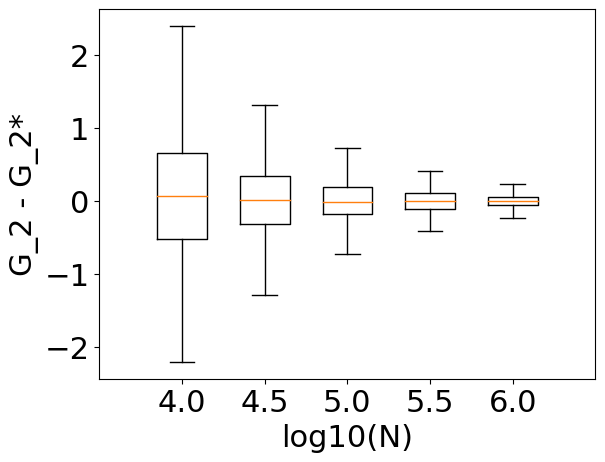}
        \caption{$\boldsymbol{\gamma}_{N,2}-\boldsymbol{\gamma}^{*}_2$}
        \label{fig:gamma_2}
    \end{subfigure}
    \hfill
    \begin{subfigure}{0.3\textwidth}
        \includegraphics[width=\textwidth]{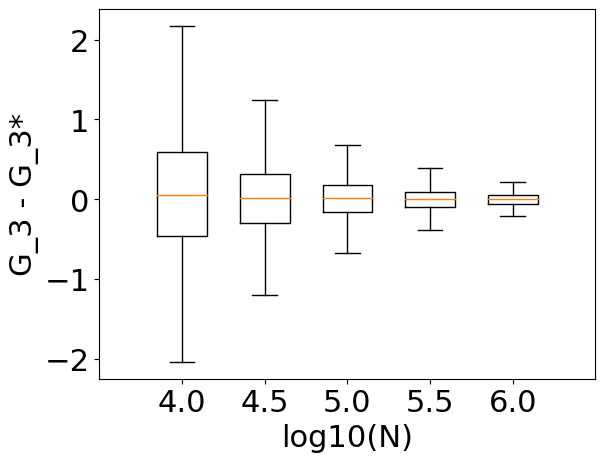}
        \caption{$\boldsymbol{\gamma}_{N,3}-\boldsymbol{\gamma}^{*}_3$}
        \label{fig:gamma_3}
    \end{subfigure}
    \begin{subfigure}{0.3\textwidth}
        \includegraphics[width=\textwidth]{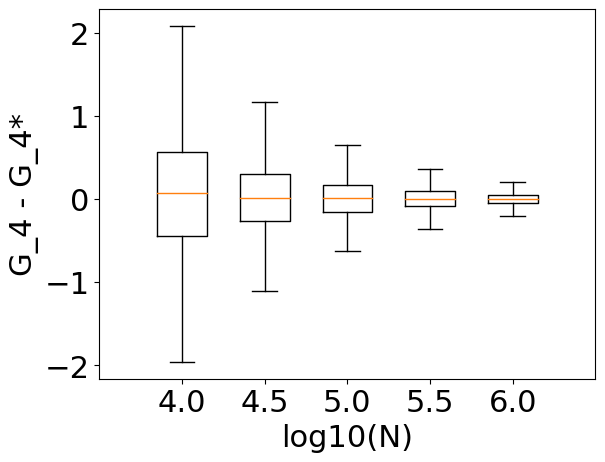}
        \caption{$\boldsymbol{\gamma}_{N,4}-\boldsymbol{\gamma}^{*}_4$}
        \label{fig:gamma_4}
    \end{subfigure}
    \begin{subfigure}{0.3\textwidth}
        \includegraphics[width=\textwidth]{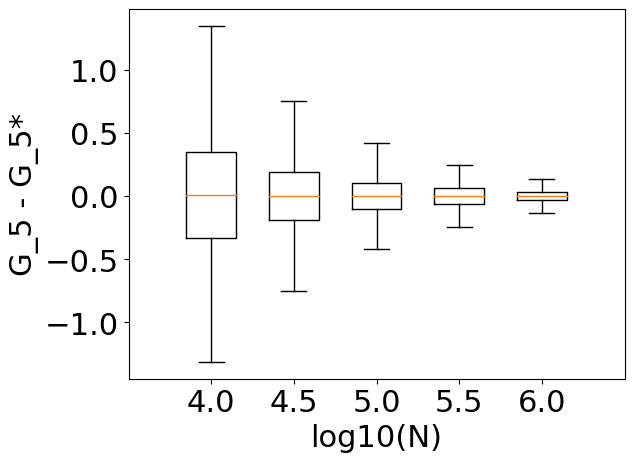}
        \caption{$\boldsymbol{\gamma}_{N,5}-\boldsymbol{\gamma}^{*}_5$}
        \label{fig:gamma_5}
    \end{subfigure}

    \caption{Component-wise Boxplots of $\boldsymbol{\gamma}_{N}-\boldsymbol{\gamma}^{*}$. (boxplot standard parameters: Q1-1.5IQR,Q1,median,Q3,Q3+1.5IQR)}
    \label{fig:gauss_d_1}
\end{figure}

\begin{figure}[H]
    \centering
    \begin{subfigure}{0.32\textwidth}
        \includegraphics[width=\textwidth]{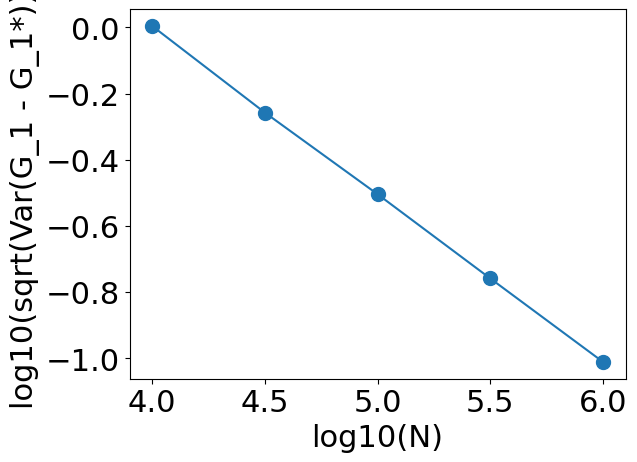}
        \caption{$\log_{10} \left( \sqrt{Var(\gamma_{N,1}-\gamma^*_1)}\right)$}
        \label{fig:std_gamma_1}
    \end{subfigure}
    \hfill
    \begin{subfigure}{0.32\textwidth}
        \includegraphics[width=\textwidth]{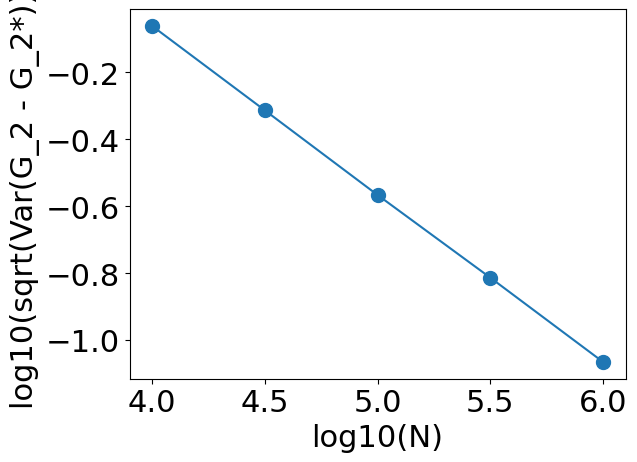}
        \caption{$\log_{10} \left( \sqrt{Var(\gamma_{N,2}-\gamma^*_2)}\right)$}
        \label{fig:std_gamma_2}
    \end{subfigure}
    \hfill
    \begin{subfigure}{0.32\textwidth}
        \includegraphics[width=\textwidth]{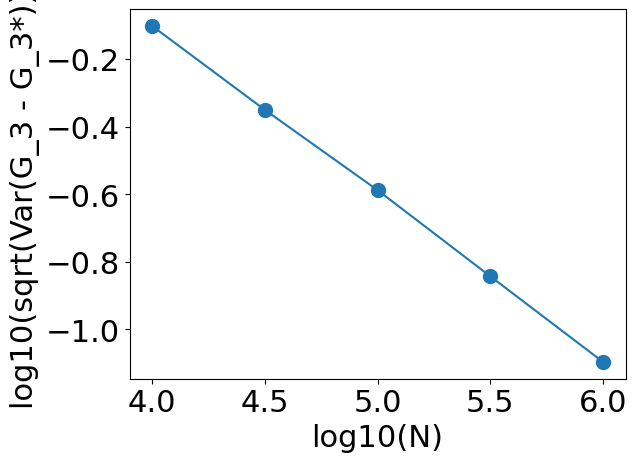}
        \caption{$\log_{10} \left( \sqrt{Var(\gamma_{N,3}-\gamma^*_3)}\right)$}
        \label{fig:std_gamma_3}
    \end{subfigure}
    \begin{subfigure}{0.32\textwidth}
        \includegraphics[width=\textwidth]{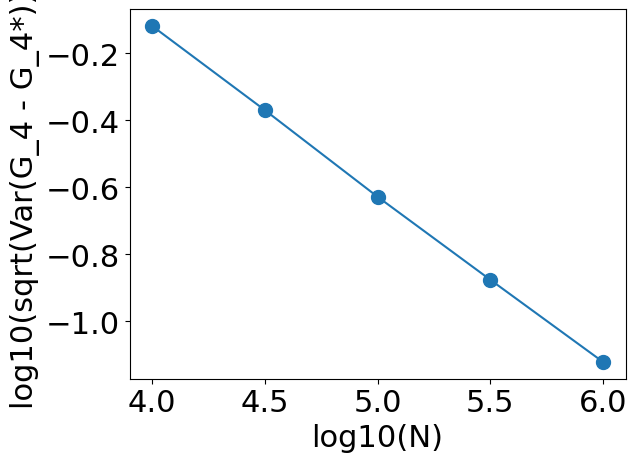}
        \caption{$\log_{10} \left( \sqrt{Var(\gamma_{N,4}-\gamma^*_4)}\right)$}
        \label{fig:std_gamma_4}
    \end{subfigure}
    \begin{subfigure}{0.32\textwidth}
        \includegraphics[width=\textwidth]{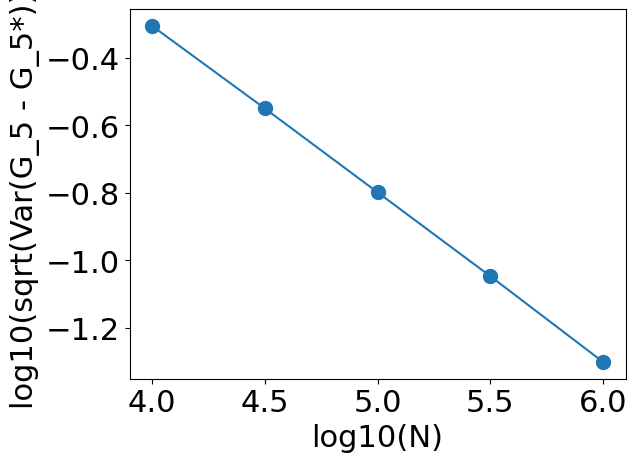}
        \caption{$\log_{10} \left( \sqrt{Var(\gamma_{N,5}-\gamma^*_5)}\right)$}
        \label{fig:std_gamma_5}
    \end{subfigure}

    \caption{Log plot of $\log_{10} \left( \sqrt{Var(\gamma_{N}-\gamma^*)}\right)$ components-wise.}
    \label{fig:gauss_d_2}
\end{figure}

\begin{figure}[H]
    \centering
    \begin{subfigure}{0.32\textwidth}
        \includegraphics[width=\textwidth]{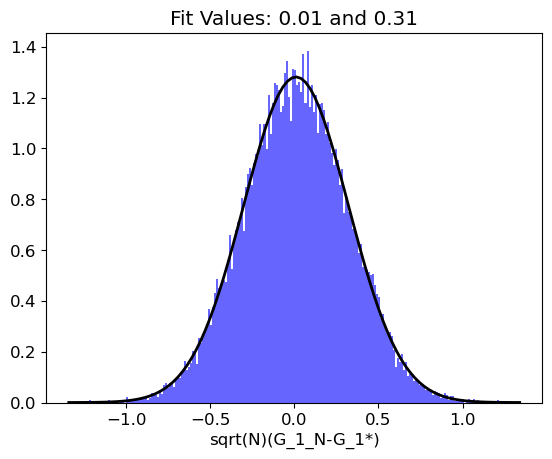}
        \caption{$\sqrt{N}(\gamma_{N,1}-\gamma^*_1)$}
        \label{fig:hist_gamma_1}
    \end{subfigure}
    \hfill
    \begin{subfigure}{0.32\textwidth}
        \includegraphics[width=\textwidth]{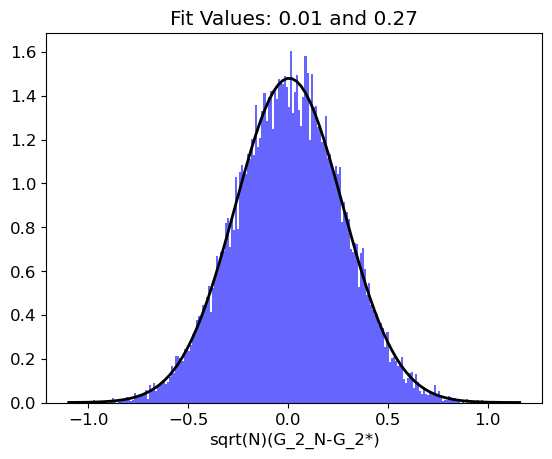}
        \caption{$\sqrt{N}(\gamma_{N,2}-\gamma^*_2)$}
        \label{fig:hist_gamma_2}
    \end{subfigure}
    \hfill
    \begin{subfigure}{0.32\textwidth}
        \includegraphics[width=\textwidth]{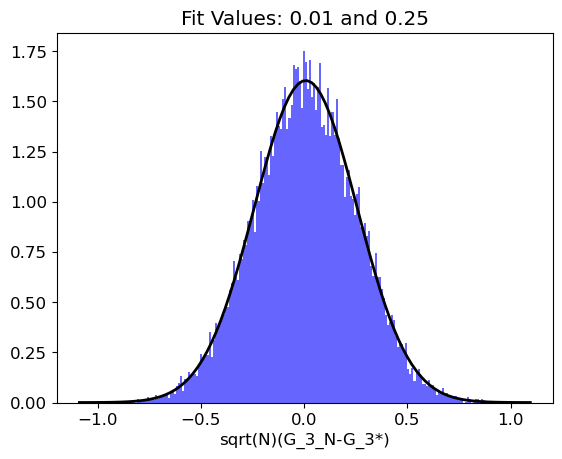}
        \caption{$\sqrt{N}(\gamma_{N,3}-\gamma^*_3)$}
        \label{fig:hist_gamma_3}
    \end{subfigure}
    \begin{subfigure}{0.32\textwidth}
        \includegraphics[width=\textwidth]{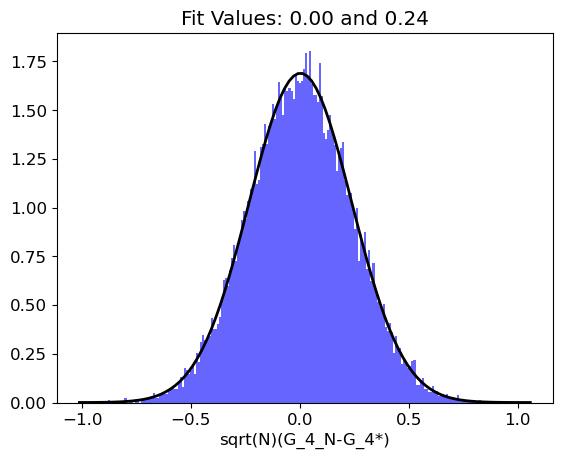}
        \caption{$\sqrt{N}(\gamma_{N,4}-\gamma^*_4)$}
        \label{fig:hist_gamma_4}
    \end{subfigure}
    \begin{subfigure}{0.32\textwidth}
        \includegraphics[width=\textwidth]{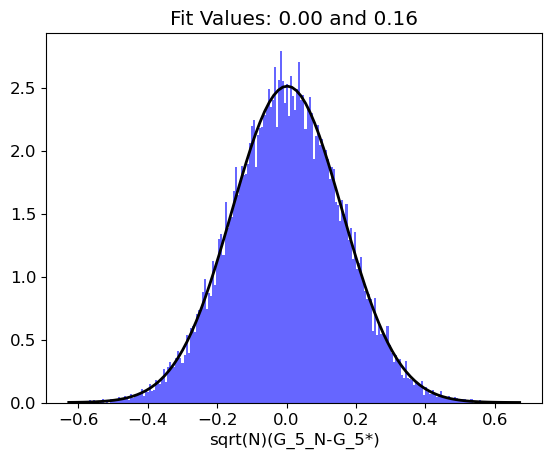}
        \caption{$\sqrt{N}(\gamma_{N,5}-\gamma^*_5)$}
        \label{fig:hist_gamma_5}
    \end{subfigure}

    \caption{Components-wise histograms of $\sqrt{N}(\gamma_{N}-\gamma^*)$ for $N = 10^5$ and plot of a fitted normal distribution.}
    \label{fig:gauss_d_2_bis}
\end{figure}

\begin{figure}[H]
    \centering
    \begin{subfigure}{0.49\textwidth}
        \includegraphics[width=\textwidth]{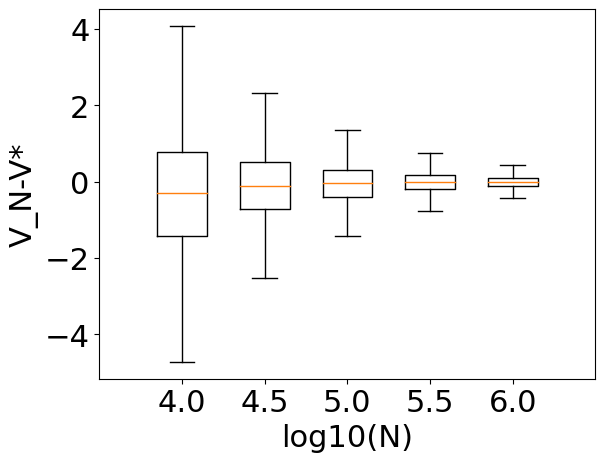}
        \caption{$v_N-v^*$}
        \label{fig:boxplot_vn}
    \end{subfigure}
    \hfill
    \begin{subfigure}{0.49\textwidth}
        \includegraphics[width=\textwidth]{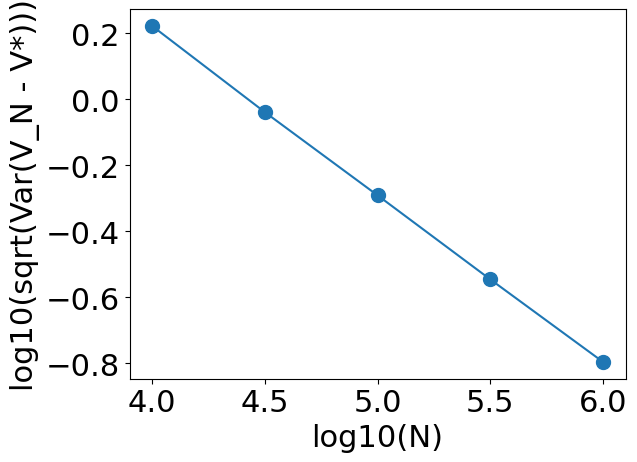}
        \caption{$\log_{10} \left(\sqrt{Var(v_N-v^*)}\right)$}
        \label{fig:std_vn}
    \end{subfigure}

    \caption{Boxplots of $v_N-v^*$ and $\log_{10}$ plot of $\log_{10} \left(\sqrt{Var(v_N-v^*)}\right)$.}
    \label{fig:gauss_d_0}
\end{figure}

\begin{figure}[H]
    \centering
    \includegraphics[width=0.5\linewidth]{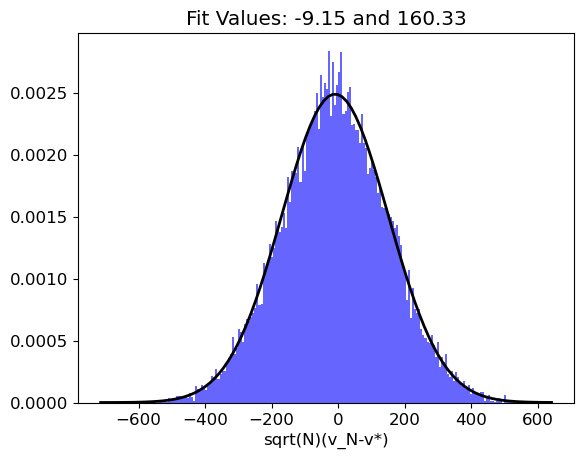}
    \caption{Histogram of $\sqrt{N}(v_N-v^*)$ for $N = 10^5$ and plot of a fitted normal distribution.}
    \label{fig:gauss_d_3}
\end{figure}

\subsubsection{With bounds}
Now, in the case where the optimal parameters reach at least one bound, we cannot compute the theoretical values for $\gamma^*$ and $v^*$.

Looking at the the unconstrained optimal values $(15.113, 12.576, 12.610,  8.702,  3.958)$, from subsection \ref{subsection:gauss_without_bound}, helps us choose the limits $\boldsymbol{\gamma}^{low} = (0,0,15,0,6)$ and $\boldsymbol{\gamma}^{up} = (10,30,30,30,30)$ such that they will be reached. We expect parameter 1 to hit its higher limit and parameters 3 and 5 to hit their lower limits. These claims are confirmed by a simulation with $N=10^6$ where $\boldsymbol{\gamma}_{10^6} = (10,9.747,15,7.867, 6)$ and $v_{10^6} = -48.970$, which is expected more than the unconstrained case $-52.960$.

Figure \ref{fig:gauss_d_4} and \ref{fig:gauss_d_5} show the convergence to a solution $\boldsymbol{\gamma}^*$ which has its components $1,3,5$ bounded (\ref{fig:gamma_n_bounded_1}, \ref{fig:gamma_n_bounded_3}, \ref{fig:gamma_n_bounded_5}) as expected. Figure \ref{fig:gauss_d_6} shows $N^{1/2}(v_N-v^*) \sim \mathcal{N}(0,\sigma^2)$ thanks to Theorem \ref{th:global_rate_cv}.
We approximate $v^*$ by $\mathbb{E}(v_N)$ and $\boldsymbol{\gamma}^*$ by $\mathbb{E}(\boldsymbol{\gamma}_N)$ on k runs. Here we do not have access to a theoretical $\sigma$ value, we estimate it by $\tilde{\sigma} = 226.99$ with the gaussian shape estimation.
Figure \ref{fig:hist_gamma_b_2}  shows an asymptotic Gaussian behavior, while Figure \ref{fig:hist_gamma_b_4} is less well approximated by a Gaussian.

\begin{figure}[H]
    \centering
    \begin{subfigure}{0.3\textwidth}
        \includegraphics[width=\textwidth]{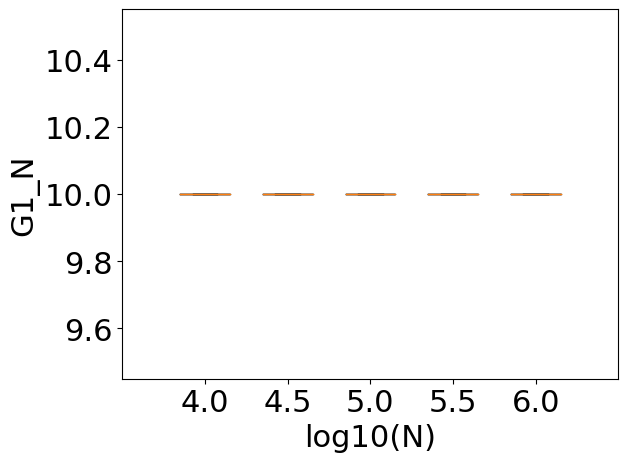}
        \caption{$\boldsymbol{\gamma}_{N,1}$}
        \label{fig:gamma_n_bounded_1}
    \end{subfigure}
    \hfill
    \begin{subfigure}{0.3\textwidth}
        \includegraphics[width=\textwidth]{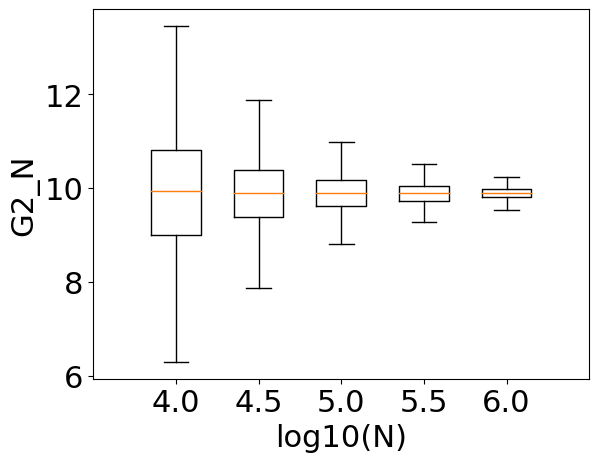}
        \caption{$\boldsymbol{\gamma}_{N,2}$}
        \label{fig:gamma_n_bounded_2}
    \end{subfigure}
    \hfill
    \begin{subfigure}{0.3\textwidth}
        \includegraphics[width=\textwidth]{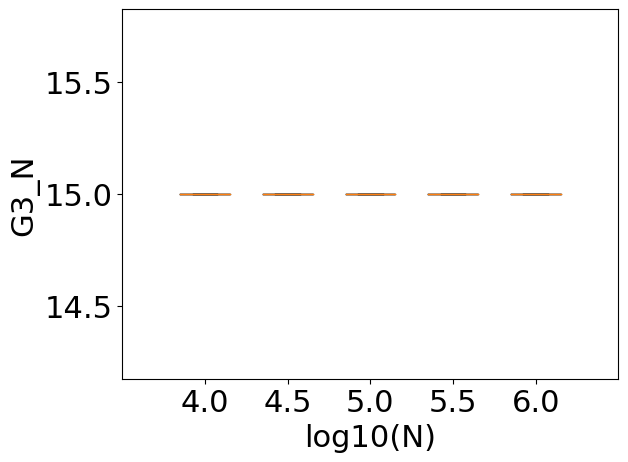}
        \caption{$\boldsymbol{\gamma}_{N,3}$}
        \label{fig:gamma_n_bounded_3}
    \end{subfigure}
    \begin{subfigure}{0.3\textwidth}
        \includegraphics[width=\textwidth]{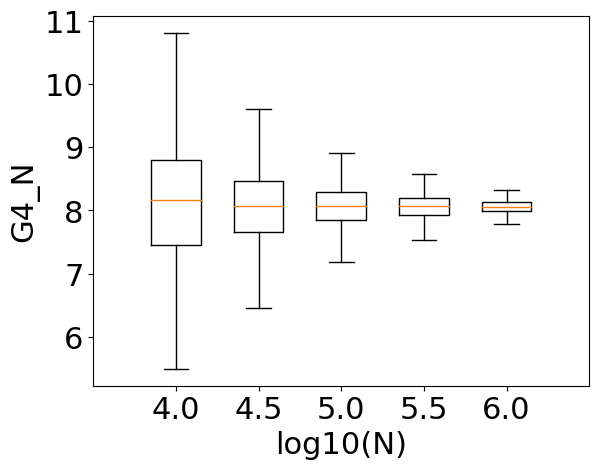}
        \caption{$\boldsymbol{\gamma}_{N,4}$}
        \label{fig:gamma_n_bounded_4}
    \end{subfigure}
    \begin{subfigure}{0.3\textwidth}
        \includegraphics[width=\textwidth]{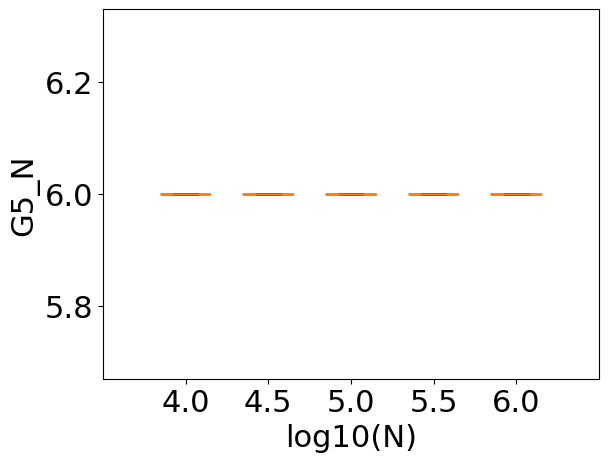}
        \caption{$\boldsymbol{\gamma}_{N,5}$}
        \label{fig:gamma_n_bounded_5}
    \end{subfigure}

    \caption{Component-wise Boxplots of $\boldsymbol{\gamma}_{N}$. (boxplot standard parameters: Q1-1.5IQR,Q1,median,Q3,Q3+1.5IQR)}
    \label{fig:gauss_d_4}
\end{figure}

\begin{figure}[H]
    \centering
    \begin{subfigure}{0.49\textwidth}
        \includegraphics[width=\textwidth]{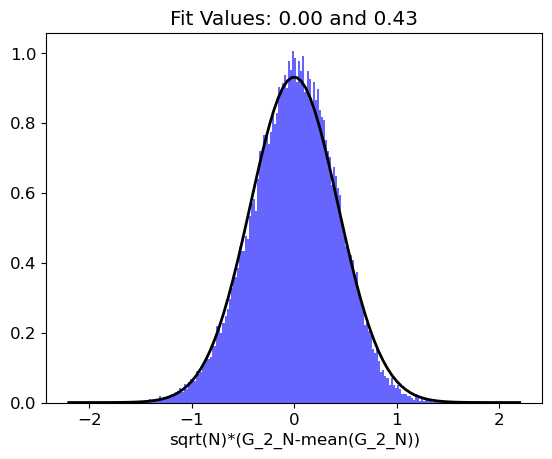}
        \caption{$\sqrt{N}(\gamma_{N,2}-\overline{\gamma_{N,2}})$}
        \label{fig:hist_gamma_b_2}
    \end{subfigure}
    \hfill
    \begin{subfigure}{0.49\textwidth}
        \includegraphics[width=\textwidth]{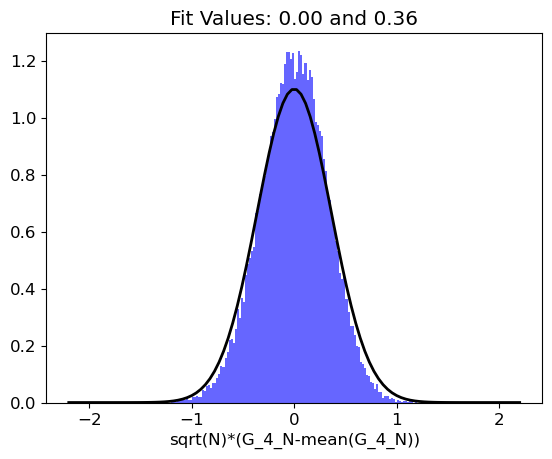}
        \caption{$\sqrt{N}(\gamma_{N,4}-\overline{\gamma_{N,4}})$}
        \label{fig:hist_gamma_b_4}
    \end{subfigure}

    \caption{Histograms of $\sqrt{N}(\gamma_{N}-\gamma^*)$ for components 2 and 4, for $N = 10^5$ and plot of a fitted normal distribution on each.}
    \label{fig:gauss_d_4_bis}
\end{figure}

\begin{figure}[H]
    \centering
    \includegraphics[width=0.5\linewidth]{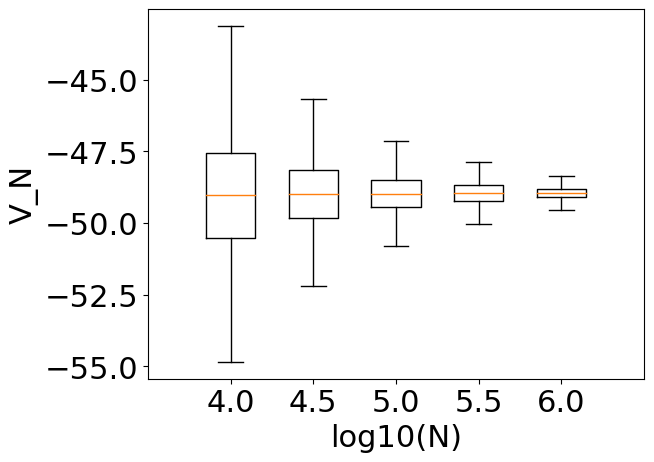}
    \caption{Boxplots of $v_N$. (boxplot standard parameters: Q1-1.5IQR,Q1,median,Q3,Q3+1.5IQR)}
    \label{fig:gauss_d_5}
\end{figure}

\begin{figure}[H]
    \centering
    \includegraphics[width=0.5\linewidth]{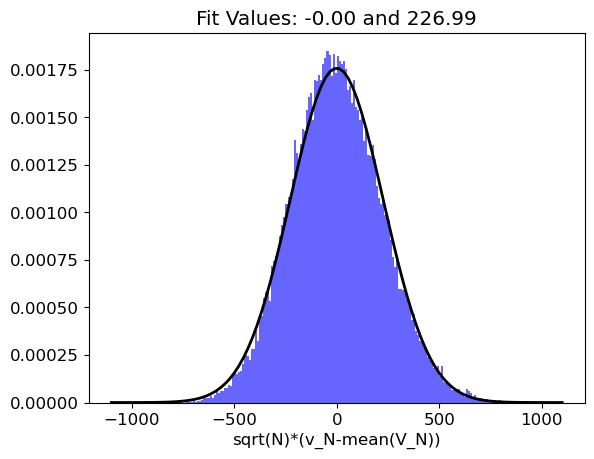}
    \caption{Histogram of $\sqrt{N}(v_N-\Bar{v_N})$ for $N = 10^5$ and plot of a fitted normal distribution.}
    \label{fig:gauss_d_6}
\end{figure}

\subsection{Multi laws setting}

In practice, we only have access to a sample of data for each Line of Business (LoB) returns. This sample may come from complex modeling and may not fit any known law in practice. All LoBs have different distributions with very different risk profiles.

We choose to have 5 LoBs and for each, we choose a distribution and shifted it to obtain a mean of 1 like in the Gaussian case. Let us consider $\Tilde{\textbf{X}}_i \ ,  i=1,\dots,5$, five random variables. $\Tilde{\textbf{X}}_1, \Tilde{\textbf{X}}_2$ follow a generalized Pareto distribution with $\alpha_{pareto}$ parameter $0.45$ and $0.25$ (as it is defined on the scipy python package), $\Tilde{\textbf{X}}_3, \Tilde{\textbf{X}}_4$ follow a log normal distribution with parameters $(\mu, \sigma)=(1.7, 1)$ and $(\mu, \sigma)=(1.3,1)$. Finally, $\Tilde{\textbf{X}}_5$ follows a normal distribution with $(\mu, \sigma) = (2,6)$.

Let us define $\textbf{X}_i = 1 - (\Tilde{\textbf{X}}_i - \Bar{\Tilde{\textbf{X}}}_i)$ for $i=1,\dots,5$. In such a setting, the standalone $VaR$ and $CVaR$ can be calculated analytically.

\begin{table}[h!]
    \centering
    \begin{tabular}{||c c c c c||}
        \hline
        i & $\mu(\textbf{X}_i)$ & $\sigma(\textbf{X}_i)$ & $VaR_{0.99}(-\textbf{X}_i)$ & $CVaR_{0.99}(-\textbf{X}_i)$ \\ [0.5ex]
        \hline\hline
        1 & 1                   & 5.75                   & 12.61                       & 27.05                        \\
        2 & 1                   & 1.89                   & 6.32                        & 10.53                        \\
        3 & 1                   & 17.49                  & 46.94                       & 80.55                        \\
        4 & 1                   & 4.89                   & 17.25                       & 53.73                        \\
        5 & 1                   & 6                      & 12.96                       & 15.99                        \\ [1ex]
        \hline
    \end{tabular}
    \caption{distribution description for the multi laws case}
    \label{table:mixed_case_statistic}
\end{table}

We choose three different dependency settings: a Gaussian copula, a clayton copula and the independent case. For the clayton case we use $\alpha_{clayton}=2$ and in the Gaussian case we use the following correlation matrix picked at random.

\begin{equation*}
    \Sigma =
    \begin{pmatrix}
        1       & -0.1285 & 0.3979  & -0.4731 & 0.3879  \\
        -0.1285 & 1       & -0.0574 & -0.2253 & -0.3532 \\
        0.3979  & -0.0574 & 1       & -0.5363 & 0.12    \\
        -0.4731 & -0.2253 & -0.5363 & 1       & 0.0999  \\
        0.3879  & -0.3532 & 0.12    & 0.0999  & 1       \\
    \end{pmatrix}.
\end{equation*}

We run the optimization with the following parameters $K=100$, without bounds and with bounds $\boldsymbol{\gamma}^{low} = (2,0,0,1,0)$ and $\boldsymbol{\gamma}^{up} = (10,5,10,10,10)$ and $N=10^6$.

\begin{table}[h!]
    \centering
    \begin{tabular}{||c c c c c c | c||}
        \hline
        $N=10^6$    & $\gamma_1$ & $\gamma_2$ & $\gamma_3$ & $\gamma_4$ & $\gamma_5$ & $v$     \\ [0.5ex]
        \hline\hline
        \hline
        without bound                                                                          \\ [0.5ex]
        \hline\hline
        independent & 1.407      & 7.596      & 0.101      & 1.196      & 4.571      & -14.922 \\
        Gaussian    & 0.575      & 10.386     & 0.0628     & 1.131      & 5.498      & -17.655 \\
        clayton     & 0.7521     & 6.535      & 0.0178     & 0.656      & 3.794      & -11.763 \\
        [1ex]
        \hline \hline
        with bounds                                                                            \\ [0.5ex]
        \hline\hline
        independent & 2          & 5          & 0.146      & 1.501      & 5.343      & -13.987 \\
        Gaussian    & 2          & 5          & 0.0333     & 1.577      & 4.839      & -13.404 \\
        clayton     & 2          & 5          & 2.968      & 1          & 2.969      & -10.959 \\
        [1ex]
        \hline
    \end{tabular}
    \caption{multi laws results}
    \label{table:multi_laws_results}
\end{table}

At optimum and for all cases, we expect that the LoBs with the lowest $CVaR$, like 2 and 5 are prioritized over those with the highest $CVaR$, like 3. The case without bounds leads us to expect that LoB 1 and 2 will reach their bounds, and that is what we are seeing. Also, the bounded cases, as expected, slow the diversification effects and then the return are worse than without.

In the independent case, $\gamma_i$ is approximately inversely proportional to $CVaR_{0.99}(-\textbf{X}_i)$, except for the bounded LoBs. The Gaussian dependency case without bound gives a better return thanks to a diversification with the chosen correlation matrix, like LoB 2 with LoBs 4 and 5 because of $\Sigma_{2,4} = -0.2253, \Sigma_{2,5} = -0.3532$. However, this diversification benefit cannot be exploited in the bounded case, mainly due to LoB 2 which is limited to 5. The Clayton case intensifies the queue positive dependency and weakens the $CVaR$ diversification, so the return is worse.

\section{Appendix}
There are two parts to this appendix, the first is the proof of the Equivalent Theorem used in \ref{subse:eq_setting} and the second part is the proof of the optimal value formula for the Gaussian case in \ref{subsection:Gaussian} when the bounds are not reached.

\subsection{Equivalent Theorem proof}
We use in section \ref{subse:eq_setting} a result from
Krokhmal P., Jonas Palmquist J., Uryasev S. (2002) \cite{Uryasev2002:9} on the equivalence of two optimization problems but this particular property does not appear explicitly in the article so we give a proof here.

The proof is based on the Karush-Kuhn-Tucker necessary and sufficient conditions.

\equivalenceprop*

\begin{proof}
    Let us write down the necessary and sufficient Karush-Kuhn-Tucker conditions for problems \eqref{eq:Reward_True} and \eqref{eq:Reward_True_F}.

    For $R(.,C_\alpha(.))$ to achive its minimum at $\gamma^* \in \Gamma$, it must exist a constant $\mu_1$ such that, for all $\gamma \in \Gamma$:
    \begin{align*}\tag{KKT-\eqref{eq:Reward_True}}\label{eq:KKT-True}
        -R(\gamma^*, C_\alpha(\gamma^*)) + \mu_1 C_\alpha(\gamma^*) & \leq -R(\gamma, C_\alpha(\gamma)) + \mu_1 C_\alpha(\gamma), \\
        \mu_1 (C_\alpha(\gamma^*)-K)                                & = 0. \ \
    \end{align*}
    If $\mu_1 \geq 0$, then the conditions are also sufficient.

    We have the same kind of conditions for \eqref{eq:Reward_True_F}. For $(\gamma,\zeta) \in \mathcal{U}$:
    \begin{align*}\tag{KKT-\eqref{eq:Reward_True_F}}\label{eq:KKT-eq}
        -R(\gamma^*, g(\gamma^*,\zeta^*)) + \mu_2 g(\gamma^*,\zeta^*) & \leq -R(\gamma, g(\gamma,\zeta)) + \mu_2 g(\gamma,\zeta), \\
        \mu_2 (g(\gamma^*,\zeta^*)-K)                                 & = 0, \ \  \mu_2 \geq 0.
    \end{align*}

    First, suppose that $\gamma^*$ is a solution to \eqref{eq:Reward_True} and $\zeta^* \in A_\alpha(\gamma^*)$. Let us show that $(\gamma^*,\zeta^*)$ is a solution to \eqref{eq:Reward_True_F}. Using necessary and sufficient conditions \eqref{eq:KKT-True}
    \begin{align*}
         & -R(\gamma^*, g(\gamma^*,\zeta^*)) + \mu_1 g(\gamma^*,\zeta^*) = -R(\gamma^*, C_\alpha(\gamma^*)) + \mu_2 C_\alpha(\gamma^*)                                          \\
         & \leq -R(\gamma, C_\alpha(\gamma)) + \mu_1 C_\alpha(\gamma) = -R(\gamma,\underset{\zeta}{\min} \ g(\gamma,\zeta)) + \mu_1 \underset{\zeta}{\min} \  g(\gamma,\zeta), \\
         & \leq -R(\gamma, g(\gamma,\zeta)) + \mu_1 g(\gamma,\zeta) \ \text{because} \ R(\boldsymbol{\gamma}, . ) \ \text{is not increasing},                                   \\
    \end{align*}
    and
    \begin{align*}
         & \mu_2 (g(\gamma^*,\zeta^*)-K) = \mu_1 (C_\alpha(\gamma^*)-K) 0, \ \  \mu_1 \geq 0, (\gamma,\zeta) \in \mathcal{U}.
    \end{align*}
    Thus, \eqref{eq:KKT-eq} conditions are satisfied and $(\gamma^*,\zeta^*)$ is a solution to \eqref{eq:Reward_True_F}.

    Now let us suppose that $(\gamma^*,\zeta^*)$ achieves the minimum of \eqref{eq:Reward_True_F} and $\mu_2 > 0$. For a fixed $\gamma^*$, $\zeta^*$ minimizes the function $\zeta \mapsto -R(\gamma^*, g(\gamma^*,\zeta)) + \mu_2 g(\gamma^*,\zeta)$, and, consequently, the function $\zeta \mapsto g(\gamma^*,\zeta)$ because $R(\boldsymbol{\gamma}, . )$ is not increasing. Then, it implies that $\zeta^* \in A_\alpha(\gamma^*)$. Thus
    \begin{align*}
         & -R(\gamma^*, C_\alpha(\gamma^*)) + \mu_2 C_\alpha(\gamma^*) = -R(\gamma^*, g(\gamma^*,\zeta^*)) + \mu_2 g(\gamma^*,\zeta^*)            \\
         & \leq -R(\gamma, g(\gamma,V_\alpha(\gamma))) + \mu_2 g(\gamma,V_\alpha(\gamma)) =  -R(\gamma, C_\alpha(\gamma)) + \mu_2 C_\alpha(\gamma)
    \end{align*}
    and
    \begin{align*}
         & \mu_2 (C_\alpha(\gamma^*)-K) = \mu_2 (g(\gamma^*,\zeta^*)-K) = 0, \ \ \mu_2 \geq 0, \ \gamma \in \Gamma.
    \end{align*}
    We proved that \eqref{eq:KKT-eq} are satisfied, i.e. $\gamma^*$ is a solution to \eqref{eq:Reward_True} which completes the proof.
\end{proof}

\subsection{Proof of Proposition ~\ref{prop:strict_convexity_CVaR}}
In this appendix, we introduce two lemmas and provide the proof of Proposition~\ref{prop:strict_convexity_CVaR}.

\begin{lemma}\label{th:lemma_comonotonicity}
    Let $\alpha \in ]0,1[$ and $Y$ (resp. $Z$) be a real valued random variable with a density on $\mathbb{R}$ and cumulative distribution function $F_Y$ (resp. $F_Z$). We define $A_Y := \vmathbb{1}_{\{ U_Y \geq \alpha \}}$ with $U_Y := F_Y(Y)$ (resp. $A_Z := \vmathbb{1}_{\{ U_Z \geq \alpha \}}$ with $U_Z := F_Z(Z)$). If $\mathbb{E}(Y A_Y) + \mathbb{E}(Z A_Z) -  \mathbb{E}((Y+Z)A_{Y+Z}) = 0$, then $\{ U_Y \geq \alpha \}  = \{ U_{Y+Z} \geq \alpha \} = \{ U_Z \geq \alpha \}$.
\end{lemma}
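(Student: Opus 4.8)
The plan is to recognize the quantity $\mathbb{E}(W A_W)$, for $W \in \{Y, Z, Y+Z\}$, as the optimal value of a linear problem over tail sets of fixed mass $1-\alpha$, and to exploit the \emph{uniqueness} of its maximizer. Since $Y$ has a density, its CDF $F_Y$ is continuous, so $U_Y = F_Y(Y)$ is uniform on $[0,1]$; hence $\mathbb{E}(A_Y) = \mathbb{P}(U_Y \ge \alpha) = 1-\alpha$ and $A_Y = \vmathbb{1}_{\{Y \ge q_Y\}}$ a.s., where $q_Y$ is the $\alpha$-quantile of $Y$. The same holds for $Z$, and for $Y+Z$ provided $F_{Y+Z}$ is atomless (see the obstacle below).

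First I would prove the variational inequality: for every $[0,1]$-valued random variable $A$ with $\mathbb{E}(A) = 1-\alpha$,
\[
\mathbb{E}(Y A) \le \mathbb{E}(Y A_Y),
\]
with equality if and only if $A = A_Y$ almost surely. To see this, write $\mathbb{E}\bigl(Y(A_Y - A)\bigr) = \mathbb{E}\bigl((Y - q_Y)(A_Y - A)\bigr) + q_Y\,\mathbb{E}(A_Y - A)$, where the last term vanishes because $\mathbb{E}(A) = \mathbb{E}(A_Y) = 1-\alpha$. On $\{Y > q_Y\}$ one has $A_Y = 1$, hence $A_Y - A \ge 0$ while $Y - q_Y > 0$; on $\{Y < q_Y\}$ one has $A_Y = 0$, hence $A_Y - A \le 0$ while $Y - q_Y < 0$. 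In both cases $(Y - q_Y)(A_Y - A) \ge 0$, and since $\mathbb{P}(Y = q_Y) = 0$ the integrand is nonnegative a.s., which gives the inequality; equality forces $(Y - q_Y)(A_Y - A) = 0$ a.s., i.e. $A = A_Y$ a.s. The identical statement holds with $Z$ in place of $Y$.

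Then I would decompose the hypothesis. Writing $\mathbb{E}\bigl((Y+Z)A_{Y+Z}\bigr) = \mathbb{E}(Y A_{Y+Z}) + \mathbb{E}(Z A_{Y+Z})$, the assumption becomes
\[
\bigl(\mathbb{E}(Y A_Y) - \mathbb{E}(Y A_{Y+Z})\bigr) + \bigl(\mathbb{E}(Z A_Z) - \mathbb{E}(Z A_{Y+Z})\bigr) = 0.
\]
Since $A_{Y+Z}$ is $[0,1]$-valued with $\mathbb{E}(A_{Y+Z}) = 1-\alpha$, it is an admissible competitor in the variational inequality, so each bracket is nonnegative. Their sum being zero forces both to vanish, and the equality case above yields $A_{Y+Z} = A_Y$ a.s. and $A_{Y+Z} = A_Z$ a.s. This is precisely $\{U_Y \ge \alpha\} = \{U_{Y+Z} \ge \alpha\} = \{U_Z \ge \alpha\}$ up to a null set, which is the claim.

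The main obstacle, and the only delicate point, is the identity $\mathbb{E}(A_{Y+Z}) = 1-\alpha$: it is what makes $A_{Y+Z}$ an admissible competitor and what makes the quantile cross-term drop. This needs $F_{Y+Z}$ to have no atoms, which a sum of two variables with densities can fail (e.g. $Z=-Y$, where $Y+Z$ is degenerate and the conclusion is false, but then the hypothesis also fails). In the application to Proposition~\ref{prop:strict_convexity_CVaR}, however, $Y$, $Z$, and $Y+Z$ are non-degenerate linear forms $-\boldsymbol{\gamma}^T\textbf{X}$, which all admit densities by (P3), so this issue does not arise; the lemma should therefore be read with $Y+Z$ atomless, or understood as invoked only in that regime.
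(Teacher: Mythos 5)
Your proof is correct and is essentially the paper's argument: your variational inequality is established by exactly the paper's computation (subtract the $\alpha$-quantile $q_Y$, observe that $(Y-q_Y)(A_Y - A) \ge 0$ pointwise, use $\mathbb{E}(A)=\mathbb{E}(A_Y)$ to drop the quantile term, then split the hypothesis into two nonnegative brackets that must both vanish), with the equality case handled identically via $\mathbb{P}(Y=q_Y)=0$. Your explicit flag that $\mathbb{E}(A_{Y+Z})=1-\alpha$ requires $Y+Z$ to be atomless is a point the paper passes over silently (its opening claim that $A_Y$ and $A_{Y+Z}$ have the same distribution needs it), and your observation that in the application to Proposition~\ref{prop:strict_convexity_CVaR} assumption (P3) supplies this is accurate.
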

\begin{proof}
    Note that $A_Y$ and $A_{Y+Z}$ have the same distribution thus $\mathbb{E}(A_Y - A_{Y+Z}) = 0$. Then, for all $m \in \mathbb{R}$ we have $\mathbb{E}(Y (A_Y-A_{Y+Z})) = \mathbb{E}((Y-m) (A_Y-A_{Y+Z}))$. Take $m=F^{-1}_Y(\alpha)$, with $F^{-1}$ the cdf general inverse; we have $(Y-m)(A_Y-A_{Y+Z})\geq 0$.
    As a consequence $\mathbb{E}(Y (A_Y-A_{Y+Z})) \geq 0$ and $\mathbb{E}(Z (A_Z-A_{Y+Z})) \geq 0$. The assumption can be rewritten as $\mathbb{E}(Y (A_Y-A_{Y+Z})) + \mathbb{E}(Z (A_Z-A_{Y+Z})) = 0$. We conclude that $ \mathbb{E}(Y (A_Y-A_{Y+Z})) = \mathbb{E}(Z (A_Z-A_{Y+Z})) = 0$. In addition $\mathbb{E}((Y-m) (A_Y-A_{Y+Z}))=0$ if and only if $A_Y = A_{Y+Z}$ a.s, similarly $A_Z = A_{Y+Z}$ a.s. because $(Y-m) \neq 0$ a.s. on $\mathbb{R}$. Finally, $\{ U_Y \geq \alpha \}  = \{ U_{Y+Z} \geq \alpha \} = \{ U_Z \geq \alpha \}$.
\end{proof}

\begin{lemma}\label{th:lemma_affine_form}
    let $\mathcal{R}$ be a d-dimensional connected subset of $\mathbb{R}^d$, $\phi_1, \phi_2$ two non-zero $\mathbb{R}^d$-affine forms. If $\phi_1 \phi_2 \geq 0$ on $\mathcal{R}$ and $\mathring{\mathcal{R}} \cap \ker(\phi_i) \neq \emptyset, \ i =1,2$ then $\phi_1 = a \phi_2$ with $a >0$.
\end{lemma}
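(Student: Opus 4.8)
The plan is to show that the two affine forms have the same zero set and are then positively proportional. First I would observe that the hypothesis $\mathring{\mathcal{R}} \cap \ker(\phi_i) \neq \emptyset$ already forces each $\phi_i$ to be non-constant: a non-zero \emph{constant} affine form has empty kernel, which is excluded. Hence each $\phi_i$ has non-zero linear part, and $H_i := \ker(\phi_i)$ is a genuine affine hyperplane of $\mathbb{R}^d$. The goal then reduces to proving $H_1 = H_2$.

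The key step is a local sign analysis. I would fix a point $p \in \mathring{\mathcal{R}} \cap H_1$; since $p$ is interior to $\mathcal{R}$, some ball $B(p,\varepsilon) \subset \mathcal{R}$, and because $\phi_1$ is affine with non-zero gradient, $\phi_1$ takes both strictly positive and strictly negative values on $B(p,\varepsilon)$. If one had $\phi_2(p) \neq 0$, then by continuity $\phi_2$ would keep a constant sign on a smaller ball around $p$, on which $\phi_1$ still changes sign; this would produce a point where $\phi_1 \phi_2 < 0$, contradicting $\phi_1 \phi_2 \geq 0$ on $\mathcal{R}$. Therefore $\phi_2(p) = 0$, i.e. $p \in H_2$, which establishes the inclusion $\mathring{\mathcal{R}} \cap H_1 \subseteq H_2$.

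Next I would upgrade this pointwise inclusion to equality of hyperplanes. The set $\mathring{\mathcal{R}} \cap H_1$ is non-empty and relatively open in the $(d-1)$-dimensional hyperplane $H_1$, so it contains $d$ affinely independent points spanning $H_1$ as an affine space; as all of them lie in the affine set $H_2$, this gives $H_1 \subseteq H_2$, and equality of dimensions yields $H_1 = H_2$. Two non-zero affine forms sharing a common zero hyperplane are scalar multiples of one another, so $\phi_1 = a\,\phi_2$ for some $a \neq 0$.

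Finally I would fix the sign of $a$. Substituting gives $\phi_1 \phi_2 = a\,\phi_2^2 \geq 0$ on $\mathcal{R}$; since $\mathcal{R}$ is $d$-dimensional and $\phi_2$ is a non-zero affine form, there is a point of $\mathcal{R}$ where $\phi_2^2 > 0$, and there $a\,\phi_2^2 \geq 0$ forces $a \geq 0$, whence $a > 0$ using $a \neq 0$. The only genuinely delicate point is the local sign argument yielding $\mathring{\mathcal{R}} \cap H_1 \subseteq H_2$; once the zero sets coincide, the proportionality and the positivity of the constant are routine.
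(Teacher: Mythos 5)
Your proof is correct and takes essentially the same route as the paper's: a local sign-change argument at an interior point of $\ker(\phi_1)$ forcing the two kernels to coincide, followed by proportionality of the affine forms and a positivity check for the constant. If anything, your write-up is slightly more careful than the paper's, which picks a point in $\mathring{\mathcal{R}} \cap (\ker(\phi_1) \setminus \ker(\phi_2))$ without noting that this requires the relatively open set $\mathring{\mathcal{R}} \cap \ker(\phi_1)$ not to be contained in the $(d-2)$-dimensional intersection of the two hyperplanes (a gap your direct inclusion $\mathring{\mathcal{R}} \cap H_1 \subseteq H_2$ plus affine-span argument avoids), and which asserts $a>0$ without the explicit evaluation at a point where $\phi_2^2>0$ that you provide.
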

\begin{proof}
    Suppose $\ker(\phi_1) \neq \ker(\phi_2)$ which are two (d-1)-dimensional affine hyperplanes. There exists $x \in \mathring{\mathcal{R}} \cap (\ker(\phi_1) \setminus \ker(\phi_2)) $, so $\phi_1(x)=0$ and $\phi_2(x) \neq 0$. Let us take the case where $\phi_2(x) < 0$, there exists a neighbourhood $v(x) \subset \mathcal{R}$, such that $ \forall y \in v(x) \ \phi_2(y) < 0$. $v(x)$ is an open set in $\mathbb{R}^d$ and $\ker(\phi_1)$ is (d-1)-dimensional, thus there exists $y'\in v(x)$, such that $\phi_1(y')>0$ and therefore $\phi_1(y')\phi_2(y')<0$ with $y'\in \mathcal{R}$. Symmetrically, it's the same with $\phi_2(x) > 0$. Finally $\ker(\phi_1) = \ker(\phi_2)$ and so $\phi_1(x) = a\phi_2(x)$ with $a >0$.
\end{proof}

\begin{proof}[Proof of Proposition~\ref{prop:strict_convexity_CVaR}]

    The map $\boldsymbol{\gamma} \mapsto C_\alpha(\boldsymbol{\gamma})$ is convex on $\mathbb{R}^d_*$ as a consequence of the sub-additivity property of the Conditional Value-at-Risk. We now aim to prove that this mapping is strictly convex. To this end, let $\boldsymbol{\gamma}_1, \boldsymbol{\gamma}_2 \in \mathbb{R}^d_*$ be such that, for all $\mu \in (0,1)$, the following equality holds:
    \begin{equation}
        C_\alpha(\mu \boldsymbol{\gamma}_1) + C_\alpha((1 - \mu)\boldsymbol{\gamma}_2) = C_\alpha(\mu \boldsymbol{\gamma}_1 + (1 - \mu)\boldsymbol{\gamma}_2). \label{eq:uniticity_tvar}
    \end{equation}

    We use the notations from Lemma~\ref{th:lemma_comonotonicity} and recall that
    \begin{equation*}
        C_\alpha(\boldsymbol{\gamma}) = \frac{1}{1 - \alpha} \, \mathbb{E}\left[ -\boldsymbol{\gamma}^T \mathbf{X} \cdot \vmathbb{1}_{\{ U_{-\boldsymbol{\gamma}^T \mathbf{X}} \geq \alpha \}} \right].
    \end{equation*}
    Let $Y := -\mu\, \boldsymbol{\gamma}_1^T \mathbf{X}$ and $Z := -(1 - \mu)\, \boldsymbol{\gamma}_2^T \mathbf{X}$. Then equality~\eqref{eq:uniticity_tvar} rewrites as
    \begin{equation*}
        \mathbb{E}(Y A_Y) + \mathbb{E}(Z A_Z) = \mathbb{E}((Y + Z) A_{Y+Z}),
    \end{equation*}
    so by Lemma~\ref{th:lemma_comonotonicity} and (P3), we have
    \begin{equation*}
        \{ U_Y \geq \alpha \} = \{ U_Z \geq \alpha \} = \{ U_{Y+Z} \geq \alpha \}.
    \end{equation*}
    These events can be rewritten as
    \begin{equation*}
        \{ U_Y \geq \alpha \} = \{ Y \geq \text{VaR}_\alpha(Y) \} = \{ -\boldsymbol{\gamma}_1^T \mathbf{X} \geq t_1 \}, \quad \text{with } t_1 := \text{VaR}_\alpha(-\boldsymbol{\gamma}_1^T \mathbf{X}),
    \end{equation*}

    and similarly for $Z$ with threshold $t_2$. Therefore,
    \begin{equation*}
        \mathbb{P} \left( (-\boldsymbol{\gamma}_1^T \mathbf{X} - t_1)( -\boldsymbol{\gamma}_2^T \mathbf{X} - t_2 ) \geq 0 \right) = 1.
    \end{equation*}

    Define $\phi_i(\mathbf{x}) := -\boldsymbol{\gamma}_i^T \mathbf{x} - t_i$ for $i = 1,2$. Then $\phi_1 \cdot \phi_2 \geq 0$ almost surely on $\mathcal{R}_\mathbf{X}$. Since $t_i \in \text{supp}(-\boldsymbol{\gamma}_i^T \mathbf{X})$, the level sets $\ker(\phi_i)$ intersect the interior of the support. By Lemma~\ref{th:lemma_affine_form}, this implies $\phi_1 = a\, \phi_2$ for some $a > 0$, hence $\boldsymbol{\gamma}_1 = a\, \boldsymbol{\gamma}_2$.

    Finally, the positive homogeneity of $C_\alpha$ and equality in~\eqref{eq:uniticity_tvar} yield $a = 1$, so $\boldsymbol{\gamma}_1 = \boldsymbol{\gamma}_2$.

\end{proof}

\subsection{The Gaussian case resolution}
We use in section \ref{subsection:Gaussian} some theoretical results, we give a proof here.

The Lagrangian of \eqref{eq:P} in the Gaussian case can be written as:
\begin{equation}\label{eq:gauss_lagrange_def}
    \begin{aligned}
        \mathcal{L}(\boldsymbol{\gamma}, (\lambda,\Bar{\boldsymbol{\mu}}, \underline{\boldsymbol{\mu}})) := & - \boldsymbol{\gamma}^T \textbf{1}_d + \lambda(- \boldsymbol{\gamma}^T \textbf{1}_d+ T_Z\sqrt{\boldsymbol{\gamma}^T \Sigma \boldsymbol{\gamma}}-K)   \\
                                                                                                            & +\Bar{\boldsymbol{\mu}}(\boldsymbol{\gamma}-\boldsymbol{\gamma}^{up}) - \underline{\boldsymbol{\mu}}(\boldsymbol{\gamma}-\boldsymbol{\gamma}^{low}).
    \end{aligned}
\end{equation}

$\frac{\partial \mathcal{L}}{\partial \lambda} = 0 $ implies
\begin{equation}\label{eq:gauss_lagrange_1}
    K = - \boldsymbol{\gamma}^T \textbf{1}_d+ T_Z\sqrt{\boldsymbol{\gamma}^T \Sigma \boldsymbol{\gamma}},
\end{equation}
and $\frac{\partial \mathcal{L}}{\partial \boldsymbol{\gamma}} = 0$ implies
\begin{equation}\label{eq:gauss_lagrange_2}
    \left(\frac{1-(\Bar{\boldsymbol{\mu}}-\underline{\boldsymbol{\mu}})}{\lambda} + 1\right) \textbf{1}_d =  T_Z \frac{\Sigma \boldsymbol{\gamma}}{\sqrt{\boldsymbol{\gamma}^T \Sigma \boldsymbol{\gamma}}}.
\end{equation}

If the optimal solution is not constrained by the bounds, then $\Bar{\boldsymbol{\mu}}=\underline{\boldsymbol{\mu}}=0$. Denote $y = \frac{1}{\lambda} + 1$ and $\sigma_S = \sqrt{\boldsymbol{\gamma}^T \Sigma \boldsymbol{\gamma}}$

We multiply \eqref{eq:gauss_lagrange_2} on the left by $\boldsymbol{\gamma}$,
\begin{equation*}
    \frac{y}{T_Z} \boldsymbol{\gamma}^T \textbf{1}_d = \frac{\boldsymbol{\gamma}^T \Sigma \boldsymbol{\gamma}}{\sqrt{\boldsymbol{\gamma}^T \Sigma \boldsymbol{\gamma}}},
\end{equation*}
so
\begin{equation}\label{eq:gauss_lagrange_3}
    \frac{y}{T_Z} \boldsymbol{\gamma}^T \textbf{1}_d = \sigma_S.
\end{equation}

We multiply \eqref{eq:gauss_lagrange_2} on the left by $\textbf{1}_d^T \Sigma^{-1}$,

\begin{equation*}
    y (\textbf{1}_d^T \Sigma^{-1} \textbf{1}_d) = \frac{T_Z}{\sigma_S}\textbf{1}_d^T \boldsymbol{\gamma}.
\end{equation*}
Now we replace  $\textbf{1}_d^T \boldsymbol{\gamma}$ by $\frac{T_Z}{y}\sigma_S$ with \eqref{eq:gauss_lagrange_3} and isolate $y$,
\begin{equation}\label{eq:gauss_lagrange_4}
    y = \frac{T_Z}{\sqrt{\textbf{1}_d^T \Sigma^{-1} \textbf{1}_d}}.
\end{equation}

In the same way, we replace $\textbf{1}_d^T \boldsymbol{\gamma}$ by $\frac{T_Z}{y}\sigma_S$ with \eqref{eq:gauss_lagrange_3} in \eqref{eq:gauss_lagrange_1},
\begin{equation*}
    K = -\frac{T_Z}{y} \sigma_S + T_Z \sigma_S,
\end{equation*}
and replace $\frac{T_Z}{y}$ by $\sqrt{\textbf{1}_d^T \Sigma^{-1} \textbf{1}_d}$ with \eqref{eq:gauss_lagrange_4} and isolate $\sigma_S$,
\begin{equation}\label{eq:gauss_lagrange_5}
    \sigma_S = K  \left( \frac{1}{T_Z-\sqrt{\textbf{1}_d^T \Sigma^{-1} \textbf{1}_d}} \right).
\end{equation}

Let us now replace $y$ and $\sigma_S$ in \eqref{eq:gauss_lagrange_2},
\begin{align*}
     & \frac{T_Z}{\sqrt{\textbf{1}_d^T \Sigma^{-1} \textbf{1}_d}} \textbf{1}_d = T_Z \frac{\Sigma \boldsymbol{\gamma} }{K \left( \frac{1}{T_Z-\sqrt{\textbf{1}_d^T \Sigma^{-1} \textbf{1}_d}} \right)}, \\
\end{align*}
finally
\begin{equation}
    \boldsymbol{\gamma} = K \left( \frac{1}{T_Z-\sqrt{\textbf{1}_d^T \Sigma^{-1} \textbf{1}_d}} \right) \frac{\Sigma^{-1} \textbf{1}_d }{\sqrt{\textbf{1}_d^T \Sigma^{-1} \textbf{1}_d}}.
\end{equation}

\printbibliography

\end{document}